\DeclareFontFamily{U}{mathx}{\hyphenchar\font45}
\DeclareFontShape{U}{mathx}{m}{n}{
      <5> <6> <7> <8> <9> <10>
      <10.95> <12> <14.4> <17.28> <20.74> <24.88>
      mathx10
      }{}
\DeclareSymbolFont{mathx}{U}{mathx}{m}{n}
\DeclareMathAccent{\widecheck}{0}{mathx}{"71}
\DeclareMathAccent{\wideparen}{0}{mathx}{"75}
\newcommand{\calJ}{{\mathcal J}}    
\newcommand{\calK}{{\mathcal K}}    
\newcommand{\calM}{{\mathcal M}}
\newcommand{\calU}{{\mathcal U}}
\newcommand{\calX}{{\mathcal X}}
\newcommand{\bbE}{{\mathbb E}}
\newcommand{\bbP}{{\mathbb P}}    
\newcommand{\bbR}{{\mathbb R}}
\newcommand{\bbU}{{\mathbb U}}
\newcommand{\bbZ}{{\mathbb Z}}
\newcommand{\bmx}{{\mathbf x}}
\newcommand{\bmy}{{\mathbf y}}
\newcommand{\tilA}{{A}}
\DeclareMathOperator*{\argmin}{arg\,min}
\newcommand{\diag}{\mathop{\rm diag}\nolimits}
\newcommand{\rmc}{{\rm c}}
\newcommand{\rmx}{{\rm x}}
\newcommand{\rmu}{{\rm u}}
\newcommand{\rmf}{{\rm f}}
\newcommand{\Ee}{{\rm e}}   
\newcommand{\bbra}[1]{\ensuremath{[\![#1]\!]} }  
\newcommand{\ft}{T}
\newcommand{\thr}{d}
\newcommand{\axv}[1]{{\color{black}{#1}}} 
\newcommand{\blue}[1]{{\color{black}{#1}}}
\newcommand{\memo}[1]{{\color{black}{#1}}}
\newcommand{\wtilde}[1]{{\widetilde{#1}}}
\newcommand{\wcheck}[1]{{\widecheck{#1}}}
\newcommand{\usigma}{\underline{\sigma}} 
\newcommand{\regret}{{\rm Regret}}
\newcommand{\dime}{n}
\newcommand{\surr}{F}
\newcommand{\bone}{\boldsymbol{1}}
\newcommand{\term}{D}
\newcommand{\coe}{C}
\newcommand{\vast}{\bBigg@{4}}
\newcommand{\Vast}{\bBigg@{5}}
\theoremstyle{thmstyleone}%
\newtheorem{theorem}{Theorem}
\newtheorem{proposition}[theorem]{Proposition}%
\newtheorem{lemma}[theorem]{Lemma}%
\theoremstyle{thmstyletwo}%
\newtheorem{remark}{Remark}%
\theoremstyle{thmstylethree}%
\newtheorem{definition}{Definition}%
\newtheorem{assumption}[theorem]{Assumption}
\begin{document}

\title[Online Control of Linear Systems under Unbounded Noise]{\centering Online Control of Linear Systems \\under Unbounded Noise}


\author*[1]{\fnm{Kaito} \sur{Ito}}\email{kaito@g.ecc.u-tokyo.ac.jp}

\author[2]{\fnm{Taira} \sur{Tsuchiya}}\email{tsuchiya@mist.i.u-tokyo.ac.jp}

\affil*[1]{\orgdiv{Department of Information Physics and Computing}, \orgname{The University of Tokyo}, \orgaddress{\street{7-3-1 Hongo, Bunkyo-ku}, \city{Tokyo}, \postcode{1138656}, \country{Japan}}}

\affil[2]{\orgdiv{Department of Mathematical Informatics}, \orgname{The University of Tokyo}, \orgaddress{\street{7-3-1 Hongo, Bunkyo-ku}, \city{Tokyo}, \postcode{1138656}, \country{Japan}}}





\abstract{This paper investigates the problem of controlling a linear system under possibly unbounded stochastic noise with unknown convex cost functions, known as an online control problem.
    In contrast to the existing work, which assumes the boundedness of noise, we show that an $ \wtilde{O}(\sqrt{T}) $ high-probability regret can be achieved under unbounded noise, where $ T $ denotes the time horizon. Notably, the noise is only required to have finite fourth moment.
    Moreover, when the costs are strongly convex and the noise is sub-Gaussian, we establish an $ O({\rm poly} (\log T)) $ regret bound.}

\keywords{Online control, online learning, optimal control, stochastic systems}



\maketitle

\allowdisplaybreaks[0]

\section{Introduction}\label{sec1}

\begin{table*}[t]
    \caption{Regret comparisons for online control of known linear systems with unknown costs. Adv. stoc., and h.p. are the abbreviations of adversarial disturbance, stochastic noise, and high probability, respectively.
    }
    \label{table:regret_comparison}
    \centering
    \small
    \begin{tabular}{m{3.2cm}lll}
      \toprule
      Reference & Disturbance  &  Cost functions & Regret bound \\
      \midrule
      \citet{Agarwal2019} & adv., bounded    &convex & $\displaystyle  O ( \sqrt{T} (\log T)^{3.5})$   \\
      \citet{Cohen2018} & Gaussian  & quadratic & \makecell{ $\displaystyle  O(  \sqrt{T} ) $ \\ pseudo-regret bound  } \\
      \citet{Agarwal2019log} & stoc., bounded  & strongly convex & \makecell{$\displaystyle  O(  (\log T)^7 )$ \\pseudo-regret bound }  \\
      \midrule
      \makecell{\textbf{This work}\\ (Theorem~\ref{thm:regret})} & \makecell{stoc., unbounded, \\ finite fourth moment}   &convex & \makecell{$\displaystyle O( \sqrt{T} \memo{(\log T)^{3}})$ \\that holds with h.p. } \\
      \makecell{\textbf{This work} \\ (Theorem~\ref{thm:log_regret})} & \makecell{stoc., unbounded,\\ sub-Gaussian}   & \makecell{strongly convex, \\ smooth} & \makecell{ $\displaystyle  O( \memo{(\log T)^{8}} )$\\ that holds with h.p.  }\\
      \bottomrule 
    \end{tabular}
  \end{table*}

This paper addresses the problem of controlling a linear dynamical system under possibly unbounded stochastic noise while minimizing a given control performance index.
In particular, we consider the situation where the cost function at each time step is revealed to a controller only after the control input for that time step has been determined.
This problem has been investigated in the framework of regret minimization under the name of online control~\citep{Abbasi2014,Cohen2018,Agarwal2019}, which can be seen as an extension of online learning~\citep{Hazan2016intro,Orabona2023modern} to dynamical systems.

In the literature of online control, two scenarios have been considered: disturbances to a dynamical system are adversarial (arbitrary) or stochastic. 
Online control with adversarial disturbances takes worst-case perturbations into account. However, the adversarial disturbance model can be overly pessimistic when the statistical properties of disturbances are available. Stochastic disturbances are often referred to as noise in control theory and have been used to model probabilistic uncertainty in dynamical environments.
Consequently, it is important to develop online control algorithms for both of adversarial disturbances and stochastic noise.

In real-world applications, various types of noise can arise.
In particular, unbounded noise is troublesome for safety-critical systems.
For example, when modeling the fluctuation of wind and solar power as stochastic noise to power systems, it is crucial to consider the impact of their extreme outliers on the power systems~\citep{Kashima2019,Ito2021,Schafer2018}.
Bounded noise cannot capture such uncertainty, and a heavy-tailed distribution would be more suitable for modeling such a scenario.
Despite the importance of unbounded noise, most existing studies on online control can be applied only to bounded noise.
To our knowledge, the sole exception is the work of \citet{Cohen2018}, which only focuses on Gaussian noise and pseudo-regret.
Therefore, the applicability of existing online control algorithms in real-world scenarios has not been sufficiently established.

\subsection{Contributions}
In order to overcome this limitation, this paper investigates online control under unbounded noise.  In particular, we show that a sublinear regret bound can be achieved if the noise has a finite fourth moment.
Moreover, by additionally assuming that the noise is sub-Gaussian and the cost functions are strongly convex, we can attain a logarithmic regret bound.
This study is the first one that characterizes the regret bound of online control in terms of the statistical property of the noise.
The main contributions of this paper are as follows (see Table~\ref{table:regret_comparison} for a comparison with existing regret bounds):
\begin{enumerate}
    \item For general convex costs,
    we prove that the online control algorithm in \citet{Agarwal2019} with a properly modified learning rate achieves a high-probability regret bound of \memo{$ O(\sqrt{T}(\log T)^{3}) $} for time horizon $T$.
    This regret guarantee holds even when the noise is unbounded, and the noise is only required to have a finite fourth moment.
    \memo{As a byproduct of our careful analysis for bounding the gradient of the surrogate (or idealized) cost function introduced in \citet{Agarwal2019}}, our bound slightly improves the existing upper bound of $ O(\sqrt{T} (\log T)^{3.5}) $ for bounded noise in \citet{Agarwal2019}.
    \item For sub-Gaussian noise and strongly convex and smooth cost functions, we establish a high-probability regret upper bound of \memo{$ O((\log T)^{8}) $}.
    This poly-logarithmic regret is made possible by leveraging the exponentially fast decay of the tail of a noise distribution.
    Note that our bound is not comparable with the regret bound of $O((\log T)^7)$ in \citet{Agarwal2019log}, in which they focus on the {\em pseudo}-regret. In contrast to the pseudo-regret, bounding the regret in online control is more challenging because the strong convexity of the surrogate costs to be optimized only holds in expectation as pointed out in \citet{Simchowitz2020improper}.
\end{enumerate}

\subsection{Related Work}
Regret analysis for learning control policies has been investigated in several settings.
When state and input matrices, which characterize a linear system, are unknown but quadratic cost functions are known to the controller, it is known that the $ O(\sqrt{T}) $ regret is tight in the presence of Gaussian noise~\citep{Simchowitz2020}. In the above setting with a known input matrix, an $ O((\log T)^2) $ regret bound can be achieved~\citep{Cassel2020}.
\citet{Foster2020} considered the setting where only bounded disturbances are unknown and established an $ O((\log T)^3) $ regret bound.
\citet{Goel2023} also studied this setting and derived a regret optimal control policy, which minimizes worst-case regret. Although disturbances are not assumed to be bounded, the regret optimal control problem imposes a penalty on the norm of the disturbances so that they do not become so large.
\citet{Hazan2020nonstochastic} investigated the online control under adversarial disturbances where both system parameters and cost functions are unknown and provided an $ O(T^{2/3}) $ regret bound.
Even when the state of a system can be partially observed, similar regret bounds \axv{were} obtained by~\citet{Simchowitz2020improper}.

The most closely related studies to the present paper are \citet{Agarwal2019,Agarwal2019log}.
The former proposed to use the disturbance-action policy, which allows us to reduce the online control problem to online convex optimization (OCO) with memory~\citep{Anava2015}.
The latter work revealed that if cost functions are strongly convex, then a logarithmic pseudo-regret can be achieved.
Note that their regret analysis heavily relies on the boundedness of noise.

There has been very limited work on online control where cost functions are unknown and noise is unbounded.
To our knowledge, \citet{Cohen2018} is the only study that addresses unbounded noise in the context of online control, focusing on a linear system with Gaussian noise and unknown quadratic costs.
However, they only studied a weaker performance index of the pseudo-regret, and moreover, since their algorithm does not use the curvature of the costs, only $ O(\sqrt{T}) $ pseudo-regret bound can be achieved. 
In contrast, our approach can be applied to the same setting to achieve the logarithmic regret.
Furthermore, our approach is applicable to general convex costs whereas their approach focuses on quadratic costs.

{\it Notation:}
Let $ \bbR $ denote the set of real numbers and $ \bbZ_{>0} $ (resp.~$ \bbZ_{\ge 0} $) denote the set of positive (resp.~nonnegative) integers. The set of integers $ \{k,k+1,\ldots,\ell\}  $ for $ k\le \ell $ is denoted by $ \bbra{k,\ell} $.
A sequence of matricies $ \{A_k,A_{k+1},\ldots,A_\ell \} $ for $ k\le \ell $ is denoted by $ A_{k:\ell} $.
For a symmetric matrix $ A $, we use $ A \succ 0 $ (resp.~$ A \succeq 0 $) to mean that $ A $ is positive definite (resp.~semidefinite).
The spectral norm and the Frobenius norm are denoted by $ \| \cdot \| $ and $ \|\cdot \|_{\rm F} $, respectively. Note that for a vector, the spectral norm coincides with the Euclidean norm.
For $ M = \{M^{[0]},\ldots,M^{[\ell-1]} \} \in (\bbR^{n\times m})^\ell $, we denote its Frobenius norm by $ \|M\|_{\rm F} = \| [ M^{[0]},\ldots, M^{[\ell-1]} ] \|_{\rm F}$.
For $ a \in \bbR $, the smallest integer greater than or equal to $ a $ is denoted by $ \lceil a \rceil $.
The natural logarithm of $ a > 0 $ is written as $ \log a $.
To simplify the notation, we sometimes write $f = O(g)$ as $f \lesssim g$.
The indicator of a condition $ S $ is denoted by $ \bone_{S} $, i.e., $ \bone_{S} = 1 $ if the condition $ S $ holds, and $ 0 $, otherwise.
The gradient and Hessian matrix of a function $ f $ with respect to the variable $ x $ are denoted by $ \nabla_x f $ and $ \nabla_x^2 f $, respectively.
\blue{For $ \xi > 0 $ and $ \calX \subseteq \bbR^n $, a twice differentiable function $ f : \calX \rightarrow \bbR $ is said to be $ \xi $-strongly convex if $ \nabla_x^2 f (x) \succeq \xi I $ for any $ x\in \calX $,
where $ I $ denotes the identity matrix of appropriate dimension.}
All random variables are defined on a common probability space with measure $ \bbP $, and the expectation with respect to $ \bbP $ is denoted by $ \bbE $.


\section{Problem Formulation}
In this paper, we consider the following linear dynamical system:
\begin{equation}\label{eq:system}
  x_{t+1} = A x_t + B u_t + w_t  , ~~ t \in \bbZ_{\ge 0},
\end{equation}
where $ \{x_t\}_{t\ge 0} $ is an $ \bbR^{n_\rmx} $-valued state process, $ \{u_t\}_{t\ge 0} $ is an $ \bbR^{n_\rmu} $-valued control process, and $ \{w_t\}_{t\ge 0} $ is an $ \bbR^{n_\rmx} $-valued stochastic noise process.
Moreover, we assume that the system matrices $ A\in \bbR^{n_\rmx \times n_\rmx} $ and $ B\in \bbR^{n_\rmx\times n_\rmu} $ are known and let $  \kappa_B := \max \left\{\|B\|, 1 \right\} $. For simplicity, we further assume $ x_0 = 0 $.\footnote{Noting that the influence of initial states on stable systems decays exponentially over time, we can extend our results to nonzero initial states.}
\blue{At each time $ t = 0,\ldots,T-1 $, the controller (or learner) chooses a control input $ u_t $ and incurs a cost $ c_t (x_t,u_t) $, where $ c_t : \bbR^{n_\rmx} \times \bbR^{n_\rmu} \rightarrow \bbR $ is the cost function.
In online control problems, we can use only the current and past states $ \{x_s\}_{s=0}^t $ and the past cost functions $ \{c_s\}_{s=0}^{t-1} $ when determining the control $ u_t $. Then, the controller aims to generate an adaptive sequence of control $ \{u_t\} $ so that the cumulative cost $ J_T \left(\{u_t\} \right) := \sum_{t=0}^{T-1} c_t (x_t,u_t) $ over a finite horizon $ T \in \bbZ_{>0} $ is minimized.}
The performance index for a given online control algorithm is defined by the regret, which is given by
\begin{align}
    \regret_T := J_T (\{u_t\}) - \min_{\{u_t^*\} \in \calU} J_T(\{u_t^*\}) ,\label{eq:regret}
\end{align}
where $ \{u_t\} $ is determined by the online control algorithm, and $ \calU $ is an admissible set of control processes (see \eqref{eq:admissible_u} for the definition).
For the second term of \eqref{eq:regret}, all the cost functions $ \{c_t\}_{t=0}^{T-1} $ and the realization of the noise sequence $ \{w_t\}_{t=0}^{\ft-1} $ are assumed to be available for the optimization of $ \{u_t^*\} $, and we assume that there exists an optimal solution $ \{u_t^*\} $ throughout the paper.
For notational simplicity, the dependency of the regret on $ \{c_t\}_{t=0}^{T-1} $ and $ \{w_t\}_{t=0}^{\ft-1} $ is omitted.

For a specific choice of $ \calU $, the set of {\em strongly stable} control gains are often used in the literature~(e.g.,~\citealt{Cohen2018,Agarwal2019,Hazan2020nonstochastic}).
We first introduce the notion of strong stability.
\begin{definition}\label{def:stability}
    For $ \kappa \ge 1 $ and $ \gamma \in (0,1) $, a matrix $ K \in \bbR^{n_\rmu \times n_\rmx} $ is said to be $ (\kappa,\gamma) $-strongly stable for system~\eqref{eq:system} if there exist $ n_\rmx \times n_\rmx $ complex matrices $ P $ and $Q$ such that $ A - BK = QPQ^{-1} $ and
    \begin{enumerate}
        \item[(i)] $ \| P \| \le 1-\gamma $,
        \item[(ii)] $ \|K \| \le \kappa $, $ \|Q \| \le \kappa $, $ \|Q^{-1} \| \le \kappa $. 
    \end{enumerate}
    Especially when $ P $ is diagonal, $ K $ is said to be $ (\kappa,\gamma) $-diagonally strongly stable.\footnote{\memo{If $ A - BK $ is diagonalizable and its spectral radius is less than $ 1 $, $ K $ is diagonally strongly stable for some $ (\kappa,\gamma) $. Further discussion of the generality of the diagonal strong stability can be found in \citet{Agarwal2019log}}.}
    \hfill $ \diamondsuit $
\end{definition}

The parameters $ \kappa $ and $\gamma $ quantify the stability of \eqref{eq:system} under the linear state-feedback control $ u_t = -Kx_t $.
Indeed, by definition, a $ (\kappa,\gamma) $-strongly stable matrix $ K $ satisfies
\begin{align}
    \| A_K^i \| \le \kappa^2 (1-\gamma)^i , ~~ \forall i \in \bbZ_{\ge 0} , ~~ A_K := A - BK  \label{eq:strong_stable_inequality}.
\end{align}
Hence, if $ w_t \equiv 0 $, the state $ x_t $ under $ u_t = -Kx_t $ satisfies
\begin{align}
    \|x_t\| = \|A_K^t x_0 \| \le \kappa^2 (1-\gamma)^t \|x_0 \| , ~~ \forall t \in \bbZ_{\ge 0} . \label{eq:exponential}
\end{align}

Using the strong stability, we define the admissible set of control $ \calU $ as
\begin{align}
    \calU(\kappa,\gamma) &:= \bigl\{  \{u_t\}_{t=0}^{T-1} : \exists K \in \calK(\kappa,\gamma) ~ {\rm s.t.}~  u_t = -Kx_t, ~ \forall t \in \bbra{0,T-1}  \bigr\} , \label{eq:admissible_u}\\
    \calK(\kappa,\gamma) &:= \{ K \in \bbR^{n_\rmu\times n_\rmx} : K~\text{is}~(\kappa,\gamma)\text{-strongly stable}\} . \nonumber
\end{align}
When imposing the diagonal strong stability on $ K $ in the definition of $ \calU $, the resulting regret is denoted by $ \regret_T^{\diag} $.

For the regret analysis, we make the following two assumptions.
\begin{assumption}\label{ass:cost}
    The cost function $ c_t : \bbR^{n_\rmx} \times \bbR^{n_\rmu} \rightarrow \bbR $ is convex and differentiable for any $ t\in \bbra{0,T-1} $. In addition, there exists $ G_c \ge 1 $ such that for any $ x\in \bbR^{n_\rmx} $, $ u\in \bbR^{n_\rmu} $, and $ t \in \bbra{0,T-1} $, it holds that 
    \begin{equation}\label{eq:cost_gradient}
        \| \nabla_x c_t (x,u) \| \le G_c \|x\|, \ \| \nabla_u c_t (x,u) \| \le G_c \|u \| .
    \end{equation}
    \hfill $ \diamondsuit $
\end{assumption}
\blue{The above assumption is a natural generalization of the one in \citet{Agarwal2019,Agarwal2019log,Hazan2020nonstochastic}, which assume that there exists $ D > 0 $ such that for all $ t $, it holds that $ \|x_t \| \le D $, $ \|u_t \| \le D $ and
$
    \| \nabla_x c_t (x,u) \| \le G_c D, \ \| \nabla_u c_t (x,u) \| \le G_c D 
$
for any $ x,u $ satisfying $ \|x \| \le D $, $ \|u \| \le D $.
However, under unbounded noise, there is no finite $ D $ such that $ \|x_t \| \le D $, $ \|u_t\| \le D $ almost surely, and thus the unboundedness naturally leads to the \axv{global} condition~\eqref{eq:cost_gradient}. For instance, this is satisfied for quadratic costs.}

For the statistical property of the noise $ \{w_t\} $, we make the following assumption.
\begin{assumption}\label{ass:noise_4}
    For the noise process $ \{w_t\} $,
    \begin{enumerate}
        \item[(i)] there exists $ \sigma_w \ge 0 $ such that $ \bbE[ \|w_t \| ] \le \sigma_w $ for any $ t \in \bbZ_{\ge 0}$;
        \item[(ii)] there exists $ \sigma_w \ge 0 $ such that $ \bbE[ \|w_t \|^4 ] \le \sigma_w^4 $ for any $ t \in \bbZ_{\ge 0} $.
        \hfill $ \diamondsuit $
    \end{enumerate}
\end{assumption}
It is noteworthy that we do not assume that $ \{w_t\} $ is an independent and identically distributed sequence.
Note also that if the condition (ii) holds, then the condition (i) also holds for the same scale parameter $ \sigma_w $ due to H\"{o}lder's inequality.

\section{Preliminaries}
In this section, we introduce the key ingredients of online control in this paper: OCO with memory and a disturbance-action policy.
\subsection{Online Convex Optimization with Memory}\label{sec:OCO}
In this subsection, we briefly introduce OCO with memory, which has been applied to online control.
In this problem, for each time step $ t \in \bbra{0, T-1} $, a learner chooses a decision $ \bmx_t \in \calX $ from a bounded convex set $ \calX \subset \bbR^p $.
The learner then observes a loss function $ \ell_t : \calX^{H+1} \to \bbR$ and suffers a loss of $\ell_t(\bmx_{t-H:t})$ (see \citealt{Anava2015} for detailed backgrounds).
We assume that for any $ t\in \bbra{0,T-1} $, the loss function $ \ell_t  $ is coordinate-wise Lipschitz continuous, that is, there exists a constant $ L_{\rmc,\ell} > 0 $ such that for any $ i \in \bbra{0,H} $, $ \bmx_1,\ldots,\bmx_H, \widecheck{\bmx}_i\in \calX  $, and $ t\in \bbra{0,T-1} $,
\begin{align}
    \left| \ell_t (\bmx_{0:i-1},\bmx_i, \bmx_{i+1:H}) - \ell_t (\bmx_{0:i-1},\widecheck{\bmx}_i, \bmx_{i+1:H})  \right| \le L_{\rmc,\ell} \|\bmx_i - \widecheck{\bmx}_i \| . \label{eq:lipshitz}
\end{align}
For $ \wtilde{\ell}_t (\bmx) := \ell_t (\bmx,\ldots,\bmx)$, let $ G_{\ell} := \sup_{t\in \bbra{0,T-1},\bmx\in \calX} \| \nabla \wtilde{\ell}_t (\bmx) \| $ and $  D_\ell := \sup_{\bmx_1,\bmx_2 \in \calX} \| \bmx_1 - \bmx_2 \| $.
Then, the learner wishes to minimize the cumulative cost $ \sum_{t=0}^{T-1} \ell_t (\bmx_{t-H:t}) $.

A standard approach to OCO with memory is online gradient descent (OGD):
\begin{align}
    \bmx_{t+1} &= \Pi_{\calX} (\bmx_t - \eta_t \nabla \wtilde{\ell}_t (\bmx_t) ) := \argmin_{\bmy\in \calX} \| \bmx_t - \eta_t \nabla \wtilde{\ell}_t (\bmx_t) - \bmy \| , \label{eq:ogd_static}
\end{align}
where $ \eta_t $ is the learning rate.
OGD with learning rate $ \eta_t \equiv D_\ell /(G_{\ell} \sqrt{T}) $ attains sublinear regret as follows~\citep[Theorem~4.6]{Agarwal2019}:
    \begin{align}
        &\sum_{t=0}^{T-1} \ell_t (\bmx_{t-H:t}) - \min_{\bmx\in \calX} \sum_{t=0}^{T-1} \wtilde{\ell}_t (\bmx) \le 3D_\ell \sqrt{G_{\ell} (G_{\ell} + L_{\rmc,\ell} H^2) T} . \label{eq:bound_memory_convention}
    \end{align}
It is worth noting that when the gradient $ \nabla \wtilde{\ell}_t $ or the Lipschitz constant $ L_{\rmc,\ell} $ is unbounded, the above bound is meaningless, and this becomes a major challenge when dealing with unbounded noise as we will see later.

\subsection{Disturbance-Action Policies}
In this work, we employ disturbance-action policies, which allow us to relate the online control problem to OCO with memory.
For a fixed state-feedback gain $ K \in \calK (\kappa, \gamma) $ and parameters $ M_t = \{M_t^{[0]},\ldots,M_t^{[H-1]} \}, \ M_t^{[i]}\in \bbR^{n_\rmu\times n_\rmx} $, $ H \in \bbZ_{>0} $, the disturbance-action policy $ \pi_t (K,M_t) $ at time $ t $ chooses the control input as
\begin{align}\label{eq:disturbance-action}
    u_t = - K x_t + \sum_{i=1}^H M_t^{[i-1]} w_{t-i} ,
\end{align}
where we set $ w_t := 0 $ for $ t < 0 $.
One of the advantages of the disturbance-action policies can be seen by the following proposition.
\begin{proposition}[{\citealt[Lemma~4.3]{Agarwal2019}}]\label{prop:transfer}
    For any $ t,h,i \in \bbZ_{\ge 0} $ such that $ h\le t $ and $ i\le H + h $, define the disturbance-state transfer matrix by
    \begin{align}
        \Psi_{t,i}^{K,h} (M_{t-h:t}) := \tilA_K^i  \bone_{i\le h} + \sum_{j=0}^h \tilA_K^j B M_{t-j}^{[i-j-1]} \bone_{i-j \in \bbra{1,H}} .  \label{eq:psi_def}
    \end{align}
    Then, $ \{x_t\} $ following system~\eqref{eq:system} under the policy $ \{\pi_t(K,M_t) \}_{t=0}^{T-1} $ satisfies for any $ h \le t-1 $,
\begin{align}
    x_{t} = \tilA_K^{h+1} x_{t-1-h} + \sum_{i=0}^{H+h} \Psi_{t-1,i}^{K,h} (M_{t-1-h:t-1}) w_{t-1-i} . \label{eq:x_solution_general}
\end{align}
\hfill $ \diamondsuit $
\end{proposition}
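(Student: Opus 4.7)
The result is a structural identity, and the plan is to verify it by unrolling the closed-loop recursion and then matching the coefficient of each noise sample with the claimed formula for the transfer matrix.

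First, I would substitute the disturbance-action policy~\eqref{eq:disturbance-action} into the dynamics~\eqref{eq:system} to obtain the closed-loop update
\begin{equation*}
  x_{s+1} = \tilA_K x_s + v_s, \qquad v_s := w_s + B \sum_{k=1}^H M_s^{[k-1]} w_{s-k},
\end{equation*}
where $\tilA_K = A - BK$ and the convention $w_s = 0$ for $s < 0$ is used implicitly. A standard one-step iteration of this affine recursion $h+1$ times starting at time $t-1-h$ immediately yields
\begin{equation*}
  x_t = \tilA_K^{h+1} x_{t-1-h} + \sum_{j=0}^h \tilA_K^j v_{t-1-j},
\end{equation*}
which already reproduces the leading term of the claim and leaves only the double sum to rewrite.

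The remaining work is to regroup $\sum_{j=0}^h \tilA_K^j v_{t-1-j}$ by the argument of the noise. Expanding $v_{t-1-j}$ splits the sum into a diagonal piece contributing $\tilA_K^j w_{t-1-j}$ and a convolutional piece contributing $\tilA_K^j B M_{t-1-j}^{[k-1]} w_{t-1-j-k}$ with $k \in \bbra{1,H}$. Introducing the index $i := j$ in the first piece and $i := j + k$ in the second, and then collecting the coefficient of each $w_{t-1-i}$, produces exactly $\Psi_{t-1,i}^{K,h}(M_{t-1-h:t-1})$ as defined in~\eqref{eq:psi_def}: the summand $\tilA_K^i \bone_{i \le h}$ arises from the diagonal piece (valid precisely when $i = j \in \bbra{0,h}$), and $\sum_{j=0}^h \tilA_K^j B M_{t-1-j}^{[i-j-1]} \bone_{i-j \in \bbra{1,H}}$ arises from the convolutional piece. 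The range $i \in \bbra{0, h+H}$ in the claim is the union of the two contributing ranges $\bbra{0,h}$ and $\bbra{1, h+H}$.

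The only real obstacle is careful index bookkeeping: ensuring that the two indicators $\bone_{i \le h}$ and $\bone_{i-j \in \bbra{1,H}}$ exactly match the valid ranges appearing after the change of variables, and that edge cases (small $t$, $h = 0$, or $i > t-1$) are absorbed into the convention $w_s = 0$ for $s < 0$. Once the index correspondence is aligned, the identity holds termwise and no further estimates are needed.
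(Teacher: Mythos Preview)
Your proposal is correct: substituting the policy into the dynamics gives the affine recursion $x_{s+1} = \tilA_K x_s + v_s$, unrolling $h+1$ steps produces the leading $\tilA_K^{h+1} x_{t-1-h}$ term plus $\sum_{j=0}^h \tilA_K^j v_{t-1-j}$, and the change of variable $i = j+k$ in the convolutional piece regroups everything into the claimed $\Psi$-coefficients termwise. The paper itself does not supply a proof of this proposition --- it is quoted verbatim from \citet[Lemma~4.3]{Agarwal2019} --- and your direct unrolling-and-reindexing argument is exactly the standard one used there, so there is nothing to contrast.
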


From the above proposition, one can see that $ x_t $ and $ u_t $ are linear in the parameters $ M_{0:t} $. This linearity plays an important role in analyzing the regret of online control algorithms based on the disturbance-action policy.
In what follows, $ x_t $ driven by the disturbance-action policies $ \{\pi_s (K,M_s) \}_{s=0}^{t-1} $ is denoted by $ x_t^K (M_{0:t-1}) $. When $ M_t \equiv M $ and $ M_t \equiv 0 $, we simplify the notation as $ x_t^K (M) $ and $ x_t^K $, respectively. We use similar notation for $ u_t $ as $ u_t^K (M_{0:t}) $, $u_t^K (M)$, and $u_t^K $.

\subsection{Surrogate Cost}
The function $ \ell_t $ considered in Subsection~\ref{sec:OCO} corresponds to $ c_t (x_t^K(M_{0:t-1}),u_t^K (M_{0:t})) $ in the context of online control, which can be seen as a function of $ M_{0:t} $.
However, since the number of parameters $ M_{0:t} $ depends on $ t $, directly applying OGD to $ c_t $ requires the memory length $ H = \ft $, which renders the bound \eqref{eq:bound_memory_convention} no longer meaningful.
In view of this, we introduce a function $ \surr_t $ which is expected to approximate $ c_t (x_t^K(M_{0:t-1}),u_t^K (M_{0:t})) $ well as in \citet{Agarwal2019}.
\blue{The key idea is that even if we disregard the state values more than $ H + 1 $ steps before the current time by setting them to zero, the resulting error vanishes exponentially fast as $ H $ becomes large. This is essentially due to the exponentially fast mixing of linear systems under the strongly stable policies; see also \eqref{eq:exponential}.}

\begin{definition}\label{def:surrogate}
    Let the surrogate state $ y_{t}^K (M_{t-1-H:t-1}) $ and input $ v_t^K (M_{t-1-H:t}) $ be
    the state $ x_t $ starting from $ x_{t-1-H} = 0 $ driven by $ \{\pi_s (K,M_{s}) \}_{s=t-1-H}^{t-1} $ and the corresponding input \axv{$ u_t $}, respectively.
    Then, the surrogate cost $ \surr_t $ is defined as follows:
    \begin{align}
        &F_t(M_{t-1-H:t}):= c_t \left(y_t^K(M_{t-1-H:t-1}), v_t^K(M_{t-1-H:t}) \right) . \label{eq:surrogate_cost}
    \end{align}
    In addition, define $ f_t (M) := \surr_t (M,\ldots,M) $.
    \hfill $ \diamondsuit $
\end{definition}

By \citet[Definition~4.4]{Agarwal2019}, the surrogate state and input admit the following expressions:
\begin{align}
    y_{t}^K (M_{t-1-H:t-1}) &:= \sum_{i=0}^{2H} \Psi_{t-1,i}^{K,H} (M_{t-1-H:t-1}) w_{t-1-i}, \label{eq:surrogate_y}\\
    v_t^K (M_{t-1-H:t}) &:= -Ky_t^K (M_{t-1-H:t-1})  + \sum_{i=1}^H M_t^{[i-1]} w_{t-i} .\label{eq:surrogate_v}
\end{align}
Note that the surrogate cost $ \surr_t $ is convex with respect to $ M_{t-1-H:t-1} $ since $ y_t^K $ and $ v_t^K $ are linear in $ M_{t-1-H:t-1} $, and $ c_t $ is convex.

\section{Regret Analysis for General Convex Costs}\label{sec:regret}

\begin{algorithm}[tb]
    \caption{Online Control with Disturbance-Action Policy and OGD}
    \label{alg:online_control}
    \begin{algorithmic}[1]
        \State {\bfseries Input:} Parameters $ \kappa \ge 1, \gamma \in (0,1) $, learning rate $ \{\eta_t\} $, $ (\kappa,\gamma) $-strongly stable matrix $ K $, initial value $ M_0 \in \calM $
        \State Define $ H := \lceil 2\gamma^{-1} \log T   \rceil $
        \For{$t=0$ {\bfseries to} $T-1$}
        \State Inject $ u_t = -Kx_t + \sum_{i=1}^H M_{t}^{[i-1]} w_{t-i} $ to \axv{system}~\eqref{eq:system}
        \State Receive the cost function $ c_t $ and pay the cost $ c_t (x_t,u_t) $
        \State Receive the new state $ x_{t+1} $
        \State Calculate the value of noise $ w_t = x_{t+1} - Ax_t - Bu_t $
        \State Define the surrogate cost $ \surr_t $ as \eqref{eq:surrogate_cost} and $f_t (M) := \surr_t (M,\ldots,M) $
        \State Update $ M_{t+1} = \Pi_{\calM} (M_t - \eta_t \nabla f_t (M_t))  $
        \EndFor
    \end{algorithmic}
 \end{algorithm}

In this section, we present the main result for the regret analysis of our problem with general convex costs.
Define the set of admissible parameters for the disturbance-action policy~\eqref{eq:disturbance-action} as 
\begin{align}
\calM &:= \bigl\{ M = \{M^{[0]},\ldots,M^{[H-1]}\} :  \|M^{[i]} \| \le 2\kappa_B\kappa^3 (1-\gamma)^i, \ \forall i\in \bbra{0,H-1}      \bigr\} . \label{eq:admissible_M}
\end{align}

\subsection{Main Result}

The following theorem gives a regret bound for general convex costs. 
\begin{theorem}\label{thm:regret}
    Suppose that Assumptions~\ref{ass:cost} and \ref{ass:noise_4}-(ii) hold.
    Then, for any $ T \ge 3 $ and $ \delta \in (0,1] $, for Algorithm~\ref{alg:online_control} with $ \eta_t \equiv (\sqrt{\ft} (\log \ft)^3)^{-1} $, with probability at least $ 1-\delta $, it holds that
    \begin{align}
        \regret_\ft &\le \left( \frac{2\sqrt{3}G_c \coe_\delta^3}{\sqrt{\gamma}} + \frac{D^2}{2} \right) \sqrt{\ft}(\log \ft)^{3}   + \frac{\coe_\delta}{2} \sqrt{\ft} \log \ft + 6 G_c \coe_\delta^2 (\log\ft)^2, \label{eq:regret_thm}
    \end{align}
    where $ D := \frac{4\kappa_B \kappa^3 \sqrt{\dime}}{\gamma} $, $ \dime := \max\{n_\rmx, n_\rmu\} $, $ \coe_\delta := \frac{65724{\sigma}_w^{[1,4]}\dime^2 G_c^2  \kappa_B^6 \kappa^{18} }{\delta \gamma^8 (1-\gamma)^4}  $, and $ {\sigma}_w^{[1,4]} := \max\{\sigma_w, \sigma_w^4\} $.
    \hfill $ \diamondsuit $
\end{theorem}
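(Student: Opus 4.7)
The plan is to follow the OCO-with-memory reduction of \citet{Agarwal2019}, but to replace every worst-case-in-noise estimate by a high-probability bound derived from Assumption~\ref{ass:noise_4}(ii). Fix a comparator $\{u_t^*\}\in\calU(\kappa,\gamma)$ associated with some strongly stable $K^*$, and let $M^*\in\calM$ be a disturbance-action parameter chosen to mimic the feedback $-K^*x$ in the standard way (by truncating the Neumann-series expansion of the closed loop). The starting point is the three-term decomposition
\begin{align*}
\regret_T = \underbrace{\sum_{t=0}^{T-1}\bigl[c_t(x_t,u_t)-f_t(M_t)\bigr]}_{(\mathrm{A})} + \underbrace{\sum_{t=0}^{T-1}\bigl[f_t(M_t)-f_t(M^*)\bigr]}_{(\mathrm{B})} + \underbrace{\sum_{t=0}^{T-1}\bigl[f_t(M^*)-c_t(x_t^*,u_t^*)\bigr]}_{(\mathrm{C})},
\end{align*}
in which $(\mathrm{A})$ and $(\mathrm{C})$ are approximation errors between the true costs and the truncated surrogate, while $(\mathrm{B})$ is the OCO-with-memory regret of Algorithm~\ref{alg:online_control} against the best fixed $M^*$.

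For $(\mathrm{A})$ and $(\mathrm{C})$, I would exploit the exponential mixing~\eqref{eq:strong_stable_inequality}: with $H=\lceil 2\gamma^{-1}\log T\rceil$, one has $\|\tilA_K^{H+1}\|\le \kappa^2(1-\gamma)^H\le\kappa^2T^{-2}$. Proposition~\ref{prop:transfer} together with the surrogate formulas~\eqref{eq:surrogate_y}--\eqref{eq:surrogate_v} then writes the difference between a true state/input and its truncated counterpart as $O(T^{-2})$ times a linear combination of $O(H)$ past noises. Combined with the linear-growth bounds of Assumption~\ref{ass:cost} and the mean-value identity $|c_t(x)-c_t(y)|\le G_c\|x-y\|(\|x\|+\|y\|)$, each summand of $(\mathrm{A})$ and $(\mathrm{C})$ is bounded by $G_c T^{-2}$ times a quadratic form in $\{\|w_s\|\}$. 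Summing over $t$ and applying Markov's inequality to $\sum_t\|w_t\|^2$, whose expectation is $O(T\sigma_w^2)$ by H\"older and Assumption~\ref{ass:noise_4}(ii), controls both terms by a $\delta$-dependent constant with probability at least $1-\delta/3$ each.

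The main work is $(\mathrm{B})$. A per-step analysis of projected OGD applied to $f_t$, combined with the coordinate-Lipschitz expansion~\eqref{eq:lipshitz} of the memory-ful surrogate, yields
\begin{align*}
(\mathrm{B})\le \frac{D^2}{2\eta}+\frac{\eta}{2}\sum_{t=0}^{T-1}\|\nabla f_t(M_t)\|_{\mathrm{F}}^2+\sum_{t=0}^{T-1}\what{L}_t\sum_{i=1}^{H}\|M_t-M_{t-i}\|_{\mathrm{F}},
\end{align*}
where $\what{L}_t$ is a data-dependent coordinate Lipschitz constant for $\surr_t$. Because $y_t^K$ and $v_t^K$ are linear in $M$ and in the noises $\{w_{t-1-i}\}_{i=0}^{2H}$, a careful calculation using~\eqref{eq:psi_def} and Assumption~\ref{ass:cost} shows that both $\|\nabla f_t(M)\|_{\mathrm{F}}$ and $\what{L}_t$ are bounded, uniformly in $M\in\calM$, by $G_c\cdot\mathrm{poly}(\kappa,\kappa_B,\gamma^{-1})$ times a quadratic form in those noises. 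Taking second moments and invoking Assumption~\ref{ass:noise_4}(ii) to dominate fourth moments of noise products yields $\bbE\bigl[\sum_t\|\nabla f_t(M_t)\|_{\mathrm{F}}^2+\sum_t\what{L}_t^{\,2}\bigr]\lesssim T\sigma_w^4\cdot\mathrm{poly}(\kappa,\kappa_B,\gamma^{-1},H)$, and a single Markov step converts this into a high-probability bound of order $\coe_\delta T\log T$. The movement term is then handled by the telescoping $\|M_t-M_{t-i}\|_{\mathrm{F}}\le\eta\sum_{s=t-i}^{t-1}\|\nabla f_s(M_s)\|_{\mathrm{F}}$ followed by Cauchy--Schwarz, contributing a term of the same order as the gradient-squared one.

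Plugging in $\eta_t\equiv(\sqrt{T}(\log T)^3)^{-1}$ balances the three contributions: the $D^2/(2\eta)$ term produces the leading deterministic $\frac{D^2}{2}\sqrt{T}(\log T)^3$, while the gradient-squared and movement terms become the remaining $O(\coe_\delta\sqrt{T}\log T)$ and $O(G_c\coe_\delta^2(\log T)^2)$ contributions in~\eqref{eq:regret_thm}; a union bound over the three high-probability events completes the estimate. I expect the main obstacle to be the quadratic-in-noise bound on $\|\nabla f_t\|_{\mathrm{F}}$ and $\what{L}_t$: a crude bound via $\max_t\|w_t\|$, as in~\citet{Agarwal2019}, would require at least subexponential tails and cost an extra $\log T$ factor, whereas keeping the quadratic form intact and applying Markov only to the cumulative sum is exactly what delivers both the $(\log T)^3$ improvement and the applicability under the weak fourth-moment hypothesis.
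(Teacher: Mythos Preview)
Your plan is correct and follows the paper's route in every essential respect: the same three-term regret decomposition, the same use of exponential mixing $\|A_K^{H+1}\|\lesssim T^{-2}$ plus Assumption~\ref{ass:cost} for the approximation terms, and the key insight of applying Markov's inequality to \emph{cumulative} quadratic/quartic forms in the noise rather than to $\max_t\|w_t\|$. Two bookkeeping corrections, however, are worth noting.

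First, your term $(\mathrm{B})$ as written is the \emph{unary} OCO regret $\sum_t[f_t(M_t)-f_t(M^*)]$, so the memory correction $\sum_t\what L_t\sum_i\|M_t-M_{t-i}\|_{\rm F}$ does not arise there; it belongs either in $(\mathrm{A})$ (since $c_t-f_t(M_t)=(c_t-F_t)+(F_t-f_t(M_t))$) or, as the paper does, one should split at $F_t(M_{t-1-H:t})$ instead of $f_t(M_t)$ so that the memory piece sits naturally inside the OCO-with-memory term (Proposition~\ref{prop:OCO_regret}). Second, your attribution of the three contributions in~\eqref{eq:regret_thm} is shifted: in the paper the leading $\frac{2\sqrt3 G_c C_\delta^3}{\sqrt\gamma}\sqrt T(\log T)^3$ comes from the \emph{movement} term (Lipschitz constant $\asymp C_\delta^2(\log T)^{3/2}$ from Lemma~\ref{lem:lipshitz} times the gradient sum $\asymp C_\delta T(\log T)^{9/2}$ from Lemma~\ref{lem:gradient}, times $\eta$), the $\frac{C_\delta}{2}\sqrt T\log T$ from the gradient-squared sum, and the $6G_cC_\delta^2(\log T)^2$ from the approximation errors $(\mathrm{A})$ and $(\mathrm{C})$ (Lemmas~\ref{lem:sufficiency} and~\ref{thm:approximation}, which apply Markov \emph{separately} to $\sum_t(\|x_t\|+\|y_t\|)$ and $\sum_t\|x_t-y_t\|$ and then multiply). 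Your proposed direct Markov bound on the product for $(\mathrm{A}),(\mathrm{C})$ is in fact tighter---it gives $o(1)$ rather than $(\log T)^2$---so the stated inequality still holds; the $(\log T)^2$ term in the theorem is an artifact of the paper's looser split, not of any movement contribution.
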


\axv{The $ \wtilde{O} (W^2 \sqrt{\ft}) $ regret bound of \citet{Agarwal2019} depends on a bound on the norm of noise $ W $, and when the noise is unbounded ($ W \rightarrow \infty $), the regret bound diverges to infinity. 
By contrast, our bound~\eqref{eq:regret_thm} depends on the scale parameter $ \sigma_{w} $ of the noise rather than $ W $, and consequently guarantees the $ \wtilde{O} (\sqrt{\ft}) $ regret with high probability even when $ W \rightarrow \infty $.}
We emphasize that the regret bound of $ \wtilde{O} (W^2 \sqrt{\ft}) $ does not directly imply that for unbounded noise, $\regret_\ft = \wtilde{O}(\sqrt{T}) $ holds with high probability.
To see this, let us assume that $ w_t $ has a finite $ p $th moment ($ p \ge 1 $) satisfying $ \bbE[\|w_t\|^p] \le \sigma_w^p$ for any $ t \in \bbra{0,\ft} $. Then, we have (see e.g., \citealt[Section~5.1]{Van2014})
\[
    \bbP\left(\max_{t\in \bbra{0,\ft-1}} \|w_t \| \ge W \right) \le \frac{\sigma_w}{W} \ft^{1/p} , \ \forall W > 0 .
\]
By combining this with the bound $ \wtilde{O} (W^2 \sqrt{\ft}) $, we can see that with probability at least $ 1- \delta $, it holds that $ \regret_\ft = \wtilde{O}( \ft^{\frac{1}{2} + \frac{2}{p}}  / \delta^2 ) $. Under Assumption~\ref{ass:noise_4}-(ii), which corresponds to $ p = 4 $, this reduces to a linear bound $\wtilde{O} (\ft)$.
We also note that even if the $ p $th moment of $ w_t $ is finite for any $ p \ge 1 $, the above bound only guarantees $ \regret_\ft = \wtilde{O} ( \ft^{\frac{1}{2} + q} /\delta^2 ) $ for any $ q \in (0,2] $, which does not necessarily mean $ \wtilde{O} (\sqrt{\ft} / \delta^2) $. In contrast to the approach based on the regret bound for bounded noise, Theorem~\ref{thm:regret} reveals that the finiteness of the fourth moment of $ w_t $ is sufficient for the $ \wtilde{O} (\sqrt{\ft} / \delta^3) $ regret bound to hold.

The proof of Theorem~\ref{thm:regret} given below relies on the following decomposition of the regret, which is commonly used in the literature on online control based on OCO with memory~\citep{Agarwal2019,Agarwal2019log,Simchowitz2020improper}:
\begin{subequations}\label{eq:regret_decomposition}
\begin{align}
    \regret_T 
    &= \min_{M\in \calM} \sum_{t=0}^{T-1} f_t (M) -  \sum_{t=0}^{T-1} c_t (x_t^{K^*},u_t^{K^*}) \label{eq:regret_decomposition1}\\
    &\quad + \sum_{t=0}^{T-1} c_t (x_t^K (M_{0:t-1}), u_t^K(M_{0:t}))  - \sum_{t=0}^{T-1} \surr_t (M_{t-1-H:t}) \label{eq:regret_decomposition2}\\
    &\quad + \sum_{t=0}^{T-1} \surr_t (M_{t-1-H:t})  - \min_{M\in \calM} \sum_{t=0}^{T-1} f_t (M) , \label{eq:regret_decomposition3}
\end{align}
where $ K^* \in \calK (\kappa, \gamma) $ is an optimal state-feedback gain in hindsight.
\end{subequations}
In what follows, we introduce some lemmas for bounding \eqref{eq:regret_decomposition1}--\eqref{eq:regret_decomposition3}.

\subsection{Sufficiency of Disturbance-Action Policies}
It is known that given a state-feedback control policy $ u_t = -K^*x_t $, the disturbance-action policy~\eqref{eq:disturbance-action} with an appropriate parameter approximates it well for bounded noise~\citep[Lemma~5.2]{Agarwal2019}.
The following lemma shows that even when the noise distribution is unbounded, the disturbance-action policy is sufficient for approximating a given linear state-feedback control policy.

\begin{lemma}\label{lem:sufficiency}
    Suppose that Assumptions~\ref{ass:cost} and \ref{ass:noise_4}-(i) hold.
    Fix any two $ (\kappa,\gamma) $-strongly stable matrices $ K $ and $ K^* $ and define $ M_* = \{M_*^{[0]},\ldots,M_*^{[H-1]}\} $ as
    \begin{equation}\label{eq:M*}
        M_*^{[i]} := (K - K^*) (A-BK^*)^i .
    \end{equation}
    Then, $ M_* \in \calM $.
    In addition, when $ H = \lceil 2\gamma^{-1} \log T   \rceil $, it holds that for any $ C > 0 $ and $ \ft \ge 3 $, with probability at least $ 1- \frac{38\sigma_{w}\kappa_B^2 \kappa^6  }{\coe \gamma^2} $,
    \begin{align}
         &\sum_{t=0}^{\ft-1} \left| c_t (x_t^K(M_*), u_t^K (M_*)) - c_t (x_t^{K^*} , u_t^{K^*} ) \right| \le 2 G_c \coe^2 (\log\ft)^2 . \label{eq:M*_bound}
    \end{align}
\end{lemma}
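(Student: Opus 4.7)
The plan is to first verify the membership $M_*\in\calM$ directly from the strong stability assumption, then derive a structural bound on the trajectory difference that exploits the specific choice of $M_*$, and finally convert this into the claimed high-probability cost bound via a single Markov-inequality step. For the membership, using $\|K\|,\|K^*\|\le\kappa$ from Definition~\ref{def:stability} and $\|A_{K^*}^i\|\le\kappa^2(1-\gamma)^i$ from \eqref{eq:strong_stable_inequality} applied to $K^*$, I obtain $\|M_*^{[i]}\|\le\|K-K^*\|\cdot\|A_{K^*}^i\|\le 2\kappa\cdot\kappa^2(1-\gamma)^i=2\kappa^3(1-\gamma)^i\le 2\kappa_B\kappa^3(1-\gamma)^i$ since $\kappa_B\ge 1$, so $M_*\in\calM$ as required.

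For the cost bound, the key observation is that the choice $M_*^{[i]}=(K-K^*)A_{K^*}^i$ is calibrated so that the disturbance-action policy $\pi_t(K,M_*)$ reproduces the closed-loop dynamics under $-K^*x_t$ up to an error depending only on noises from more than $H$ steps in the past. Setting $e_t^x:=x_t^K(M_*)-x_t^{K^*}$ and using the identity $B(K-K^*)=A_{K^*}-A_K$, the closed-loop recursions give
\begin{align*}
    e_{t+1}^x = A_K e_t^x + B(K-K^*)\,d_t,\qquad d_t := -x_t^{K^*}+\sum_{i=1}^H A_{K^*}^{i-1}w_{t-i}.
\end{align*}
Expanding $x_t^{K^*}=\sum_{s=0}^{t-1}A_{K^*}^s w_{t-1-s}$ reveals that the first $H$ terms in $d_t$ cancel, leaving $d_t=-\sum_{s=H}^{t-1}A_{K^*}^s w_{t-1-s}$ for $t>H$ and $d_t=0$ for $t\le H$. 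Solving the recursion with $e_0^x=0$ and using $\|A_K^i\|\le\kappa^2(1-\gamma)^i$, every surviving contribution to $\|e_t^x\|$ carries at least the factor $(1-\gamma)^H$, which by the choice $H=\lceil 2\gamma^{-1}\log T\rceil$ satisfies $(1-\gamma)^H\le e^{-\gamma H}\le T^{-2}$. An analogous identity $e_t^u=-Ke_t^x+(K-K^*)d_t$ gives the same $(1-\gamma)^H$ decay for $e_t^u:=u_t^K(M_*)-u_t^{K^*}$.

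To convert the trajectory errors into cost differences, I apply Assumption~\ref{ass:cost} via the fundamental theorem of calculus to obtain
\begin{align*}
    &\left|c_t(x_t^K(M_*),u_t^K(M_*))-c_t(x_t^{K^*},u_t^{K^*})\right| \\
    &\qquad \le G_c\bigl(\|x_t^K(M_*)\|+\|x_t^{K^*}\|\bigr)\|e_t^x\|+G_c\bigl(\|u_t^K(M_*)\|+\|u_t^{K^*}\|\bigr)\|e_t^u\|.
\end{align*}
Using $\|x_t^K(M_*)\|\le\|x_t^{K^*}\|+\|e_t^x\|$, the geometric bound $\|x_t^{K^*}\|\le\kappa^2\sum_{i=0}^{t-1}(1-\gamma)^i\|w_{t-1-i}\|$, and analogous bounds for the inputs, I sum over $t$. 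The crucial cancellation is that the factor $T$ from the summation is absorbed by the $T^{-2}$ decay arising from $(1-\gamma)^H$, producing a deterministic bound of the form $\sum_t|\,\cdot\,|\le 2G_c(\log T)^2\,\xi^2$, where $\xi$ is a noise-dependent random variable built from weighted sums of $\|w_j\|$ whose expectation satisfies $\bbE[\xi]\le 38\sigma_w\kappa_B^2\kappa^6/\gamma^2$ independently of $T$. A single application of Markov's inequality then gives $\bbP(\xi\le C)\ge 1-38\sigma_w\kappa_B^2\kappa^6/(C\gamma^2)$, on which the claimed bound $2G_cC^2(\log T)^2$ holds.

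The main obstacle is arranging the bookkeeping so that $\bbE[\xi]$ is genuinely $T$-independent while only the first-moment bound in Assumption~\ref{ass:noise_4}-(i) is available. In particular, one cannot invoke Cauchy--Schwarz to estimate the cross terms $\|w_j\|\|w_k\|$ that would arise from naively expanding $\|x_t^{K^*}\|\|e_t^x\|$, because that would require a second-moment assumption. The argument must therefore exploit that the noises dominating $\|x_t^{K^*}\|$ (recent, with geometric weights summing to $O(1/\gamma)$) and those entering $\|e_t^x\|$ (at least $H$ steps old, multiplied by $(1-\gamma)^H$) are temporally separated, allowing a careful reindexing to collapse the triple sum over $t$ and the two noise indices into a single linear functional of the noise norms controlled by Assumption~\ref{ass:noise_4}-(i).
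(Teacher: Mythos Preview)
Your plan matches the paper's, including the membership check and the use of Assumption~\ref{ass:cost} to bound $|c_t(x,u)-c_t(x',u')|$ by $G_c(\|x\|+\|x'\|)\|x-x'\|+G_c(\|u\|+\|u'\|)\|u-u'\|$. Your recursive derivation of $e_t^x$ via $e_{t+1}^x = A_K e_t^x + B(K-K^*)d_t$ with $d_t = -\sum_{s\ge H} A_{K^*}^s w_{t-1-s}$ is a clean alternative to the paper's $\Psi$-based expansion and delivers the same $(1-\gamma)^H\le T^{-2}$ control.

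Where you diverge is the final high-probability step, and your description there is overcomplicated. You worry that the cross terms $\|w_j\|\|w_k\|$ require a temporal-separation argument to be handled with only first moments. The paper sidesteps this entirely: after reaching your displayed bound it applies the crude inequality $\sum_t a_t b_t \le \bigl(\sum_t a_t\bigr)\bigl(\sum_t b_t\bigr)$ to each of the two terms, and then applies Markov's inequality \emph{separately} to each linear factor (four applications total, two for the state side and two for the input side), never needing the expectation of a product of norms. Concretely, $\sum_t \|e_t^x\|$ has expectation $O(\sigma_w H(1-\gamma)^H T/\gamma)\le O(\sigma_w\log T/(T\gamma^2))$, so Markov gives $\sum_t\|e_t^x\|\le C\log T/T$ with failure probability $O(\sigma_w\kappa_B^2\kappa^5/(C\gamma^2))$; independently, $\sum_t(\|x_t^K(M_*)\|+\|x_t^{K^*}\|)\le CT\log T$ with a similar failure probability. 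Multiplying the two bounds and taking a union bound over the four events yields the claimed $2G_c C^2(\log T)^2$. Your single-$\xi$ framing can be recovered from this by taking $\xi$ to be the sum of the four appropriately normalized linear factors (so that each product $\xi_i\xi_j\le\xi^2$), but no ``careful reindexing'' or temporal separation of noises is needed, and the obstacle you flag in the last paragraph is not actually present.
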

\begin{proof}[Proof sketch]
    The detailed proof is given in Appendix~\ref{app:sufficiency}. By the triangle inequality, we have
\begin{align}
    & \sum_{t=0}^{\ft-1} \left| c_t (x_t^K(M_*), u_t^K (M_*)) - c_t (x_t^{K^*} , u_t^{K^*} ) \right|  \nonumber\\
    &\le \underbrace{ \sum_{t=0}^{\ft-1} \left| c_t (x_t^K(M_*), u_t^K (M_*)) - c_t (x_t^{K^*} , u_t^K (M_*)) \right|  }_{=: \term_{1}} + \underbrace{ \sum_{t=0}^{\ft-1}\left| c_t (x_t^{K^*} , u_t^K (M_*)) - c_t (x_t^{K^*} , u_t^{K^*} ) \right| }_{=: \term_{2}} . \label{eq:L1_L2_sketch}
\end{align}
By the convexity of $ c_t $ and Assumption~\ref{ass:cost}, the first term of \eqref{eq:L1_L2_sketch} can be bounded as follows:
\begin{align}
    \term_{1} &\le \sum_{t=0}^{\ft-1} \max \left\{ \| \nabla_x c_t(x_t^K(M_*), u_t^K(M_*)) \|, \| \nabla_x c_t(x_t^{K^*}, u_t^K(M_*)) \|  \right\} \| x_t^K (M_*) - x_t^{K^*} \| \nonumber\\
    &\le G_c \left( \sum_{t=0}^{\ft-1} \left(\| x_t^K (M_*) \| + \|x_t^{K^*} \| \right)  \right) \sum_{t=0}^{\ft-1}\| x_t^K (M_*) - x_t^{K^*} \| . \label{eq:term1_bound_sketch}
\end{align}
We bound each component of the right-hand side of \eqref{eq:term1_bound_sketch}.
In this proof sketch, we only focus on the last component.
By Markov's inequality, for any $ \thr > 0 $,
\begin{align}
    \bbP \left( \sum_{t=0}^{\ft-1} \| x_t^K(M_*) - x_t^{K^*} \| \ge \frac{\thr}{\ft}  \right) \le \frac{\ft}{\thr} \bbE\left[ \sum_{t=0}^{\ft-1} \| x_t^K(M_*) - x_t^{K^*} \| \right] . \label{eq:markov_2_sketch}
\end{align}
For bounding the right-hand side, the following expressions of $ x_t $ are useful:
\begin{align}
    x_{t}^K (M_*) &= 
         \sum_{i=0}^H \tilA_{K^*}^i w_{t-1-i} + \sum_{i=H+1}^{t-1} \Psi_{t-1,i}^{K,t-1}(M_*) w_{t-1-i} ,  && \forall t \ge 0 , \nonumber\\
    x_{t}^{K^*} &=  \sum_{i=0}^{t-1} \tilA_{K^*}^i w_{t-1-i}  ,  &&  \forall t \ge 0 . \nonumber
\end{align}
Then, we can bound $ \bbE[ \| x_{t}^K (M_*) - x_{t}^{K^*} \| ] $ as
\begin{align}
    \bbE\left[ \| x_{t}^K (M_*) - x_{t}^{K^*} \| \right] &=
     \bbE\left[ \left\|\sum_{i=H+1}^{t-1} \left(  \Psi_{t-1,i}^{K,t-1} (M_*) - \tilA_{K^*}^i \right) w_{t-1-i}  \right\|     \right] \nonumber\\
     &\le \begin{cases} \frac{2\sigma_{w}(H+1)\kappa_B^2 \kappa^5 (1-\gamma)^H }{\gamma} , & \forall t \ge H +2,  \\
        0, & \forall t \in \bbra{0,H+1} ,
     \end{cases} \nonumber
\end{align}
where we used H\"{o}lder's inequality and the fact $ \|\Psi_{t,i}^{K,h} (M_{*}) \| \le (2H+1)\kappa_B^2 \kappa^5 (1-\gamma)^{i-1} $.
Therefore, for any $ \ft \ge 3 $ and $ \coe > 0 $, by letting $ H = \lceil 2\gamma^{-1} \log \ft  \rceil $ and $ \thr = \coe \log \ft $ in \eqref{eq:markov_2_sketch}, we obtain
\begin{align}
    \bbP \left( \sum_{t=0}^{\ft-1} \| x_t^K(M_*) - x_t^{K^*} \| \ge \frac{\coe \log \ft}{\ft}  \right) 
    \le \frac{2\sigma_{w}(2\gamma^{-1} + \frac{2}{\log \ft})\kappa_B^2 \kappa^5 }{\coe \gamma}
    \le \frac{8\sigma_{w}\kappa_B^2 \kappa^5 }{\coe \gamma^2}. \nonumber
\end{align}
The other components of \eqref{eq:term1_bound_sketch} can be bounded similarly, and consequently, we obtain
\begin{align}
    \bbP\left( \term_1 \le G_c \coe^2 (\log \ft)^2  \right) \ge 1 - \frac{17\sigma_{w} \kappa_B^2 \kappa^5}{\coe\gamma^2} , ~~ \forall \coe > 0, \ \ft \ge 3 . \nonumber
\end{align}
A similar argument applies to bounding $ \term_2 $ in \eqref{eq:L1_L2_sketch}, and we obtain the desired result.
\end{proof}

\subsection{Difference between Surrogate and Actual Costs}
We next show how well the surrogate cost approximates the actual cost.
The proof is provided in Appendix~\ref{app:approximation} using similar techniques as in the proof of Lemma~\ref{lem:sufficiency}.

\begin{lemma}\label{thm:approximation}
    Suppose that Assumptions~\ref{ass:cost} and \ref{ass:noise_4}-(i) hold and $ M_t \in \calM $ for any $ t\in \bbra{0,T-1} $.
    Let $ K $ be $ (\kappa,\gamma) $-strongly stable and $ H = \lceil 2\gamma^{-1} \log T   \rceil $.
    Then, for any $ \coe > 0 $ and $ \ft \ge 3 $, with probability at least $ 1 - \frac{46 \sigma_w  \kappa_B^2 \kappa^{8} }{\coe \gamma^2(1-\gamma)} $, it holds that
    \begin{align}
        &\sum_{t=0}^{\ft-1} \left| \surr_t (M_{t-1-H:t}) -c_t (x_t^K (M_{0:t-1}), u_t^K(M_{0:t}))      \right| \le 2 G_c \coe^2 (\log\ft)^2 .  \label{eq:surrogate_actual_bound}
    \end{align}
\end{lemma}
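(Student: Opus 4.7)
The proof follows the template of Lemma~\ref{lem:sufficiency}, with the role of the comparator trajectory $(x_t^{K^*}, u_t^{K^*})$ played by the surrogate pair $(y_t, v_t) := (y_t^K(M_{t-1-H:t-1}), v_t^K(M_{t-1-H:t}))$. Abbreviating also $(x_t, u_t) := (x_t^K(M_{0:t-1}), u_t^K(M_{0:t}))$, the triangle inequality combined with the convexity bound used in \eqref{eq:term1_bound_sketch} and Assumption~\ref{ass:cost} yields
\[
    \sum_{t=0}^{\ft-1} \bigl|\surr_t(M_{t-1-H:t}) - c_t(x_t, u_t)\bigr| \le G_c \, S_{xy} \Delta_x + G_c \, S_{uv} \Delta_u,
\]
where $S_{xy} := \sum_t (\|x_t\| + \|y_t\|)$, $\Delta_x := \sum_t \|x_t - y_t\|$, and $S_{uv}, \Delta_u$ are defined analogously for the controls; the last step uses $\sum_t a_t b_t \le (\sum_t a_t)(\sum_t b_t)$ for non-negative sequences.

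The key structural identity is obtained by applying Proposition~\ref{prop:transfer} with $h = H$: for every $t \ge H+1$,
\[
    x_t = \tilA_K^{H+1} x_{t-1-H} + \sum_{i=0}^{2H} \Psi_{t-1,i}^{K,H}(M_{t-1-H:t-1}) w_{t-1-i},
\]
and the trailing sum is exactly $y_t$ by \eqref{eq:surrogate_y}. Hence $x_t - y_t = \tilA_K^{H+1} x_{t-1-H}$; for $t \le H$ the difference vanishes because $x_0 = 0$ and $w_s = 0$ for $s < 0$. Combined with \eqref{eq:strong_stable_inequality}, this gives $\|x_t - y_t\| \le \kappa^2 (1-\gamma)^{H+1} \|x_{t-1-H}\|$. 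The expressions \eqref{eq:disturbance-action} and \eqref{eq:surrogate_v} also immediately give $u_t - v_t = -K(x_t - y_t)$, so $\|u_t - v_t\| \le \kappa \|x_t - y_t\|$.

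The four sums are then controlled in high probability via Markov's inequality. Using \eqref{eq:x_solution_general} together with the bound $\|M_s^{[i]}\| \le 2\kappa_B\kappa^3(1-\gamma)^i$ coming from $M_s \in \calM$, a routine estimate of \eqref{eq:psi_def} gives $\|\Psi_{t-1,i}^{K,h}(M_\cdot)\| \lesssim \kappa_B^2 \kappa^5 H (1-\gamma)^{i-1}$; together with Assumption~\ref{ass:noise_4}-(i), this yields $\bbE[\|x_t\|], \bbE[\|y_t\|], \bbE[\|u_t\|], \bbE[\|v_t\|] \lesssim \sigma_w \kappa_B^2 \kappa^5 H / (\gamma(1-\gamma))$. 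With $H = \lceil 2\gamma^{-1}\log T \rceil$ we have $(1-\gamma)^{H+1} \le T^{-2}$, so $\bbE[\Delta_x] \lesssim \sigma_w \kappa_B^2 \kappa^7 H / (\gamma(1-\gamma) T)$ and $\bbE[\Delta_u] \lesssim \sigma_w \kappa_B^2 \kappa^8 H/(\gamma(1-\gamma)T)$. Applying Markov's inequality with thresholds $\alpha = C T \log T$ on each of $S_{xy}, S_{uv}$ and $\beta = C(\log T)/T$ on each of $\Delta_x, \Delta_u$ ensures $S \cdot \Delta \le \alpha\beta = C^2 (\log T)^2$ on the good event, and the failure probability of each of the four Markov steps is at most $O(\sigma_w \kappa_B^2 \kappa^8 / (C \gamma^2 (1-\gamma)))$ after substituting $H \le (2\gamma^{-1} + 1)\log T$ (valid for $T \ge 3$). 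A union bound then produces the stated bound $2 G_c C^2 (\log T)^2$ with failure probability bounded by the claimed $46 \sigma_w \kappa_B^2 \kappa^8 / (C \gamma^2 (1-\gamma))$.

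The main obstacle will be the bookkeeping of constants: carefully tracing how $\kappa^2$ from $\tilA_K^{H+1}$, $\kappa^5$ from $\|\Psi\|$, and the extra $\kappa$ from $u_t - v_t = -K(x_t - y_t)$ combine into the $\kappa^8$ in the failure probability, and how the factor $H \lesssim \gamma^{-1}\log T$ is absorbed into $\gamma^{-2}$ to produce the precise numerical coefficients in \eqref{eq:surrogate_actual_bound}.
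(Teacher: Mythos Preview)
Your proposal is correct and follows essentially the same approach as the paper: the same triangle-inequality split into state and input terms, the same identification $x_t - y_t = \tilA_K^{H+1} x_{t-1-H}$ via Proposition~\ref{prop:transfer} and $u_t - v_t = -K(x_t - y_t)$, and the same four Markov bounds with thresholds $C T\log T$ and $C(\log T)/T$. The paper's detailed calculation yields the split $21 + 25 = 46$ for the failure-probability constant, which is exactly the bookkeeping you anticipate at the end.
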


\subsection{Bound by Online Convex Optimization with Memory}\label{subsec:OCO}
This subsection is devoted to bounding \eqref{eq:regret_decomposition3}.
Although the regret bound \eqref{eq:bound_memory_convention} for OCO with memory can be directly utilized for the regret analysis under bounded noise~\citep{Agarwal2019}, it is unlikely to be applicable to bounding \eqref{eq:regret_decomposition3} by $ \wtilde{O}(\sqrt{\ft}) $ for unbounded noise. This is because as observed from the proof of Lemma~\ref{lem:gradient}, $  \sup_{t\in \bbra{0,T-1},M\in \calM} \| \nabla f_t (M) \|_{\rm F} $, which corresponds to $ G_\ell $ in \eqref{eq:bound_memory_convention}, cannot be bounded by $ O({\rm poly} (\log \ft)) $ with high probability without additional assumptions on the noise. Therefore, instead of \eqref{eq:bound_memory_convention}, we employ a more fundamental bound for OCO with memory, and accordingly, the learning rate $ \eta_t = \Theta (1/\sqrt{\ft}) $ used for the bounded noise~\citep{Agarwal2019} is modified to $ \eta_t = \Theta ((\sqrt{\ft} (\log \ft)^3)^{-1}) $ for unbounded noise.
Specifically, by the standard argument for OCO with memory~\citep{Anava2015}, the regret for the surrogate cost $ \surr_t $ with OGD can be bounded as follows. The proof is shown in Appendix~\ref{app:OCO}.
\begin{proposition}\label{prop:OCO_regret}
    Let $ \{M_t \}_{t=-H-1}^{0} $ be an arbitrary sequence such that $ M_t \in \calM $ for any $ t \in \bbra{-H-1,0} $ and $ \{M_t\}_{t=1}^{\ft-1} $ be given by OGD with learning rate $\eta_t = \eta > 0$:
    \begin{align}\label{eq:ogd}
        M_{t+1} &= \Pi_{\calM} (M_t - \eta \nabla f_t (M_t)) := \argmin_{M' \in \calM} \| M_t - \eta \nabla f_t (M_t) - M' \|_{\rm F} , ~~ t \in \bbZ_{\ge 0} .
    \end{align}
    Assume that there exists $ L_{\rmc} > 0 $ such that for any $ k \in \bbra{0,H+1} $, $ t\in \bbra{0,T-1} $, and $ M_{t-1-H},\ldots,M_t, \widecheck{M}_{t-k}\in \calM  $,
    \begin{align}
        &\left| \surr_t (M_{t-1-H:t}) -  \surr_t (M_{t-1-H:t-k-1},\wcheck{M}_{t-k},M_{t-k+1:t}) \right| \le  L_\rmc d_M (M_{t-k},\wcheck{M}_{t-k}), \label{eq:lipschitz}\\
        & d_M (M_{t-k},\wcheck{M}_{t-k}) := \left\|
            \begin{bmatrix}
              M_{t-k}^{[0]} - \wcheck{M}_{t-k}^{[0]}  \\
              \vdots \\
              M_{t-k}^{[H-1]} - \wcheck{M}_{t-k}^{[H-1]} 
            \end{bmatrix} \right\|_{\rm F} .  \label{eq:distance_M}
    \end{align}
    Then, for $ D := \frac{4\kappa_B \kappa^3 \sqrt{\dime}}{\gamma} $ and for any $ M\in \calM $, it holds that
    \begin{align}
        \sum_{t=0}^{T-1} \surr_t (M_{t-1-H:t}) 
        - 
        \sum_{t=0}^{T-1} f_t (M)
       &\leq L_\rmc \eta \sum_{t=0}^{\ft-1} \sum_{i=1}^{\min\{H+1,t\}} \sum_{k=1}^i
        \|
            \nabla_M f_{t-k}(M_{t-k})
        \|_{\rm F} + \frac{1}{2\eta} D^2 \nonumber\\
        &\quad + \frac{\eta}{2} \sum_{t=0}^{\ft-1} \| \nabla_M f_t (M_t) \|_{\rm F}^2 
        . \label{eq:general_ogd_bound}
    \end{align}
    \hfill $ \diamondsuit $
\end{proposition}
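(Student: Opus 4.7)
The plan is to adapt the standard OCO-with-memory analysis of \citet{Anava2015} to our setting, keeping the diameter of $\calM$ and the Lipschitz constant explicit. The starting point is the decomposition
\begin{align*}
\sum_{t=0}^{T-1} \surr_t(M_{t-1-H:t}) - \sum_{t=0}^{T-1} f_t(M)
= \sum_{t=0}^{T-1} \bigl[\surr_t(M_{t-1-H:t}) - f_t(M_t)\bigr] + \sum_{t=0}^{T-1} \bigl[f_t(M_t) - f_t(M)\bigr],
\end{align*}
which isolates a \emph{memory term} (the first sum) from the \emph{static regret} of OGD on $\{f_t\}$ over $\calM$ (the second sum).

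For the static regret, I would invoke the classical OGD argument. Since $y_t^K$ and $v_t^K$ are affine in their arguments and $c_t$ is convex, $\surr_t$ is jointly convex, so $f_t(M) = \surr_t(M,\ldots,M)$ is convex in $M$. Combining the convexity inequality $f_t(M_t) - f_t(M) \le \langle \nabla f_t(M_t), M_t - M \rangle$ with the one-step contraction $\|M_{t+1} - M\|_{\rm F}^2 \le \|M_t - \eta \nabla f_t(M_t) - M\|_{\rm F}^2$, which follows from $M \in \calM$ and the non-expansiveness of $\Pi_\calM$, and telescoping, I obtain
\begin{align*}
\sum_{t=0}^{T-1} [f_t(M_t) - f_t(M)] \le \frac{\|M_0 - M\|_{\rm F}^2}{2\eta} + \frac{\eta}{2} \sum_{t=0}^{T-1} \|\nabla f_t(M_t)\|_{\rm F}^2.
\end{align*}
Bounding the diameter via $\|M - M'\|_{\rm F}^2 \le \dime \sum_{i=0}^{H-1} 16\kappa_B^2 \kappa^6 (1-\gamma)^{2i} \le 16\kappa_B^2 \kappa^6 \dime/\gamma$ for any $M, M' \in \calM$ and using $1/\sqrt{\gamma} \le 1/\gamma$ yields $\|M_0 - M\|_{\rm F}^2 \le D^2$.

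For the memory term, I would apply the coordinate-wise Lipschitz hypothesis \eqref{eq:lipschitz}. Swapping the arguments of $\surr_t$ to $M_t$ one at a time, for $k = H+1, H, \ldots, 1$, produces
\begin{align*}
|\surr_t(M_{t-1-H:t}) - f_t(M_t)| \le L_\rmc \sum_{k=1}^{H+1} d_M(M_{t-k}, M_t),
\end{align*}
since all intermediate tuples remain in $\calM^{H+2}$. A second telescoping along the OGD trajectory — using $\|M_{t-j+1} - M_{t-j}\|_{\rm F} \le \eta \|\nabla f_{t-j}(M_{t-j})\|_{\rm F}$, which again follows from $M_{t-j} \in \calM$ and the non-expansiveness of $\Pi_\calM$ — converts each $d_M(M_{t-k}, M_t)$ into $\eta \sum_{j=1}^{\min\{k,t\}} \|\nabla f_{t-j}(M_{t-j})\|_{\rm F}$. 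Aggregating over $t$ and $k$ then produces the triple sum displayed in the bound.

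The main bookkeeping point, which I expect to be the only real obstacle, concerns the warm-up steps $t \le H$: indices $t-k$ with $k > t$ reference the arbitrary initial segment $\{M_{-H-1}, \ldots, M_{-1}\}$ outside the OGD trajectory and so cannot be controlled via gradient steps. Their contribution is a $T$-independent $O(L_\rmc D H^2)$ remainder that can be absorbed into the displayed bound (or, equivalently, one may take the initial segment equal to $M_0$ so those coordinate-swaps vanish). Apart from this bookkeeping, the proof is two orthogonal telescopings — one in wall-clock time and one across the memory window of length $H+1$ — glued together by the coordinate-wise Lipschitz inequality.
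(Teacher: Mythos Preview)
Your overall strategy matches the paper's: the same decomposition into a static OGD regret term and a memory term, the same telescoping of the memory term via the coordinate-wise Lipschitz hypothesis, and the same one-step bound $\|M_{t-k+1}-M_{t-k}\|_{\rm F}\le \eta\|\nabla f_{t-k}(M_{t-k})\|_{\rm F}$. The static-regret half and the diameter estimate are fine.

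The one place where you diverge, and where there is a genuine (if small) gap, is the handling of the warm-up steps $t\le H$. Neither of your proposed fixes matches the statement: an $O(L_\rmc D H^2)$ remainder is not present in \eqref{eq:general_ogd_bound}, so it cannot be ``absorbed''; and taking the initial segment equal to $M_0$ contradicts the hypothesis that $\{M_t\}_{t=-H-1}^{0}$ is arbitrary. The paper's resolution is structural rather than bookkeeping: because $w_s=0$ for $s\le -1$ by convention, every term in $y_t^K$ and $v_t^K$ involving a negative-index parameter $M_s$ is multiplied by a noise term with negative index and hence vanishes, so $\surr_t$ literally does not depend on $\{M_s\}_{s<0}$. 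For $t<H+1$ one may therefore replace $M_{t-1-H},\ldots,M_{-1}$ by $M_t$ at zero cost, leaving only $t$ coordinate swaps (from $M_0,\ldots,M_{t-1}$ to $M_t$) rather than $H+1$. This is exactly what produces the $\min\{H+1,t\}$ in the triple sum, with no remainder term.
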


For bounded noise, the gradient $ \nabla_M f_{t} (M_t) $ and the coordinate-wise Lipschitz constant $ L_\rmc $ are bounded uniformly in $ t $ \citep{Agarwal2019}, and consequently, OGD with $ \eta = \Theta(1/\sqrt{\ft}) $ attains $ O(\sqrt{\ft}) $ regret by \eqref{eq:bound_memory_convention}. On the other hand, the gradient and the Lipschitz constant are not bounded under unbounded noise. 
Instead, for the coordinate-wise Lipschitz constant, we have the following result.
The proof is given in Appendix~\ref{app:lipshitz}.
\begin{lemma}\label{lem:lipshitz}
    Suppose that Assumptions~\ref{ass:cost} and \ref{ass:noise_4}-(i) hold, and let $ K $ be $ (\kappa,\gamma) $-strongly stable and $ H = \lceil 2\gamma^{-1} \log T   \rceil $.
    Then, for any $ \ft \ge 3 $, $ t \in \bbra{0,\ft-1} $, $ k\in \bbra{0,H+1} $, $ \coe > 0 $, and $ M_{t-1-H},\ldots,M_t, \widecheck{M}_{t-k}\in \calM  $, with probability at least $ 1 - \frac{30\sigma_w \kappa_B^2 \kappa^{6}}{\coe \gamma^2 (1-\gamma)}  $, it holds that
    \begin{align}
        &\left| \surr_t (M_{t-1-H:t}) -  \surr_t (M_{t-1-H:t-k-1},\wcheck{M}_{t-k},M_{t-k+1:t}) \right| \nonumber\\
        &\qquad\qquad  \le  2 G_c \coe^2 \log \ft \left(2 \gamma^{-1}\log \ft + 1\right)^{1/2} d_M(M_{t-k},\wcheck{M}_{t-k}) , \nonumber
    \end{align}
    where $ d_M $ is defined in \eqref{eq:distance_M}.
    \hfill $ \diamondsuit $
\end{lemma}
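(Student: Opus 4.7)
The plan is to follow the strategy used in the proofs of Lemmas~\ref{lem:sufficiency} and~\ref{thm:approximation}: first derive a deterministic pointwise bound on the difference of surrogate costs in terms of surrogate state/input norms and their differences, and then control the noise-dependent quantities with high probability via Markov's inequality and Assumption~\ref{ass:noise_4}-(i). Since $y_t^K(\cdot)$ and $v_t^K(\cdot)$ are affine in the parameter tuple, the convexity of $c_t$ combined with the gradient growth condition \eqref{eq:cost_gradient} (applied along the segment that interpolates between $M_{t-k}$ and $\wcheck M_{t-k}$ in the $k$-th coordinate) yields
\begin{align*}
    &\left|\surr_t(M_{t-1-H:t}) - \surr_t(M_{t-1-H:t-k-1},\wcheck M_{t-k}, M_{t-k+1:t})\right|\\
    &\qquad \le G_c (\|y_t\|+\|\wcheck y_t\|)\|y_t-\wcheck y_t\| + G_c (\|v_t\|+\|\wcheck v_t\|)\|v_t-\wcheck v_t\|,
\end{align*}
where $(y_t,v_t)$ and $(\wcheck y_t,\wcheck v_t)$ denote the surrogate state/input pairs associated with $M_{t-1-H:t}$ before and after substituting $\wcheck M_{t-k}$ for $M_{t-k}$.

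Next, by \eqref{eq:psi_def}--\eqref{eq:surrogate_v}, for $k\in\bbra{1,H+1}$ replacing only $M_{t-k}$ alters precisely the single term of $\Psi_{t-1,i}^{K,H}$ that carries $M_{t-k}$, giving
\[
    y_t - \wcheck y_t = \tilA_K^{k-1} B \sum_{j=0}^{H-1} \Delta^{[j]} w_{t-1-k-j}, \qquad v_t - \wcheck v_t = -K(y_t - \wcheck y_t),
\]
with $\Delta^{[j]} := M_{t-k}^{[j]}-\wcheck M_{t-k}^{[j]}$ (the boundary case $k=0$ affects only $v_t$ through an analogous sum). Cauchy--Schwarz then produces
$\|y_t - \wcheck y_t\| \le \kappa^2\kappa_B(1-\gamma)^{k-1}\, d_M(M_{t-k},\wcheck M_{t-k})\, \sqrt{\sum_{j=0}^{H-1}\|w_{t-1-k-j}\|^2}$,
and an analogous bound holds for $\|v_t-\wcheck v_t\|$.

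For $\|y_t\|$ and $\|v_t\|$ (and their checked counterparts), use the bound $\|\Psi_{t-1,i}^{K,H}(M)\| \le (2H+1)\kappa_B^2\kappa^5(1-\gamma)^{i-1}$, valid for $M_s\in\calM$ and $K\in\calK(\kappa,\gamma)$, exactly as in the sketch of Lemma~\ref{lem:sufficiency}. Inserting this into \eqref{eq:surrogate_y}--\eqref{eq:surrogate_v} bounds each such norm by a constant times $(2H+1)\sum_{i\ge 0}(1-\gamma)^{i-1}\|w_{t-1-i}\|$, whose expectation is $O(H\sigma_w/\gamma)$. Markov's inequality under Assumption~\ref{ass:noise_4}-(i) then gives $\|y_t\|,\|v_t\|\lesssim C\log T$ (using $H=O(\log T/\gamma)$) on an event $E_1$ of probability at least $1-O(\sigma_w\kappa_B^2\kappa^5/(C\gamma^2))$; in the same spirit, the $\ell^2$-norm of the noise vector in the coordinate-difference expression is bounded by $\sqrt{H+1}\cdot C$ on an event $E_2$ of comparable probability. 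On $E_1\cap E_2$, substitution back into the pointwise inequality yields the claimed Lipschitz bound, and a union bound fixes the failure probability.

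The main obstacle is to retain the factor $\sqrt{H+1}$ (rather than $H$) in the Lipschitz constant while using only the first moment of $\|w_t\|$: the naive chain $\sqrt{\sum\|w\|^2}\le\sum\|w\|$ followed by Markov would yield an extra $\sqrt{H}\approx\sqrt{\log T}$ factor. Overcoming this requires a more delicate Markov argument applied to the $\ell^2$-noise norm directly, together with a careful calibration of the Markov threshold so that the resulting failure probability matches the stated $O(\sigma_w\kappa_B^2\kappa^6/(C\gamma^2(1-\gamma)))$ form without any additional $H$ inflation in the numerator.
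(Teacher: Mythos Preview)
Your overall strategy (pointwise bound via convexity and the gradient condition, then Markov on the noise-dependent factors) matches the paper's proof. The gap is precisely the ``main obstacle'' you flag at the end, and your proposed resolution does not work: under Assumption~\ref{ass:noise_4}-(i) alone there is \emph{no} Markov-type argument that controls $\sqrt{\sum_{j=0}^{H-1}\|w_{t-1-k-j}\|^2}$ at level $O(C\sqrt{H})$ with failure probability $O(\sigma_w/C)$. Indeed, take $\|w_j\|$ i.i.d.\ with $\bbP(\|w_j\|=H\sigma_w)=1/H$ and $\bbP(\|w_j\|=0)=1-1/H$; then $\bbE[\|w_j\|]=\sigma_w$, but $\bbE\bigl[\sqrt{\sum_j\|w_j\|^2}\bigr]\ge H\sigma_w\cdot\bbP(\text{at least one }w_j\ne 0)\ge (1-e^{-1})H\sigma_w$, so any first-moment Markov bound on the $\ell^2$-noise norm loses a factor $\sqrt{H}$ either in the Lipschitz constant or in the failure probability. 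The issue is created by your early Cauchy--Schwarz step, which trades the deterministic $\|\Delta^{[j]}\|$'s for a random $\ell^2$-noise norm that first moments cannot handle.

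The paper sidesteps this entirely by \emph{not} separating $d_M$ from the noise at the outset. From $y_t-\wcheck y_t=\tilA_K^{k-1}B\sum_{j}\Delta^{[j]}w_{t-1-k-j}$ it uses the triangle inequality to obtain
\[
\|y_t-\wcheck y_t\|\le \kappa_B\kappa^2(1-\gamma)^{k-1}\sum_{j=0}^{H-1}\|\Delta^{[j]}\|\,\|w_{t-1-k-j}\|,
\]
whose expectation is at most $\sigma_w\kappa_B\kappa^2\sum_j\|\Delta^{[j]}\|$ by Assumption~\ref{ass:noise_4}-(i). Markov applied to the ratio $\|y_t-\wcheck y_t\|/\sum_j\|\Delta^{[j]}\|$ then gives $\|y_t-\wcheck y_t\|\le C\sum_j\|\Delta^{[j]}\|$ with failure probability $\sigma_w\kappa_B\kappa^2/C$, with no $H$ in the numerator. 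Only at the very end is Cauchy--Schwarz applied \emph{to the deterministic differences}, yielding $\sum_j\|\Delta^{[j]}\|\le\sqrt{H}\,d_M(M_{t-k},\wcheck M_{t-k})$, which is where the factor $(2\gamma^{-1}\log T+1)^{1/2}$ in the statement comes from. Replacing your Cauchy--Schwarz step by this triangle-inequality/late-Cauchy--Schwarz combination closes the gap and recovers the stated constants.
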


We emphasize that the bounds derived so far (Lemmas~\ref{lem:sufficiency},~\ref{thm:approximation}, and \ref{lem:lipshitz}) are logarithmic in $ \ft $. One of the reasons this can be attained is that they evaluate the differences such as between the surrogate and actual costs. For bounding the gradient of the surrogate cost, this is not the case, and the unboundedness of the noise has a significant impact. Nevertheless, the following result together with Theorem~\ref{thm:regret} reveals that the finiteness of the fourth moment of the noise sufficiently bounds the first and third terms of \eqref{eq:general_ogd_bound} for attaining the $ \wtilde{O}(\sqrt{\ft}) $ regret.
\begin{lemma}\label{lem:gradient}
    Suppose that Assumptions~\ref{ass:cost} and \ref{ass:noise_4}-(ii) hold, and let $ K $ be $ (\kappa,\gamma) $-strongly stable and $ H = \lceil 2\gamma^{-1} \log T   \rceil $.
    Then, for any $ \coe > 0 $, $ \ft \ge 3 $, and $ M\in \calM $, it holds that
\begin{align}
    &\bbP\left( \sum_{t=0}^{\ft-1} \|\nabla_M f_t (M) \|_{\rm F}^2 \le \coe  \ft (\log \ft)^4  \right) \ge 1- \frac{26244 \sigma_w^4 \dime^2 G_c^2  \kappa_B^6 \kappa^{18} }{\coe \gamma^8 (1-\gamma)^4}, \label{eq:sum_grad}\\
    &\bbP\left( \sum_{t=0}^{\ft-1} \sum_{i=1}^{\min\{H+1,t\}} \sum_{k=1}^i
    \|
        \nabla_M f_{t-k}(M)
    \|_{\rm F} \le \coe \ft (\log \ft)^{\frac{9}{2}} \right) \ge 1- \frac{39366 \sigma_w^2\dime^2 G_c \kappa_B^3 \kappa^{9} }{\coe  \gamma^{13/2} (1-\gamma)^2} . \nonumber
\end{align}
\end{lemma}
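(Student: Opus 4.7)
The plan is to combine a pathwise (deterministic) upper bound on $ \|\nabla_M f_t(M) \|_{\rm F} $, expressed in terms of discounted sums of $ \|w_s \| $, with the fourth-moment assumption on the noise, and then apply Markov's inequality to each of the two sums. A helpful simplification is that the gradient is evaluated at a \emph{fixed} $ M\in\calM $ (rather than along the OGD trajectory), so the iterates $ M_t $ need not be tracked.

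First I would differentiate $ f_t(M) = c_t(y_t^K(M,\ldots,M), v_t^K(M,\ldots,M)) $ via the chain rule and invoke Assumption~\ref{ass:cost} to obtain
\begin{equation*}
    \|\nabla_M f_t(M) \|_{\rm F} \le G_c \|y_t^K \|\, \|J_M y_t^K \|_{\rm F} + G_c \|v_t^K \|\, \|J_M v_t^K \|_{\rm F} ,
\end{equation*}
where $ J_M $ denotes the derivative with respect to the stacked argument $ (M,\ldots,M) $. Using \eqref{eq:surrogate_y}, \eqref{eq:surrogate_v}, the explicit form of $ \Psi_{t-1,i}^{K,H} $ in \eqref{eq:psi_def}, and the admissibility bound $ \|M^{[i]} \|\le 2\kappa_B \kappa^3 (1-\gamma)^i $, each of the four factors can be bounded by a geometrically weighted sum of $ \|w_{t-1-i}\| $; a Cauchy-Schwarz step then converts these linear sums into discounted $ \ell^2 $-sums and yields a pathwise bound of the form
\begin{equation*}
    \|\nabla_M f_t(M)\|_{\rm F}^2 \le C_* G_c^2 \dime H^2 \, U_t V_t ,
\end{equation*}
where $ U_t, V_t $ are (weighted) sums of $ \|w_{t-s}\|^2 $ over windows of length $ O(H) $ and $ C_* $ is polynomial in $ \kappa_B,\kappa, 1/\gamma, 1/(1-\gamma) $.

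Next I would take expectations. By H\"{o}lder's inequality, $ \bbE[\|w_s\|^2 \|w_{s'}\|^2] \le \sigma_w^4 $ for any $ s, s' $, which uses only Assumption~\ref{ass:noise_4}-(ii) and crucially \emph{not} independence of $ \{w_t\} $. Expanding $ U_t V_t $ term-wise and applying this inequality gives $ \bbE[U_t V_t] = O(H \sigma_w^4/\gamma) $, so $ \bbE[\|\nabla_M f_t(M)\|_{\rm F}^2] = O((\log T)^3) $ uniformly in $ t $ with the constants matching \eqref{eq:sum_grad}. Summing over $ t $ and invoking Markov's inequality with threshold $ \coe T(\log T)^4 $ yields a failure probability of order $ 1/(\coe \log T) \le 1/\coe $ for $ T \ge 3 $, which establishes \eqref{eq:sum_grad}. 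For the second claim, Jensen's inequality gives $ \bbE[\|\nabla_M f_t(M)\|_{\rm F}] \le \sqrt{\bbE[\|\nabla_M f_t(M)\|_{\rm F}^2]} = O((\log T)^{3/2}) $. A swap of indices shows that the triple sum contains $ O(T H^2) = O(T(\log T)^2) $ nonzero summands (each $ \|\nabla_M f_s\|_{\rm F} $ appears with multiplicity at most $ (H+1)(H+2)/2 $), so its expectation is $ O(T(\log T)^{7/2}) $; applying Markov once more with threshold $ \coe T(\log T)^{9/2} $ then delivers the second bound.

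The hard part will be the pathwise control of $ \|J_M y_t^K\|_{\rm F}^2 $: the Jacobian is a third-order tensor whose entries couple the matrix factors $ (A_K^j B)_{a,c} $ with noise coordinates $ (w_{t-2-\ell-j})_d $, and extracting the geometric decay requires Cauchy-Schwarz in the correct order --- first in the power $ j $ to exploit $ (1-\gamma)^j $ from strong stability, then rearranging the double sum over the block index $ \ell $ and the power $ j $ so that each $ \|w_{t-2-s}\|^2 $ receives a total weight of at most $ O(1/\gamma) $. A secondary subtlety is that, because $ \{w_t\} $ is not assumed independent, every cross-moment control must pass through H\"{o}lder rather than a product-of-expectations identity; this is precisely why the fourth-moment hypothesis (and not merely a second-moment one) is needed to make the expectation of $ U_t V_t $ finite.
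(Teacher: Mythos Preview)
Your proposal is correct and follows essentially the same approach as the paper: chain rule plus Assumption~\ref{ass:cost} to bound $\|\nabla_M f_t\|$ by products of geometrically weighted noise sums, then H\"older's inequality (exploiting only the fourth-moment hypothesis, not independence) to control the expectation, and finally Markov's inequality. The paper's execution differs only in bookkeeping: it works entry-wise with $\nabla_{M_{p,q}^{[r]}} f_t$ rather than with the full Jacobian (which sidesteps the tensor manipulation you flag as the hard part), and it bounds $\bbE[(\sum_i a_i\|w_i\|)^4]$ via the crude max trick $\bbE[\max_i \|w_i\|^4]\le (2H{+}1)\sigma_w^4$ rather than your pairwise H\"older step $\bbE[\|w_s\|^2\|w_{s'}\|^2]\le\sigma_w^4$.
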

\begin{proof}[Proof sketch]
    We only sketch the proof of \eqref{eq:sum_grad}, and the detailed proof is deferred to Appendix~\ref{app:gradient}.
    For $ r\in \bbra{0,H-1} $, $p\in \bbra{1,n_\rmu}$, and $q\in \bbra{1,n_\rmx} $, let us bound $ \nabla_{M_{p,q}^{[r]}} f_t (M) $, where $ M_{p,q}^{[r]} $ denotes the $ (p,q) $ entry of $ M^{[r]} $.
For notational simplicity, we drop the superscript of $ y_t^K $ and $v_t^K $.
By using the chain rule and Assumption~\ref{ass:cost}, we obtain
\begin{align}
  \left|\nabla_{M_{p,q}^{[r]}} f_t (M) \right| &= \left| \nabla_x c_t (y_t(M),v_t(M))^\top \frac{\partial y_t(M)}{\partial M_{p,q}^{[r]}} + \nabla_u c_t (y_t (M), v_t(M))^\top \frac{\partial v_t (M)}{\partial M_{p,q}^{[r]}} \right| \nonumber\\
  &\le G_c  \|y_t (M)\|   \left\|\frac{\partial y_t(M)}{\partial M_{p,q}^{[r]}}\right\|  + G_c  \| v_t(M) \|  \left\| \frac{\partial v_t (M)}{\partial M_{p,q}^{[r]}} \right\|  . \label{eq:grad_pq_bound_sketch}
\end{align}
By \eqref{eq:surrogate_y} and the bound of $ \Psi $ used in the proof of Lemma~\ref{lem:sufficiency}, for any $ M \in \calM $, we have
\begin{align}
    \|y_t(M) \| &=  \left\| \sum_{i=0}^{2H} \Psi_{t-1,i}^{K,H} (M) w_{t-1-i} \right\| \le \sum_{i=0}^{2H} (2H+1) \kappa_B^2 \kappa^5 (1-\gamma)^{i-1} \|w_{t-1-i}\| , \nonumber\\
    \left\| \frac{\partial y_t(M)}{\partial M_{p,q}^{[r]}} \right\| &= \left\| \sum_{i=0}^{2H} \sum_{j=0}^H \frac{\partial \tilA_K^j B M^{[i-j-1]}}{\partial M_{p,q}^{[r]}} w_{t-1-i} \bone_{i-j\in \bbra{1,H}}          \right\| \nonumber\\
    &\le \sum_{i=r+1}^{r+H+1} \kappa_B \kappa^2 (1-\gamma)^{i-r-1} \|w_{t-1-i} \| . \label{eq:dydM_sketch}
\end{align}
Similarly, we can bound $ v_t $ and its derivative.
Consequently, for bounding $ \sum_{t=0}^{\ft-1} \| \nabla_M f_t (M_t) \|_{\rm F}^2 $, we have
\begin{align}
   &\bbE\left[\left|\nabla_{M_{p,q}^{[r]}} f_t (M) \right|^2\right] \le 4 G_c^2 (2H+3)^2 \kappa_B^6 \kappa^{18} \nonumber\\
    &\quad \times \bbE\left[ \left(  \sum_{i=0}^{2H} (1-\gamma)^{i-1} \|w_{t-1-i}\| \right)^4   \right]^{1/2} \bbE\left[ \left( \sum_{i=r}^{r+H+1}  (1-\gamma)^{i-r-1} \|w_{t-1-i}\| \right)^4  \right]^{1/2} , \label{eq:nabla_square_bound_sketch}
\end{align}
where we used H\"{o}lder's inequality.
The finiteness of the fourth moment of $ w_t $ is used to bound \eqref{eq:nabla_square_bound_sketch} as follows:
\begin{align}
    \bbE\left[ \left(  \sum_{i=0}^{2H} (1-\gamma)^{i-1} \|w_{t-1-i}\| \right)^4   \right] &\le \frac{1}{\gamma^4 (1-\gamma)^4} \bbE\left[ \max_{i\in \bbra{0,2H}} \|w_{t-1-i}\|^4 \right] \nonumber\\
    &\le \frac{2H+1}{\gamma^4 (1-\gamma)^4} \max_{i\in \bbra{0,2H}} \bbE\left[ \|w_{t-1-i}\|^4\right] \le \frac{(2H+1)\sigma_w^4}{\gamma^4 (1-\gamma)^4} . \nonumber
\end{align}
The second component of \eqref{eq:nabla_square_bound_sketch} can be bounded similarly, and we obtain \eqref{eq:sum_grad}.
\end{proof}

By Proposition~\ref{prop:OCO_regret} and Lemmas~\ref{lem:lipshitz} and \ref{lem:gradient}, we obtain the $ \wtilde{O}(\sqrt{\ft}) $ high-probability bound for OGD.
\begin{proposition}\label{prop:OCO_regret_general}
    Suppose that Assumptions~\ref{ass:cost} and \ref{ass:noise_4}-(ii) hold, and let $ K $ be $ (\kappa,\gamma) $-strongly stable and $ \{M_t \}_{t=-H-1}^{0} $ be an arbitrary sequence such that $ M_t \in \calM $ for any $ t \in \bbra{-H-1,0} $.
    Define $ D := \frac{4\kappa_B \kappa^3 \sqrt{\dime}}{\gamma} $ and $ \sigma_w^{[1,4]} := \max\{\sigma_w, \sigma_w^4\} $.
    Then, if a sequence $ \{M_t\}_{t=1}^{\ft-1} $ is given by OGD \eqref{eq:ogd} with learning rate $ \eta_t \equiv (\sqrt{\ft} (\log \ft)^3 )^{-1} $, it holds that for any $ \coe > 0 $, $ \ft \ge 3 $, and $ M\in \calM $, with probability at least $ 1 -  \frac{65640 \sigma_w^{[1,4]} \dime^2 G_c^2  \kappa_B^6 \kappa^{18} }{\coe \gamma^8 (1-\gamma)^4} $,
    \begin{align}
        &\sum_{t=0}^{T-1} \surr_t (M_{t-H:t}) 
        - 
        \sum_{t=0}^{T-1} f_t (M)
       \leq  \left( \frac{2\sqrt{3}G_c \coe^3}{\sqrt{\gamma}} +  \frac{D^2}{2} \right)  \sqrt{\ft}(\log \ft)^{3}   + \frac{\coe}{2} \sqrt{\ft} \log \ft . \nonumber
    \end{align}
\end{proposition}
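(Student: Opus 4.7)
The plan is to directly combine three building blocks already in place: the generic OGD bound of Proposition~\ref{prop:OCO_regret}, the coordinate-wise Lipschitz bound of Lemma~\ref{lem:lipshitz}, and the two gradient-sum bounds of Lemma~\ref{lem:gradient}, then to calibrate everything with the learning rate $\eta = (\sqrt{\ft}(\log \ft)^3)^{-1}$ and close out with a union bound.

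First, I would apply Proposition~\ref{prop:OCO_regret} with the prescribed constant $\eta$, yielding
\begin{align*}
    \sum_{t=0}^{T-1} \surr_t(M_{t-1-H:t}) - \sum_{t=0}^{T-1} f_t(M)
    &\le L_{\rmc}\eta \sum_{t=0}^{\ft-1}\sum_{i=1}^{\min\{H+1,t\}}\sum_{k=1}^{i}\|\nabla_M f_{t-k}(M_{t-k})\|_{\rm F} \\
    &\quad + \frac{D^2}{2\eta} + \frac{\eta}{2}\sum_{t=0}^{\ft-1}\|\nabla_M f_t(M_t)\|_{\rm F}^2,
\end{align*}
where $L_\rmc$ is any coordinate-wise Lipschitz constant for $\surr_t$.

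Second, I would take $L_\rmc = 2\sqrt{3}\,G_c C^2 (\log \ft)^{3/2}/\sqrt{\gamma}$ by simplifying the bound of Lemma~\ref{lem:lipshitz}: for $\ft\ge 3$ and $\gamma\in(0,1)$ one has $\gamma^{-1}\log \ft \ge \log 3 > 1$, so $2\gamma^{-1}\log \ft + 1 \le 3\gamma^{-1}\log \ft$, which gives the desired form. Third, I would invoke the two bounds of Lemma~\ref{lem:gradient} to replace the triple sum by $C\ft(\log \ft)^{9/2}$ and the sum of squared gradients by $C\ft(\log \ft)^{4}$. Plugging everything in and using $\eta^{-1} = \sqrt{\ft}(\log \ft)^3$, the three summands match the three terms in the proposition's RHS exactly:
\begin{align*}
    L_\rmc \eta \cdot C\ft(\log \ft)^{9/2} &= \frac{2\sqrt{3}\,G_c C^3}{\sqrt{\gamma}}\sqrt{\ft}(\log \ft)^3,\quad
    \frac{D^2}{2\eta} = \frac{D^2}{2}\sqrt{\ft}(\log \ft)^3,\\
    \frac{\eta}{2}\cdot C\ft(\log \ft)^4 &= \frac{C}{2}\sqrt{\ft}\log \ft.
\end{align*}
The failure probability is obtained by a union bound across the three events (one from Lemma~\ref{lem:lipshitz} plus two from Lemma~\ref{lem:gradient}), whose summed numerators $30 + 26244 + 39366 = 65640$ produce exactly the stated constant, after replacing $\sigma_w$ and $\sigma_w^4$ by $\sigma_w^{[1,4]}$ and upper-bounding the several $\gamma$ and $\kappa$ exponents by the worst ones.

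The main subtlety, and the step I expect to require the most care, is uniformity. Lemma~\ref{lem:lipshitz} is stated pointwise in $(t,k,M_{t-1-H:t},\wcheck M_{t-k})$, and Lemma~\ref{lem:gradient} is stated pointwise in $M$, yet here I must apply them along the adaptive, random trajectory $\{M_t\}$ produced by OGD and for all $t,k$ simultaneously. The key observation that makes the union bound essentially free is that the bounds in the proofs of these lemmas depend on $M$ only through the deterministic envelope $M\in\calM$ (which enters as the uniform $\|M^{[i]}\|$ bound in \eqref{eq:admissible_M}); consequently the dominating random quantity is a function of the noise $\{w_s\}$ alone, so a single Markov inequality on that envelope controls all relevant events at once. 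This is the only non-mechanical piece of the argument; the remainder is just arithmetic, as shown above.
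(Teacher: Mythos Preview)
Your proposal is correct and matches the paper's approach exactly: the paper offers no detailed proof of this proposition beyond the sentence ``By Proposition~\ref{prop:OCO_regret} and Lemmas~\ref{lem:lipshitz} and~\ref{lem:gradient}, we obtain the $\wtilde{O}(\sqrt{\ft})$ high-probability bound for OGD,'' and your write-up is precisely the intended combination, with the arithmetic (the simplification $2\gamma^{-1}\log\ft+1\le 3\gamma^{-1}\log\ft$, the three summands, and the numerator $30+26244+39366=65640$) all checking out. Your identification of the uniformity issue---that Lemmas~\ref{lem:lipshitz} and~\ref{lem:gradient} are stated pointwise while here they must hold along the random OGD trajectory---is a genuine subtlety that the paper leaves implicit; your resolution via the noise-only pathwise envelope (the bounds in the proofs of those lemmas majorize $\|y_t\|$, $\|v_t\|$, $\partial y_t/\partial M^{[r]}_{p,q}$, etc.\ by expressions in $\{\|w_s\|\}$ alone, uniformly over $\calM$) is exactly what makes the single Markov inequality and the stated constant legitimate.
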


\subsection{Proof of Theorem~\ref{thm:regret}}\label{subsec:proof_regret}
Now, we are ready to prove Theorem~\ref{thm:regret}. 
    The second and third lines \eqref{eq:regret_decomposition2},~\eqref{eq:regret_decomposition3} of the regret can be bounded by Lemma~\ref{thm:approximation} and Proposition~\ref{prop:OCO_regret_general}, respectively.
    For the first line \eqref{eq:regret_decomposition1}, we have
    \begin{align}
        \text{\eqref{eq:regret_decomposition1}} &\le\sum_{t=0}^{T-1} f_t (M_*) - \sum_{t=0}^{T-1} c_t (x_t^K(M_*),u_t^K(M_*))  \nonumber\\
        &\quad +  \sum_{t=0}^{T-1} c_t (x_t^K(M_*),u_t^K(M_*))    -  \sum_{t=0}^{T-1} c_t (x_t^{K^*},u_t^{K^*})  ,\nonumber
    \end{align}
    which can be bounded by Lemmas~\ref{lem:sufficiency} and \ref{thm:approximation}. Consequently, for any $ \coe > 0 $, we have
    \begin{align}
        &\bbP\Biggl( \regret_\ft \le  \left( \frac{2\sqrt{3}G_c \coe^3}{\sqrt{\gamma}} + \frac{D^2}{2} \right) \sqrt{\ft}(\log \ft)^{3}  + \frac{\coe}{2} \sqrt{\ft} \log \ft + 6 G_c \coe^2 (\log\ft)^2 \Biggr) \nonumber\\
        & \ge 1 - \frac{38\sigma_{w}\kappa_B^2 \kappa^6  }{\coe \gamma^2}  - \frac{46 \sigma_w  \kappa_B^2 \kappa^{8} }{\coe \gamma^2(1-\gamma)} - \frac{65640 \sigma_w^{[1,4]} \dime^2 G_c^2  \kappa_B^6 \kappa^{18}}{\coe \gamma^8 (1-\gamma)^4} .\nonumber
    \end{align}
    This completes the proof.

\section{Regret Analysis for Strongly Convex Costs}

In this section, we focus on strongly convex and smooth cost $ c_t $ and show that an $ O({\rm poly}(\log T)) $ regret can be achieved. Specifically, we make the following assumption, as in \citet{Simchowitz2020improper}.

\begin{assumption}\label{ass:strong_convex_cost}
    The cost function $ c_t : \bbR^{n_\rmx} \times \bbR^{n_\rmu} \rightarrow \bbR $ is twice differentiable for any $ t \in \bbra{0,\ft-1} $.
    In addition, there exist $ \alpha > 0 $ and $\beta > 0 $ such that the cost function $ c_t $ satisfies
    \begin{align}
        \alpha I \preceq \nabla_{x,u}^2 c_t (x,u) \preceq \beta I 
    \end{align}
    for any $ x\in \bbR^{n_\rmx} $, $ u\in \bbR^{n_\rmu} $, and $ t \in \bbra{0,T-1} $.
    \hfill $ \diamondsuit $ 
\end{assumption}
In Lemma~\ref{lem:strong_convex}, we will see that $ \alpha $-strong convexity of $ c_t $ leads to the strong convexity of the {\em expected} surrogate cost $ \bar{f}_t (M) := \bbE[f_t(M)] $, which enables OGD to attain logarithmic {\em pseudo}-regret $ \bbE[J_T (\{u_t\})] - \min_{\{u_t^*\} \in \calU} \bbE[J_T(\{u_t^*\})] = O({\rm poly} \log T)$ as in \citet{Agarwal2019log}. However, to establish a high-probability logarithmic regret bound, we need to require an additional assumption that $ c_t $ is $ \beta $-smooth, since $ f_t $ is not strongly convex even if $ c_t $ is.

In Section~\ref{sec:regret}, we have shown that \eqref{eq:regret_decomposition1} and \eqref{eq:regret_decomposition2} of the decomposition of the regret can be bounded by $ O({\rm poly} (\log \ft)) $ with high probability for general convex costs. Hence, we focus on \eqref{eq:regret_decomposition3} in this section.
Unlike the $ \wtilde{O}(\sqrt{\ft}) $ regret bound in the previous section, the finiteness of the $ p $th moment of the noise is not sufficient for attaining logarithmic regret. 
In this work, we make the following assumptions on the noise.
\begin{assumption}\label{ass:subgauss}
    The noise process $ \{w_t\} $ is an independent sequence and satisfies $ \bbE[w_t] = 0 $ for any $ t \in \bbZ_{\ge 0} $.
    In addition, for $ w_t = [w_{t,1}, \ldots , w_{t,\dime_\rmx}]^\top $, there exist $ \sigma_w \ge 0 $ and $ \usigma > 0 $ such that
    \begin{align}
        \log \bbE\left[ \Ee^{\lambda w_{t,i}} \right] &\le \frac{\lambda^2 \sigma_w^2}{2\dime_\rmx}, ~ &&\forall \lambda \in \bbR , \ t \ge 0, \ i \in \bbra{1,\dime_\rmx} , \label{eq:subgauss_def} \\
        \Sigma_t := \bbE[w_t w_t^\top] &\succeq \usigma^2 I , \ &&\forall t \ge 0 . \label{eq:nondegenerate}
    \end{align}
    \hfill $ \diamondsuit $
\end{assumption}
The condition~\eqref{eq:subgauss_def} means the sub-Gaussianity of the noise~\citep{Van2014}.
If $ w_t $ satisfies \eqref{eq:subgauss_def}, $ w_t $ has a finite $ p $th moment for any $ p \ge 1 $. In particular, it holds that $ \bbE[\|w_t\|^2] \le \sigma_w^2 $ \citep[Problem~3.1]{Van2014}, which means that Assumption~\ref{ass:noise_4}-(i) in Lemmas~\ref{lem:sufficiency}, \ref{thm:approximation}, and \ref{lem:lipshitz} is satisfied for the same scale parameter $ \sigma_w $. The non-degeneracy of the noise \eqref{eq:nondegenerate} is introduced to ensure that the expectation of the surrogate cost is strongly convex~\citep{Agarwal2019log}; see Lemma~\ref{lem:strong_convex}.

To bound \eqref{eq:regret_decomposition3}, we use the following result.
\begin{proposition}[{\citealt[Theorem~8]{Simchowitz2020improper}}]\label{prop:log_regret}
    Assume that $ \surr_t $ satisfies \eqref{eq:lipschitz} with the coordinate-wise Lipschitz constant $ L_\rmc $. Assume also that there exist constants $ L_\rmf > 0 $ and $ \beta' > 0 $ such that $ f_t (M) := \surr_t (M,\ldots,M) $ satisfies for any $ t\in \bbra{0,\ft-1} $ and $ M,\wcheck{M} \in \calM $,
    \begin{align}
        | f_t (M) -  f_t (\wcheck{M}) | &\le  L_{\rm f}  \| M - \wcheck{M}  \|_{\rm F} , \label{eq:lipschitz2} \\
        \|\nabla_M^2 f_t(M) \|_{\rm F} &\le \beta'  .
    \end{align}
    Assume further that $ \bar{f}_t (M) := \bbE[f_t(M)] $ is $ \alpha' $-strongly convex on $ \calM $.
    Then, if $ M_0 \in \calM $ and a sequence $ \{M_t\}_{t=1}^{\ft-1} $ is given by OGD \eqref{eq:ogd} with learning rate $ \eta_t = 3/(\alpha' (t+1)) $, it holds that for any $ \delta \in (0,1] $ and $ M \in \calM $, with probability at least $ 1-\delta $,
    \begin{align}
        \sum_{t = H+3}^{\ft-1} \surr_t (M_{t-1-H:t}) -  \sum_{t = H+3}^{\ft-1} f_t (M) &\lesssim \alpha' D^2 H_+ + \frac{L_{\rm f}^2 H_+}{\alpha'} \log \left( \frac{1 + \log (\Ee + \alpha' D^2)}{\delta} \right) \nonumber\\
        &\quad + \frac{(2L_{\rm c} + \dime^2 L_\rmf ) H_+ + \beta'}{\alpha'} L_{\rmf} H_+ \log \ft , \label{eq:regret_oco_log}
    \end{align}
    where $ H_+ := H + 2 $.
    \hfill $ \diamondsuit $
\end{proposition}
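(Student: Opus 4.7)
My plan is to combine two standard ingredients: (a) the Anava-type reduction from OCO with memory to OCO without memory via the coordinate-wise Lipschitz bound~\eqref{eq:lipschitz}, and (b) a high-probability OGD regret bound for one-step losses whose expectation is strongly convex. Writing the left-hand side of~\eqref{eq:regret_oco_log} as
\[
\sum_{t=H+3}^{\ft-1}\!\bigl[\surr_t(M_{t-1-H:t}) - f_t(M_t)\bigr] \;+\; \sum_{t=H+3}^{\ft-1}\!\bigl[f_t(M_t) - f_t(M)\bigr],
\]
I would control the two sums separately.

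For the first (movement) sum, applying~\eqref{eq:lipschitz} $H+1$ times along the coordinates yields $|\surr_t(M_{t-1-H:t}) - f_t(M_t)| \le L_{\rm c}\sum_{k=1}^{H+1}\|M_{t-k}-M_t\|_{\rm F}$; since $f_t$ is $L_{\rm f}$-Lipschitz by~\eqref{eq:lipschitz2} (so $\|\nabla_M f_t\|_{\rm F} \le L_{\rm f}$) and $\Pi_{\calM}$ is nonexpansive, the OGD update gives $\|M_{t-k}-M_t\|_{\rm F} \le L_{\rm f}\sum_{s=t-k}^{t-1}\eta_s$. Plugging in $\eta_s = 3/(\alpha'(s+1))$ and summing over $t\in\bbra{H+3,\ft-1}$ and $k\in\bbra{1,H+1}$ produces the $(2L_{\rm c}+\dime^2 L_{\rm f})H_+\cdot L_{\rm f}H_+\log \ft/\alpha'$ contribution appearing in~\eqref{eq:regret_oco_log}.

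For the second sum, the standard OGD descent step gives
\[
\langle g_t,\, M_t - M\rangle \;\le\; \frac{\|M_t - M\|_{\rm F}^2 - \|M_{t+1}-M\|_{\rm F}^2}{2\eta_t} + \frac{\eta_t L_{\rm f}^2}{2},
\]
with $g_t := \nabla_M f_t(M_t)$. Decomposing $g_t = \bar g_t + \xi_t$ where $\bar g_t := \nabla_M \bar f_t(M_t)$, the $\alpha'$-strong convexity of $\bar f_t$ gives $\langle \bar g_t, M_t - M\rangle \ge \bar f_t(M_t) - \bar f_t(M) + (\alpha'/2)\|M_t - M\|_{\rm F}^2$. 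Summing over $t$ and using Abel summation with $1/(2\eta_t) - 1/(2\eta_{t-1}) = \alpha'/6$, the telescoping contribution becomes $(\alpha'/6)\sum\|M_t-M\|_{\rm F}^2 + O(\alpha' D^2)$; combined with $-(\alpha'/2)\sum\|M_t-M\|_{\rm F}^2$ from strong convexity this leaves a net negative quadratic $-(\alpha'/3)\sum\|M_t-M\|_{\rm F}^2$, a deterministic $O(\alpha' D^2 H_+ + L_{\rm f}^2 H_+\log \ft/\alpha')$ term, and the residual martingale $-\sum\langle \xi_t, M_t - M\rangle$.

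Controlling this martingale is the hardest step. The increments are bounded, $\|\xi_t\|_{\rm F} \le 2L_{\rm f}$, with predictable second moment at most $4 L_{\rm f}^2\|M_t - M\|_{\rm F}^2$, so I would apply a self-bounding Freedman-type inequality together with a peeling argument over dyadic scales of $\sum\|M_t - M\|_{\rm F}^2$, calibrating the free parameter so that the variance-dependent term is absorbed into the residual $-(\alpha'/3)\sum\|M_t-M\|_{\rm F}^2$ already isolated above; what survives is the $L_{\rm f}^2 H_+\log(\log(\Ee + \alpha' D^2)/\delta)/\alpha'$ term of~\eqref{eq:regret_oco_log}. The $\beta'$-smoothness of $f_t$ finally enters through a second-order Taylor expansion to pass from pseudo-regret on $\bar f_t$ back to regret on $f_t$, producing the $\beta' L_{\rm f} H_+\log \ft/\alpha'$ contribution. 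The main obstacle will be this self-bounding martingale step: a direct Azuma--Hoeffding bound would only deliver $\sqrt{\ft}$-type rates, and only the self-bounding structure enabled by the strong-convexity-induced negative quadratic makes the polylogarithmic regret possible.
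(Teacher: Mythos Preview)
The paper does not prove Proposition~\ref{prop:log_regret}; it is quoted verbatim from \citet[Theorem~8]{Simchowitz2020improper} and used as a black box (see the proof of Theorem~\ref{thm:log_regret} in Appendix~\ref{app:log_regret}, which simply invokes the proposition together with Lemmas~\ref{lem:lipshitz},~\ref{lem:strong_convex},~\ref{lem:lipschitz_global} and Proposition~\ref{prop:hessian}). So there is no in-paper proof to compare against.

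That said, your sketch is essentially the argument of \citet{Simchowitz2020improper}. The decomposition into a movement term (handled by the coordinate-wise Lipschitz bound plus OGD step-size control) and a one-step regret term is exactly theirs, and the heart of the matter is indeed the self-bounding martingale step: a direct Azuma--Hoeffding would give $\sqrt{T}$, and only the negative quadratic $-(\alpha'/3)\sum_t\|M_t-M\|_{\rm F}^2$ left over from strong convexity lets you absorb the Freedman variance term via dyadic peeling, leaving the $L_{\rm f}^2\log(\log(\cdot)/\delta)/\alpha'$ contribution. Your identification of this as the crux is correct. One place where your sketch is a bit glib: you attribute the full $(2L_{\rm c}+\dime^2 L_{\rm f})H_+\,L_{\rm f}H_+\log T/\alpha'$ term to the movement sum, but the movement analysis you describe only produces the $L_{\rm c}$ part; the $\dime^2 L_{\rm f}$ and $\beta'$ pieces in \citet{Simchowitz2020improper} arise when converting the pseudo-regret on $\bar f_t$ back to regret on $f_t$ and when accounting for the parameter-space dimension in their intermediate bounds, not from the memory reduction alone. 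This is a bookkeeping issue rather than a conceptual gap.
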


In the rest of this section, we bound the constants $ L_{\rmf} $ and $ \beta' $ and obtain $ \alpha' $ in Proposition~\ref{prop:log_regret}.
First, it is known that under the strong convexity of $ c_t $ and the non-degeneracy of the noise, $ \bar{f}_t $ is strongly convex as follows.\footnote{Although Lemma~4.2 of \citet{Agarwal2019log} assumes that $ w_t $ is bounded, its proof still works without the boundedness.}
\begin{lemma}[{\citealt[Lemma~4.2]{Agarwal2019log}}]\label{lem:strong_convex}
    Suppose that Assumptions~\ref{ass:cost},~\ref{ass:strong_convex_cost}, and \ref{ass:subgauss} hold and $ K $ is $ (\kappa,\gamma) $-diagonally strongly stable. Then, $ \bar{f}_t(M) := \bbE[f_t(M)]$ is $ \wtilde{\alpha} $-strongly convex on $ \calM $ where
    \begin{align}
        \wtilde{\alpha} := \frac{\alpha \usigma^2 \gamma^2}{36 \kappa^{10}}  . \tag*{$\diamondsuit$}
    \end{align}
\end{lemma}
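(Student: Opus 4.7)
The plan is to compute the Hessian of $\bar f_t$ pointwise via the chain rule, invoke the $\alpha$-strong convexity of $c_t$ to reduce strong convexity of $\bar f_t$ to a lower bound on the expected Gram matrix of the Jacobian of $z_t := (y_t^K(M);\,v_t^K(M))$ with respect to $M$, and then extract that lower bound from the non-degeneracy $\Sigma_s\succeq \usigma^2 I$ by identifying a ``noise-direct'' component in $v_t$ that decouples across the coordinates of $M$.

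\textbf{Chain rule and interchange.} Since \eqref{eq:surrogate_y}--\eqref{eq:surrogate_v} show $z_t$ is affine in $M$, the Jacobian $J_t := \partial z_t(M)/\partial \mathrm{vec}(M)$ depends only on the noise window $w_{t-1-2H:t-1}$, not on $M$. The chain rule and Assumption~\ref{ass:strong_convex_cost} give $\nabla_M^2 f_t(M) = J_t^\top \nabla_z^2 c_t(z_t(M)) J_t \succeq \alpha\, J_t^\top J_t$ almost surely. By $\beta$-smoothness, $\|\nabla_M^2 f_t(M)\| \le \beta\|J_t\|^2$ is uniformly dominated by an integrable random variable (finite because $\bbE[\|w_t\|^2]\le \sigma_w^2$ under Assumption~\ref{ass:subgauss}), so dominated convergence justifies $\nabla_M^2 \bar f_t(M) = \bbE[\nabla_M^2 f_t(M)] \succeq \alpha\, \bbE[J_t^\top J_t]$. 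This is the only place where the boundedness assumption used by \citet{Agarwal2019log} requires replacement in our setting.

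\textbf{Key estimate.} For an arbitrary direction $\Delta M = (\Delta M^{[0]},\ldots,\Delta M^{[H-1]})$, expanding \eqref{eq:surrogate_y}--\eqref{eq:surrogate_v} yields
\begin{align*}
    J_{y_t}(\Delta M) &= \sum_{r=0}^{H-1}\sum_{j=0}^{H} \tilA_K^j B\, \Delta M^{[r]}\, w_{t-2-r-j}, \\
    J_{v_t}(\Delta M) &= -K\, J_{y_t}(\Delta M) + \sum_{r=0}^{H-1} \Delta M^{[r]}\, w_{t-1-r}.
\end{align*}
Setting $u_t(\Delta M) := J_{v_t}(\Delta M) + K J_{y_t}(\Delta M) = \sum_{r=0}^{H-1}\Delta M^{[r]}w_{t-1-r}$ isolates the contribution in which distinct coordinates of $\Delta M$ are paired with \emph{distinct and independent} noise vectors $w_{t-1},\ldots,w_{t-H}$. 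All cross-terms then vanish by independence, and $\Sigma_s \succeq \usigma^2 I$ gives
\begin{align*}
    \bbE[\|u_t(\Delta M)\|^2] = \sum_{r=0}^{H-1}\trace\!\bigl((\Delta M^{[r]})^\top\Delta M^{[r]}\Sigma_{t-1-r}\bigr) \ \ge\ \usigma^2\|\Delta M\|_F^2.
\end{align*}

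\textbf{Assembly and main obstacle.} Transferring this lower bound to $\|J_t\Delta M\|^2 = \|J_{y_t}(\Delta M)\|^2 + \|J_{v_t}(\Delta M)\|^2$ follows from $\|u_t\|^2 \le 2\|J_{v_t}\|^2 + 2\|K\|^2\|J_{y_t}\|^2 \le 2\kappa^2\|J_t\Delta M\|^2$, where the diagonal strong stability enters only through $\|K\|\le\kappa$ (with $\kappa\ge 1$). Hence $\bbE[\|J_t\Delta M\|^2]\ge (\usigma^2/(2\kappa^2))\|\Delta M\|_F^2$, and therefore $\nabla_M^2\bar f_t(M)\succeq (\alpha\usigma^2/(2\kappa^2))I \succeq \wtilde\alpha I$ since $2\kappa^2 \le 36\kappa^{10}/\gamma^2$. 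The main obstacle I anticipate is (i) the bookkeeping to derive $J_{y_t}$ cleanly from \eqref{eq:surrogate_y} with its several index shifts and (ii) the derivative--expectation interchange in the unbounded-noise regime, which is handled by $\beta$-smoothness plus finite second moment of the noise; once the algebraic cancellation that defines $u_t$ is spotted, independence makes the remaining calculation purely a second-moment computation, which is why the proof of \citet[Lemma~4.2]{Agarwal2019log} carries over without any boundedness hypothesis.
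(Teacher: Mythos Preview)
Your argument is correct and in fact simpler than the proof of \citet[Lemma~4.2]{Agarwal2019log} that the paper cites without reproducing. The key shortcut is your isolation of $u_t(\Delta M)=\sum_{r}\Delta M^{[r]}w_{t-1-r}$ inside $J_{v_t}$: since the noises $w_{t-1},\ldots,w_{t-H}$ are independent, mean-zero, and pair bijectively with the blocks $\Delta M^{[0]},\ldots,\Delta M^{[H-1]}$, the lower bound $\bbE[\|u_t\|^2]\ge\usigma^2\|\Delta M\|_{\rm F}^2$ drops out immediately, and only $\|K\|\le\kappa$ is needed to pass it to $\bbE[\|J_t\Delta M\|^2]$. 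The cited argument instead works with the full transfer structure of $y_t$ and requires the diagonalizability of $A_K$, which is why the lemma carries the diagonal strong stability hypothesis and the looser constant $\alpha\usigma^2\gamma^2/(36\kappa^{10})$; your route shows that diagonal strong stability is not actually needed for this step and yields the sharper constant $\alpha\usigma^2/(2\kappa^2)$.

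One caveat worth stating explicitly: the computation $\bbE[\|u_t\|^2]\ge\usigma^2\|\Delta M\|_{\rm F}^2$ requires $t\ge H$, since for smaller $t$ some of the $w_{t-1-r}$ fall under the convention $w_s=0$ for $s<0$ and the coercivity in those blocks of $\Delta M$ is lost. This is harmless for the paper's application (Proposition~\ref{prop:log_regret} only sums from $t=H+3$), but the restriction should be flagged.
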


Similar to the coordinate-wise Lipschitz constant (Lemma~\ref{lem:lipshitz}), we have the following result for $ L_\rmf $. The proof is given in Appendix~\ref{app:lipshitz}.
\begin{lemma}\label{lem:lipschitz_global}
    Suppose that Assumptions~\ref{ass:cost} and \ref{ass:noise_4}-(i) hold, and let $ K $ be $ (\kappa,\gamma) $-strongly stable and $ H = \lceil 2\gamma^{-1} \log T   \rceil $. 
    Then, for any $ \ft \ge 3 $, $ t \in \bbra{0,\ft-1} $, $ M, \wcheck{M} \in \calM $, and $ \coe > 0 $, with probability at least $ 1 - \frac{41\sigma_w \kappa_B^2 \kappa^{6}}{\coe \gamma^2 (1-\gamma)} $, it holds that
    \begin{align}
        | f_t (M) -  f_t (\wcheck{M}) | &\le 2 G_c \coe^2 (\log \ft)^2 \left(2\gamma^{-1}\log \ft + 1\right)^{1/2}   \|M - \wcheck{M}\|_{\rm F} . \tag*{$\diamondsuit$}
    \end{align}
\end{lemma}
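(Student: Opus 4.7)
The strategy is to parallel the sketch of Lemma~\ref{lem:lipshitz}, now estimating the change in $\surr_t$ when \emph{all} $H+2$ of its arguments move from $\wcheck{M}$ to $M$ simultaneously. By the convexity of $c_t$ and Assumption~\ref{ass:cost},
\[
|f_t(M) - f_t(\wcheck{M})| \le G_c\bigl(\|y_t^K(M)\|+\|y_t^K(\wcheck{M})\|\bigr)\, \|y_t^K(M) - y_t^K(\wcheck{M})\| + (\text{analog for } v_t^K),
\]
where I write $y_t^K(M):=y_t^K(M,\ldots,M)$ and similarly for $v_t^K$. The task reduces to controlling the four noise-dependent factors on the right-hand side in high probability.

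For the magnitudes, the argument from the sketch of Lemma~\ref{lem:sufficiency} applies verbatim: using $\|M^{[r]}\|\le 2\kappa_B\kappa^3(1-\gamma)^r$ from~\eqref{eq:admissible_M} yields $\|\Psi_{t-1,i}^{K,H}(M,\ldots,M)\|\le(2H+1)\kappa_B^2\kappa^5(1-\gamma)^{i-1}$, whence $\|y_t^K(M)\|$ is a geometric-tail sum of $\|w_{t-1-i}\|$ with expectation of order $H\sigma_w/(\gamma(1-\gamma))$ under Assumption~\ref{ass:noise_4}-(i); Markov's inequality at threshold $\coe\log T$ produces an $O(\coe\log T)$ bound with failure probability $O(\sigma_w\kappa_B^2\kappa^5/(\coe\gamma^2(1-\gamma)))$. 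For the differences, the affine dependence of $\Psi_{t-1,i}^{K,H}$ on $M$ (via~\eqref{eq:psi_def}) gives
\[
y_t^K(M) - y_t^K(\wcheck{M}) = \sum_{r=0}^{H-1}\sum_{j=0}^{H}\tilA_K^j B(M^{[r]}-\wcheck{M}^{[r]}) w_{t-2-r-j}\bone_{r+j+1\in\bbra{1,2H}},
\]
and a Cauchy--Schwarz over the parameter index $r$ (using $\sum_r\|M^{[r]}-\wcheck{M}^{[r]}\|\le\sqrt{H}\|M-\wcheck{M}\|_{\rm F}$) bounds the norm by $\sqrt{H}\|M-\wcheck{M}\|_{\rm F}\max_r T_r$ with $T_r:=\sum_{j=0}^{H}\kappa_B\kappa^2(1-\gamma)^j\|w_{t-2-r-j}\|\ge 0$. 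A further Markov's inequality on $\bbE[\max_r T_r]\le\bbE[\sum_r T_r]=O(H\sigma_w/\gamma)$ yields $\max_r T_r\le \coe\log T$ with comparable failure probability. The analog for $v_t^K$ is handled via $v_t^K(M)-v_t^K(\wcheck{M})=-K(y_t^K(M)-y_t^K(\wcheck{M}))+\sum_{i=1}^H(M^{[i-1]}-\wcheck{M}^{[i-1]})w_{t-i}$. Taking a union bound over the four events, collecting constants, and using $H\le 2\gamma^{-1}\log T+1$ produces the claimed inequality, with the factor $\sqrt{2\gamma^{-1}\log T+1}$ arising precisely from the Cauchy--Schwarz on $\sum_r\|M^{[r]}-\wcheck{M}^{[r]}\|$.

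The main obstacle is that Assumption~\ref{ass:noise_4}-(i) supplies only the first moment $\bbE[\|w_t\|]\le\sigma_w$, forbidding any step that squares noise norms. Every Cauchy--Schwarz must therefore be arranged so that the noise appears only in linear expressions of the form $\sum_j(1-\gamma)^j\|w_{t-\cdot}\|$; in particular, the Cauchy--Schwarz must be performed over the parameter index $r$ rather than the noise time index, and $\max_rT_r$ must be crudely dominated by $\sum_rT_r$ to stay within the first-moment framework. This careful arrangement is what produces both the $\sqrt{2\gamma^{-1}\log T+1}$ factor and the additional $\log T$ relative to the coordinate-wise bound of Lemma~\ref{lem:lipshitz}.
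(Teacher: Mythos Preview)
Your proposal is correct and follows essentially the same route as the paper: the same convexity/Assumption~\ref{ass:cost} decomposition into magnitude and difference factors, the same Markov-inequality control of each factor using only first moments, and the same Cauchy--Schwarz conversion $\sum_r\|M^{[r]}-\wcheck{M}^{[r]}\|\le\sqrt{H}\|M-\wcheck{M}\|_{\rm F}$. The only cosmetic difference is that the paper applies Markov directly to $\|y_t(M)-y_t(\wcheck{M})\|$ (normalized by $\sum_i\|M^{[i-1]}-\wcheck{M}^{[i-1]}\|$) and converts to the Frobenius norm at the end, whereas you first pull out $\max_r T_r$ and bound that via $\sum_r T_r$; both routes land on the same $\sqrt{H}$ factor and the same failure probability up to constants.
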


The sub-Gaussianity of the noise and the $ \beta $-smoothness of the cost functions are utilized to bound the Hessian of $ f_t $ by $ O({\rm poly} (\log \ft)) $.
\begin{proposition}\label{prop:hessian}
    Suppose that Assumptions~\ref{ass:strong_convex_cost} and \ref{ass:subgauss} hold, and let $ K $ be $ (\kappa,\gamma) $-strongly stable and $ H = \lceil 2\gamma^{-1} \log T   \rceil $. 
    Then, for any $ \ft \ge 3 $, $ t \in \bbra{0,\ft-1} $, $ \coe > 0 $, and $ M \in \calM $, with probability at least $ 1 -\frac{2\sigma_w\sqrt{2 \dime_\rmx (1+ \log \dime_\rmx)  }}{\coe}  $, it holds that
    \begin{align}
        \max_{t\in \bbra{0,\ft-1}} \left\| \nabla_{M}^2 f_t (M) \right\|_{\rm F} \le  \frac{6\kappa_B \kappa^3 \beta \dime^2   \coe (\log \ft)^{3/2}}{\gamma^2(1-\gamma)} . \tag*{$\diamondsuit$}
    \end{align}
\end{proposition}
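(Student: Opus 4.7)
The plan is to exploit that, holding the argument of $F_t$ to be a single matrix $M$, the surrogate state $y_t^K(M,\ldots,M)$ and input $v_t^K(M,\ldots,M)$ are \emph{affine} in $M$. This collapses the Hessian of $f_t$ into a cost--Hessian sandwiched by a noise-dependent Jacobian, after which $\beta$-smoothness and sub-Gaussian tail control finish the job.

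\textbf{Step 1 (chain rule).} Let $z_t(M) := \bigl(y_t^K(M,\ldots,M),\, v_t^K(M,\ldots,M)\bigr)$ and let $J_t(M) := \partial z_t/\partial\,\mathrm{vec}(M) \in \bbR^{(n_\rmx + n_\rmu)\times(n_\rmu n_\rmx H)}$ denote its Jacobian. By \eqref{eq:surrogate_y}--\eqref{eq:surrogate_v}, $z_t$ is affine in $\mathrm{vec}(M)$, so its second derivative vanishes. The chain rule then yields
\[
\nabla_M^2 f_t(M) \;=\; J_t(M)^\top\,\nabla_{(x,u)}^2 c_t\bigl(z_t(M)\bigr)\,J_t(M).
\]
Combining with Assumption~\ref{ass:strong_convex_cost} and the inequality $\|A^\top B A\|_{\rm F} \le \|B\|\,\|A\|_{\rm F}^2$ gives $\|\nabla_M^2 f_t(M)\|_{\rm F} \le \beta\,\|J_t(M)\|_{\rm F}^2$.

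\textbf{Step 2 (deterministic Jacobian bound).} Differentiating \eqref{eq:surrogate_y}--\eqref{eq:surrogate_v} with respect to the $(p,q)$-entry of $M^{[r]}$ (and following the same isolation of the index $j = i-r-1$ used in the sketch of Lemma~\ref{lem:gradient}) yields
\[
\frac{\partial y_t}{\partial M_{p,q}^{[r]}} = \sum_{j=0}^{H} \tilA_K^j B e_p\, w_{t-r-2-j,q}, \qquad
\frac{\partial v_t}{\partial M_{p,q}^{[r]}} = -K\,\frac{\partial y_t}{\partial M_{p,q}^{[r]}} + e_p\, w_{t-r-1,q}.
\]
Applying $\|\tilA_K^j\| \le \kappa^2(1-\gamma)^j$ from \eqref{eq:strong_stable_inequality} together with Cauchy--Schwarz on the geometric tail $\sum_j (1-\gamma)^j \le 1/\gamma$, and then summing over the $n_\rmu n_\rmx H$ entries of $M$, produces a deterministic bound of the schematic form
\[
\|J_t(M)\|_{\rm F}^2 \;\lesssim\; \frac{\kappa_B^2 \kappa^{6}\, n^2\, H}{\gamma^{2}(1-\gamma)} \cdot \Xi_T,
\qquad \Xi_T := \sum_{q=1}^{n_\rmx}\max_{s \in \bbra{0,\ft-1}} w_{s,q}^2 ,
\]
uniformly over $t\in\bbra{0,\ft-1}$ and $M\in\calM$.

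\textbf{Step 3 (sub-Gaussian tail).} Under Assumption~\ref{ass:subgauss}, each $w_{s,q}$ is $(\sigma_w/\sqrt{n_\rmx})$-sub-Gaussian and independent across $s$. The standard sub-Gaussian maximal inequality gives $\bbE\!\left[\max_{s}\, w_{s,q}^2\right] \lesssim (\sigma_w^2/n_\rmx)(1+\log T)$, but summing over $q$ reintroduces $n_\rmx$ in an uncontrolled way. Instead, I will apply the maximal inequality simultaneously across $(s,q)$ to the \emph{unsquared} quantity so that the resulting expectation is bounded by $2\sigma_w\sqrt{2 n_\rmx(1+\log n_\rmx)}$ up to a $\sqrt{\log T}$ normalization that will be absorbed into the final $(\log T)^{3/2}$ factor. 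A single Markov inequality on $\sqrt{\Xi_T}/\sqrt{\log T}$ then converts this expectation bound into the desired $1/\coe$-type tail, and combining with Steps~1--2 and $H \le 2\gamma^{-1}\log T + 1$ produces the stated estimate.

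\textbf{Main obstacle.} The delicate step is Step~3: matching the proposition's clean probability bound $2\sigma_w\sqrt{2 n_\rmx(1+\log n_\rmx)}/\coe$ (which carries only $n_\rmx$ dependence) against the conclusion's $(\log T)^{3/2}$ dependence requires the right choice of the intermediate quantity to which Markov is applied, so that one factor of $\sqrt{\log T}$ is absorbed into the expectation via sub-Gaussian concentration while the other comes from $H$. This is fundamentally different from the fourth-moment argument used in Lemma~\ref{lem:gradient}, because only the exponentially decaying tails of sub-Gaussian noise can bring the extra polynomial factors of $T$ down to $\mathrm{polylog}(T)$.
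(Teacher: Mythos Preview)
Your proposal is correct and follows essentially the same route as the paper: chain rule to reduce the Hessian to $\beta\|J_t\|_{\rm F}^2$, entrywise bounds on the Jacobian via $\|\tilA_K^j\|\le\kappa^2(1-\gamma)^j$, and then a sub-Gaussian maximal inequality plus Markov to control the noise uniformly in $t$. The only cosmetic difference is in Step~3: the paper does not introduce your intermediate $\Xi_T=\sum_q\max_s w_{s,q}^2$ but instead controls $\max_{t}\|w_t\|$ directly---it bounds $\bbP(\max_t\|w_t\|\ge d_1)$ by $(2\sigma_w/d_1)\sqrt{2n_\rmx\log(Tn_\rmx)}$ using Markov on $\max_{t,i}|w_{t,i}|$ together with the standard sub-Gaussian bound $\bbE[\max_{t,i}|w_{t,i}|]\le\sigma_w\sqrt{2\log(Tn_\rmx)/n_\rmx}$, and then on the event $\{\max_t\|w_t\|\le d_1\}$ each partial derivative is bounded by a constant times $d_1/\gamma$; setting $d_1=C\sqrt{\log T}$ produces both the $(\log T)^{3/2}$ factor (one $\sqrt{\log T}$ from $d_1$, one $\log T$ from $H$) and the stated probability.
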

\begin{proof}[Proof sketch]
    The detailed proof is shown in Appendix~\ref{app:smoothness}.
    For notational simplicity, we drop the superscript $ K $ of $ y_t^{K} $ and $ v_t^{K} $. 
    By the chain rule and the $ \beta $-smoothness of $ c_t $ (Assumption~\ref{ass:strong_convex_cost}), we obtain
\begin{align}
    \left\| \nabla_{M}^2 f_t (M) \right\|_{\rm F}
    &\le  \left\| \begin{bmatrix}
        \calJ_{y_t} \\ \calJ_{v_t}
    \end{bmatrix}\right\|_{\rm F}^2
     \|\nabla_{x,u}^2 c_t (y_t (M), v_t (M)) \|_{\rm F}
      \le \beta \left\| \begin{bmatrix}
        \calJ_{y_t} \\ \calJ_{v_t}
    \end{bmatrix}\right\|_{\rm F}^2 , \label{eq:hessian_sketch}
\end{align}
where $ \calJ_{y_t} $ and $ \calJ_{v_t} $ are the Jacobian matrices of $ y_t $ and $ v_t $ with respect to $ M $, respectively.
Here, we have
\begin{align}
    \left\|\begin{bmatrix}
        \calJ_{y_t} \\ \calJ_{v_t}
    \end{bmatrix}\right\|_{\rm F}^2  = \sum_{p=1}^{\dime_\rmu} \sum_{q=1}^{\dime_\rmx} \sum_{r=0}^{H-1} \left( \left\| \frac{\partial y_t(M)}{\partial M_{p,q}^{[r]}}  \right\|^2 + \left\| \frac{\partial v_t(M)}{\partial M_{p,q}^{[r]}}  \right\|^2 \right) . \label{eq:jacobi_bound_sketch}
\end{align}
Moreover, we have the following maximal inequality for sub-Gaussian noise $ w_t $:
\begin{align}
    \bbP\left(\max_{t\in \bbra{0,\ft-1}} \|w_t\| \ge \thr \right) 
    \le \frac{2  \sigma_w\sqrt{2 \dime_\rmx \log (\ft \dime_\rmx)} }{\thr} , ~~ \forall \thr > 0, \ \ft \ge 2 .
\end{align}
Hence, it follows from the bound \eqref{eq:dydM_sketch} of $ \partial y_t / \partial M_{p,q}^{[r]} $ and its counterpart for $ v_t $ that for any $ \thr > 0 $, $ \ft \ge 2 $, and $  M \in \calM $,
\begin{align}
    &\bbP\Biggl( \max_{t\in \bbra{0,\ft-1}, p\in \bbra{1,\dime_u}, q \in \bbra{1,\dime_\rmx}, r\in \bbra{0,H-1}} \left\| \frac{\partial y_t(M)}{\partial M_{p,q}^{[r]}}  \right\| \le \frac{\kappa_B \kappa^2 \thr}{\gamma}, ~~ \max_{t,p,q,r} \left\| \frac{\partial v_t(M)}{\partial M_{p,q}^{[r]}}  \right\| \le \frac{\kappa_B \kappa^3 \thr}{\gamma(1-\gamma)} \Biggr) \nonumber\\
    &\ge  1 - \frac{2\sigma_w\sqrt{2 \dime_\rmx\log (\ft \dime_\rmx)} }{\thr} . \nonumber
\end{align}
By combining this with \eqref{eq:hessian_sketch} and \eqref{eq:jacobi_bound_sketch} and by a straightforward calculation, we obtain the desired result.
\end{proof}

By Proposition~\ref{prop:log_regret} with Lemmas~\ref{lem:strong_convex} and \ref{lem:lipschitz_global} and Proposition~\ref{prop:hessian}, we finally establish the logarithmic regret bound. The proof is shown in Appendix~\ref{app:log_regret}
\begin{theorem}\label{thm:log_regret}
    Suppose that Assumptions~\ref{ass:cost},~\ref{ass:strong_convex_cost}, and \ref{ass:subgauss} hold, and let $ K $ be $ (\kappa,\gamma) $-diagonally strongly stable and $ H = \lceil 2\gamma^{-1} \log T   \rceil $. 
    Then, for Algorithm~\ref{alg:online_control} with $ \eta_t = 3/(\wtilde{\alpha} (t+1)) $, where $ \wtilde{\alpha} := \alpha \usigma^2 \gamma^2/(36 \kappa^{10}) $, it holds that for any $ T \ge 3 $ and $ \delta \in (0,1] $, with probability at least $ 1-\delta $,
    \begin{align}
         \regret_\ft^{\rm diag} &\lesssim \wtilde{\alpha}  D^2 H_\ft + \frac{ \bar{L}_{\ft,\delta}^2 H_\ft}{\wtilde{\alpha}} \log \left( \frac{1 + \log (\Ee + \wtilde{\alpha} D^2)}{\delta} \right) \nonumber\\
        &\quad+ \frac{ (2+ \dime^2) \bar{L}_{\ft,\delta} H_\ft  + \bar{\beta}_{\ft,\delta} }{\wtilde{\alpha}} \bar{L}_{\ft,\delta} H_\ft \log \ft \nonumber\\
        &\quad+ 2  D \bar{L}_{\ft,\delta} (H_\ft+1)^3 + 6 G_c C_\delta^2 (\log \ft)^{2},
    \end{align}
    where $ C_\delta := \frac{2}{\delta} \left( \frac{201 \sigma_w \kappa_B^2 \kappa^8}{\gamma^2 (1-\gamma)} + 2 \sigma_w \sqrt{2\dime_\rmx (1+ \log \dime_\rmx)} \right) $, $ H_\ft := \lceil 2\gamma^{-1} \log T   \rceil + 2 $, $ \bar{L}_{\ft,\delta} := \frac{4G_c C_\delta^2 (\log \ft)^{5/2}}{\sqrt{\gamma}} $, $ \bar{\beta}_{\ft,\delta} := \frac{6\kappa_B \kappa^3 \beta \dime^2  \coe_\delta  (\log \ft)^{3/2}}{\gamma^2 (1-\gamma)} $, and $ D := \frac{4\kappa_B \kappa^3 \sqrt{\dime}}{\gamma} $.
    \hfill $ \diamondsuit $
\end{theorem}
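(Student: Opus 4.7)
The plan is to reuse the regret decomposition~\eqref{eq:regret_decomposition} and to replace Proposition~\ref{prop:OCO_regret_general} with Proposition~\ref{prop:log_regret} for the OGD term. Since Assumption~\ref{ass:subgauss} implies Assumption~\ref{ass:noise_4}-(i) with the same $\sigma_w$, all results from Section~\ref{sec:regret}, namely Lemmas~\ref{lem:sufficiency}, \ref{thm:approximation}, \ref{lem:lipshitz}, and~\ref{lem:lipschitz_global}, remain available. I would handle terms~\eqref{eq:regret_decomposition1} and~\eqref{eq:regret_decomposition2} exactly as in Subsection~\ref{subsec:proof_regret}: \eqref{eq:regret_decomposition2} is bounded by Lemma~\ref{thm:approximation}, and \eqref{eq:regret_decomposition1} is split via the intermediate policy $M_*$ into a ``state feedback vs.\ $M_*$'' piece (Lemma~\ref{lem:sufficiency}) plus a ``surrogate vs.\ actual at $M_*$'' piece (Lemma~\ref{thm:approximation} again), yielding the $6 G_c C_\delta^2 (\log T)^2$ contribution that appears in the theorem.

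For~\eqref{eq:regret_decomposition3}, I would upper bound the minimum by a fixed optimizer $M_{\rm opt} \in \calM$ and split the resulting sum $\sum_{t=0}^{T-1}[\surr_t(M_{t-1-H:t}) - f_t(M_{\rm opt})]$ into the first $H+3$ indices and the remainder. For the first $H+3$ terms I would iterate the coordinate-wise Lipschitz inequality of Lemma~\ref{lem:lipshitz} to morph $M_{t-1-H:t}$ into $(M_{\rm opt},\dots,M_{\rm opt})$ one coordinate at a time and then apply Lemma~\ref{lem:lipschitz_global} on $f_t(M_{\rm opt})$, using $\|M - M'\|_{\rm F} \le D$ for any $M, M' \in \calM$; a routine bookkeeping of the $H$ factors produces the $O(D\bar{L}_{T,\delta} (H_T+1)^3)$ contribution. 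For the tail $t \ge H+3$, I would invoke Proposition~\ref{prop:log_regret} at $M = M_{\rm opt}$ with $\alpha' = \wtilde{\alpha}$ supplied by Lemma~\ref{lem:strong_convex} (this is where diagonal strong stability and the non-degeneracy~\eqref{eq:nondegenerate} enter), $L_\rmc = \bar{L}_{T,\delta}$ from Lemma~\ref{lem:lipshitz}, $L_\rmf = \bar{L}_{T,\delta}$ from Lemma~\ref{lem:lipschitz_global}, and $\beta' = \bar{\beta}_{T,\delta}$ from Proposition~\ref{prop:hessian}. Since the algorithm's learning rate $\eta_t = 3/(\wtilde{\alpha}(t+1))$ matches the prescription of Proposition~\ref{prop:log_regret}, substituting these values into~\eqref{eq:regret_oco_log} recovers the first three lines of the claimed bound.

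The main obstacle is ensuring that the Lipschitz and Hessian bounds (Lemmas~\ref{lem:lipshitz},~\ref{lem:lipschitz_global}, and Proposition~\ref{prop:hessian}), each stated at a fixed $t$ and a fixed $M \in \calM$, hold \emph{uniformly} over $t$ and over every $M$ that may arise as an OGD iterate, as required before Proposition~\ref{prop:log_regret} can be applied along the random trajectory $\{M_t\}$. The key observation is that in each of those bounds the right-hand side depends on $M$ only through the uniform constraint $\|M^{[i]}\| \le 2\kappa_B\kappa^3(1-\gamma)^i$ from~\eqref{eq:admissible_M}, so the failure event is measurable with respect to $\{w_s\}$ alone and therefore covers every admissible $M$ simultaneously, including the adaptive iterates. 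A single union bound over Lemmas~\ref{lem:sufficiency},~\ref{thm:approximation},~\ref{lem:lipshitz},~\ref{lem:lipschitz_global}, and Proposition~\ref{prop:hessian}, with each $\coe$ chosen proportional to $C_\delta = O(1/\delta)$, then caps the total failure probability at $\delta$ and produces the high-probability bound stated in Theorem~\ref{thm:log_regret}.
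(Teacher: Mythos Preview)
Your proposal is correct and follows essentially the same route as the paper's proof: the same regret decomposition, the same handling of \eqref{eq:regret_decomposition1}--\eqref{eq:regret_decomposition2} via Lemmas~\ref{lem:sufficiency} and~\ref{thm:approximation}, the separate treatment of the first $H+3$ indices using the Lipschitz bounds (the paper morphs to $(M_t,\dots,M_t)$ and then applies Lemma~\ref{lem:lipschitz_global}, which is equivalent to your description), and then Proposition~\ref{prop:log_regret} for the tail with $\alpha'=\wtilde{\alpha}$, $L_\rmc,L_\rmf\le\bar L_{T,\delta}$, $\beta'=\bar\beta_{T,\delta}$, followed by a union bound with $\delta'=\delta/2$ and $C=C_\delta$. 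Your explicit remark that the failure events in Lemmas~\ref{lem:lipshitz}, \ref{lem:lipschitz_global} and Proposition~\ref{prop:hessian} are measurable in $\{w_s\}$ alone, hence uniform over $M\in\calM$, is exactly the point the paper uses implicitly.
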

That is, the regret bound of $ O((\log \ft)^8) $ can be attained with high probability for strongly convex and smooth cost and sub-Gaussian noise.

\begin{remark}
    A high-probability regret bound typically guarantees that, with probability at least $ 1-\delta $, the regret bound holds with an additional overhead of at most $ O(\log (1/\delta)) $ compared to the pseudo-regret bound. In contrast, for the online control of linear systems under unbounded noise, we observe in Appendix~\ref{app:log_delta} that obtaining a high-probability bound with only an $ O(\log (1/\delta)) $ overhead is likely impossible.
    \hfill $ \diamondsuit $
\end{remark}

\section{Conclusion}\label{sec13}

In this paper, we tackled the online control problem under unbounded noise.
We only require noise to have a finite fourth moment to obtain the $ \wtilde{O} (\sqrt{\ft}) $ regret bound that holds with high probability.
Moreover, the strong convexity of costs and the sub-Gaussianity of the noise enable us to achieve a logarithmic regret bound.
In other words, we revealed that in online control, not only the curvature of the cost functions but also the statistical properties of the noise play a crucial role in improving the regret bound, highlighting a clear contrast to the standard OCO.
Consequently, our results broaden the applicability of online control.
An important direction of future work is to extend our results to online control problems without the knowledge of systems~\citep{Hazan2020nonstochastic} and to a partially observed setting~\citep{Simchowitz2020improper}, and we expect that our approach would be applicable to such problems.

\newpage

\begin{appendices}


\section{Possible Impossibility of an $ O(\log (1/\delta)) $ Overhead for a High-probability Regret Bound}\label{app:log_delta}
In this appendix, we observe that it would be impossible to obtain a regret bound that holds with probability $ 1-\delta $ with only an additional $ O(\log (1/\delta)) $ factor for online control of linear systems under unbounded noise. To see this, let us consider the simplest case where the system is given by $ x_{t+1} = u_t + w_t $ and the cost $ c_t(x_t,u_t) = c_t(x_t) $ depends only on the state $ x_t $. The noise $ w_t $ is assumed to satisfy $ \bbE[w_t] = 0 $ and $ \bbE[\|w_t \|^2] \le \sigma_w^2 < \infty $.
	Since the system is static in this case, the regret minimization reduces to the standard OCO without memory.
	Then, by the standard argument for OCO, the regret of OGD can be bounded as follows:
	\begin{align*}
		\regret_T := \sum_{t=0}^{T-1} c_t (x_{t}) - \min_{u_t \equiv  u\in \bbU} \sum_{t=0}^{T-1} c_t (x_{t}) 
		\le \frac{1}{2\eta} D^2 + \frac{\eta G_c^2}{2} \sum_{t=0}^{T-1}   \| u_t + w_t \|^2 ,
	\end{align*}
	where $ \bbU $ is the set of admissible control inputs, $ D := \sup_{u,v\in \bbU} \|u - v \| $ is the diameter of $ \bbU $, $ \eta $ is the learning rate of the OGD, $ \{u_t\} $ is given by OGD $ u_{t} = u_{t-1} - \eta \nabla c_{t-1} (u_{t-2} + w_{t-2}) $, and we used the assumption that $ \| \nabla c_t (x) \| \le G_c \|x\| $ for any $ x \in \bbR^{n_{\rm x}} $ (Assumption~\ref{ass:cost}).
	When we only assume the finiteness of the second moment of $ w_t $, the best we can do to bound the second term of the right-hand side would be to use Markov's inequality. For any $ C > 0 $, we have
	$
		\bbP\left( \sum_{t=0}^{T-1} \|u_t + w_t\|^2 \ge C  \right) \le \frac{1}{C} \bbE\left[ \sum_{t=0}^{T-1} \|u_t + w_t \|^2 \right] = T\frac{\bar{u}^2 + \sigma_w^2}{C} ,
	$
	where $ \bar{u} $ satisfies $ \|u\| \le \bar{u} $ for any $ u\in \bbU $. 
	Therefore, for any $ C > 0 $, it holds that
	$
		\bbP ( \regret_T \le \frac{1}{2\eta} D^2 + \frac{\eta G_c^2}{2} C ) \ge 1 - T\frac{\bar{u}^2 + \sigma_w^2}{C} .
	$
	Let $ C = T\frac{\bar{u}^2 + \sigma_w^2}{\delta} $ for $ \delta \in (0,1) $. 
	Then, by setting $ \eta = \sqrt{\delta/T} $, we obtain 
	\begin{align*}
		\bbP \left( \regret_T \le \frac{D^2 + G_c^2 (\bar{u}^2 + \sigma_w^2 )}{2\sqrt{\delta}} \sqrt{T}  \right) \ge 1 - \delta .
	\end{align*}

	Thus, even for this simplest case, $ 1/\sqrt{\delta} $ appears in the regret upper bound. By a similar argument, we can see that even for sub-Gaussian noise, $ 1/\sqrt{\delta} $ appears in the bound. 
    This is why an additional $ O(\log (1/\delta)) $ overhead for a regret bound that holds with probability $ 1-\delta $, would be impossible for online control under unbounded noise.

\newpage

\section{Notation}\label{app:notation}
\small
\begin{description}[leftmargin=3em, labelindent=1em]

\item[$n_{\rm x}$] State dimension.
\item[$n_{\rm u}$] Input dimension.
\item[$\dime = \max\{n_\rmx, n_\rmu\}$] 
\item[$\sigma_w$] Scale parameter of noise $w_t$ satisfying Assumption~\ref{ass:noise_4}.
\item[$G_c$] Constant satisfying Assumption~\ref{ass:cost}.
\item[$A_K = A - BK$] Closed-loop system matrix under state-feedback control $u_t = -K x_t$.
\item[$(\kappa,\gamma)$] Parameters for strong stability.
\item[$\kappa_B = \max\{\|B\|, 1\}$] 
\item[$K^*$] Optimal state-feedback gain in hindsight.
\item[${\sigma}_w^{[1,4]} = \max\{\sigma_w, \sigma_w^4\}$] 
\item[$D = \frac{4\kappa_B \kappa^3 \sqrt{\dime}}{\gamma}$] 
\item[$ \alpha $] Parameter of the strong convexity of $ c_t $.
\item[$ \beta $] Parameter of the smoothness of $ c_t $.
\item[$M_t = \{M_t^{[0]},\ldots,M_t^{[H-1]}\}$] Parameter of a disturbance-action policy.
\item[$\calM$] Set of admissible parameters of disturbance-action policies:
\[
\left\{
  M = \{M^{[0]},\ldots,M^{[H-1]}\} : 
  \|M^{[i]} \| \le 2\kappa_B\kappa^3 (1-\gamma)^i,\ \forall i\in \bbra{0,H-1}
\right\}
\]
\item[$F_t(M_{t-1-H:t})$] Surrogate cost in Definition~\ref{def:surrogate}:
\[
c_t (y_t^K(M_{t-1-H:t-1}), v_t^K(M_{t-1-H:t}) )
\]
\item[$f_t (M) = \surr_t (M,\ldots,M)$] 
\item[$\pi_t (K,M_t)$] Disturbance-action policy choosing the control input as
\begin{align*}
    u_t = - K x_t + \sum_{i=1}^H M_t^{[i-1]} w_{t-i} .
\end{align*}
\item[$x_t^K(M_{0:t-1})$] State $x_t$ driven by $\{\pi_s (K,M_s) \}_{s=0}^{t-1}$.
\item[$x_t^K(M)$] Shortcut notation for $x_t^K(M_{0:t-1})$ with $M_t \equiv M$.
\item[$x_t^K$] $x_t^K(M_{0:t-1})$ with $M_t \equiv 0$.
\item[$u_t^K(M_{0:t})$] Input $u_t$ given by $\{\pi_s (K,M_s) \}_{s=0}^{t}$.
\item[$u_t^K(M)$] Shortcut notation for $u_t^K(M_{0:t})$ with $M_t \equiv M$.
\item[$u_t^K$] $u_t^K(M_{0:t})$ with $M_t \equiv 0$.
\item[$y_t^K, v_t^K$] Surrogate state and input, i.e., $x_t$ starting from $x_{t-1-H} = 0$ driven by $\{\pi_s (K,M_{s}) \}_{s=t-1-H}^{t-1}$ and the corresponding input $u_t$.
\item[$\Psi_{t,i}^{K,h}(M_{t-h:t}) = \tilA_K^i  \bone_{i\le h} + \sum_{j=0}^h \tilA_K^j B M_{t-j}^{[i-j-1]} \bone_{i-j \in \bbra{1,H}}$] 
\item[$ \Pi_{\calM} (M_t - \eta \nabla f_t (M_t)) = \argmin_{M' \in \calM} \| M_t - \eta \nabla f_t (M_t) - M' \|_{\rm F} $] Orthogonal projection.
\item[$\lceil a \rceil$] Ceiling function: smallest integer greater than or equal to $a$.
\item[$\|\cdot\|$] Spectral norm or Euclidean norm.
\item[$\|\cdot\|_{\rm F}$] Frobenius norm.
\item[$A \succ 0$ / $A \succeq 0$] Matrix $A$ is positive definite / semidefinite.
\item[$A_{k:\ell}$] Sequence $\{A_k, A_{k+1}, \ldots, A_\ell\}$.
\item[$\bbra{k,\ell}$] Index set $\{k, k+1, \ldots, \ell\}$.
\item[$ \wtilde{\alpha} = \alpha \usigma^2 \gamma^2/(36 \kappa^{10}) $] 
\item[$ \usigma $] Parameter satisfying Assumption~\ref{ass:subgauss}.
\end{description}

\section{Proof of Lemma~\ref{lem:sufficiency}}\label{app:sufficiency}

Before giving the proof, we introduce a lemma.
\begin{lemma}[A modified version of Lemma 5.4 of \citealt{Agarwal2019}]
    Let $ K $ be $ (\kappa,\gamma) $-strongly stable for system~\eqref{eq:system} and $ M_t \in \calM $ for all $ t\in \bbZ_{\ge 0} $. Then, for any $ h,i\in \bbZ_{\ge 0} $ and $ t \ge h $, it holds that
    \begin{align}
        \|\Psi_{t,i}^{K,h} (M_{t-h:t}) \| &\le \kappa^2 (1-\gamma)^i \bone_{i\le h} + 2H\kappa_B^2 \kappa^5 (1-\gamma)^{i-1} . \nonumber
    \end{align}
    \hfill $ \diamondsuit $
\end{lemma}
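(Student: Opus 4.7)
The plan is to apply the triangle inequality directly to the definition of $\Psi_{t,i}^{K,h}$ in \eqref{eq:psi_def} and bound each factor of each summand using the three ingredients already available: the strong-stability estimate $\|\tilA_K^j\| \le \kappa^2(1-\gamma)^j$ from \eqref{eq:strong_stable_inequality}, the inequality $\|B\|\le \kappa_B$ from the definition of $\kappa_B$, and the admissibility condition $\|M^{[i]}\|\le 2\kappa_B\kappa^3(1-\gamma)^i$ defining $\calM$ in \eqref{eq:admissible_M}.

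The first summand $\tilA_K^i \bone_{i\le h}$ is handled by strong stability alone, contributing $\kappa^2 (1-\gamma)^i \bone_{i\le h}$, which is exactly the first term of the claimed bound. For the sum $\sum_{j=0}^h \tilA_K^j B M_{t-j}^{[i-j-1]} \bone_{i-j\in\bbra{1,H}}$, I would multiply the three norm bounds term by term. The key observation is that the factors $(1-\gamma)^j$ (from $\tilA_K^j$) and $(1-\gamma)^{i-j-1}$ (from $M_{t-j}^{[i-j-1]}$) telescope to $(1-\gamma)^{i-1}$, so the per-summand estimate $2\kappa_B^2 \kappa^5 (1-\gamma)^{i-1}$ is independent of $j$.

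The final step is to count surviving summands. The indicator $\bone_{i-j\in\bbra{1,H}}$ restricts $j$ to the interval $\{i-H,\ldots,i-1\}$, which has length $H$; intersecting with $\{0,\ldots,h\}$ can only reduce the count, so at most $H$ terms contribute. Multiplying by the uniform per-term bound yields the second summand $2H\kappa_B^2\kappa^5(1-\gamma)^{i-1}$ and completes the estimate.

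There is essentially no obstacle in this argument: all the required norm bounds are either given in the hypotheses or are immediate from the definitions of $\calM$ and of a strongly stable matrix. The only small point of care is the telescoping of the $(1-\gamma)$-exponents together with the combinatorial count of nonzero indicators, which jointly produce the linear-in-$H$ factor rather than the $(h+1)$-factor that a naive bound of the whole sum would give.
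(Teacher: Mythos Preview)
Your argument is correct and is exactly the standard route: triangle inequality on \eqref{eq:psi_def}, termwise bounds via \eqref{eq:strong_stable_inequality}, $\|B\|\le\kappa_B$, and the $\calM$-constraint, with the exponents telescoping to $(1-\gamma)^{i-1}$ and the indicator restricting the count to at most $H$ summands. The paper does not supply its own proof of this lemma; it merely states it as a modification of \citet[Lemma~5.4]{Agarwal2019}, whose proof proceeds in precisely the way you outline.
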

It follows immediately from the above lemma that
\begin{align}
    \|\Psi_{t,i}^{K,h} (M_{t-h:t}) \| \le (2H+1)\kappa_B^2 \kappa^5 (1-\gamma)^{i-1}  , \label{eq:psi_inequality}
\end{align}
which we will use throughout the regret analysis in this paper.

First, by \eqref{eq:strong_stable_inequality} and the inequality $ \|XY\| \le \|X\|\|Y\| $ for any matrices $ X $ and $ Y $ of compatible dimensions, we have
\begin{align}
    \|M_*^{[i]} \| \le \| K - K^* \| \|A_{K^*}^i \|  \le 2\kappa^3 (1-\gamma)^i \le 2 \kappa_B \kappa^3 (1-\gamma)^i , ~~ \forall i \in \bbra{0,H-1} , \nonumber
\end{align}
which means $ M_* \in \calM $.

Next, we derive \eqref{eq:M*_bound}.
By the triangle inequality, we have
\begin{align}
    & \sum_{t=0}^{\ft-1} \left| c_t (x_t^K(M_*), u_t^K (M_*)) - c_t (x_t^{K^*} , u_t^{K^*} ) \right|  \nonumber\\
    &\le \underbrace{ \sum_{t=0}^{\ft-1} \left| c_t (x_t^K(M_*), u_t^K (M_*)) - c_t (x_t^{K^*} , u_t^K (M_*)) \right|  }_{=: \term_{1}} + \underbrace{ \sum_{t=0}^{\ft-1}\left| c_t (x_t^{K^*} , u_t^K (M_*)) - c_t (x_t^{K^*} , u_t^{K^*} ) \right| }_{=: \term_{2}} . \label{eq:L1_L2}
\end{align}
A differentiable convex function $ g : \bbR^n \rightarrow \bbR $ satisfies $ g(y) - g(x) \ge \nabla g(x)^\top (y-x) $ and $ g(x) - g(y) \ge \nabla g(y)^\top (x-y) $ for any $ x,y \in \bbR^n $. Hence, it holds that for any $ x,y \in \bbR^n $,
\begin{align*}
  |g(x) - g(y)| &\le \max \left\{ |\nabla g(y)^\top (x-y)|, |\nabla g(x)^\top (y-x)|  \right\} \nonumber\\
  &\le \max \left\{ \|\nabla g(x) \|, \| \nabla g(y) \| \right\} \| x- y \|  . 
\end{align*}
Using the above inequality, Assumption~\ref{ass:cost}, and H\"{o}lder's inequality, we obtain
\begin{align}
    \term_{1} &\le \sum_{t=0}^{\ft-1} \max \left\{ \| \nabla_x c_t(x_t^K(M_*), u_t^K(M_*)) \|, \| \nabla_x c_t(x_t^{K^*}, u_t^K(M_*)) \|  \right\} \| x_t^K (M_*) - x_t^{K^*} \| \nonumber\\
    &\le G_c \sum_{t=0}^{\ft-1} \max \left\{ \|x_t^K(M_*) \|, \| x_t^{K^*} \|   \right\} \| x_t^K (M_*) - x_t^{K^*} \| \nonumber\\
    &\le G_c \left( \sum_{t=0}^{\ft-1} \left(\| x_t^K (M_*) \| + \|x_t^{K^*} \| \right)  \right) \sum_{t=0}^{\ft-1}\| x_t^K (M_*) - x_t^{K^*} \| . \label{eq:term1_bound}
\end{align} 

In what follows, we will bound the components in \eqref{eq:term1_bound}.
By Markov's inequality, for any $ \thr_1 > 0 $,
\begin{align}
    \bbP \left( \sum_{t=0}^{\ft-1} \left( \| x_t^K(M_*) \| + \| x_t^{K^*} \| \right) \ge T \thr_1  \right) \le \frac{1}{T\thr_1} \bbE\left[ \sum_{t=0}^{\ft-1} \left( \| x_t^K(M_*) \| + \| x_t^{K^*} \| \right) \right] . \label{eq:markov_1}
\end{align}
As shown in the proof of Lemma~5.2 of \citet{Agarwal2019}, $ x_t^K (M_*) $ can be written as
\begin{align}
    x_{t}^K (M_*) = 
         \sum_{i=0}^H \tilA_{K^*}^i w_{t-1-i} + \sum_{i=H+1}^{t-1} \Psi_{t-1,i}^{K,t-1}(M_*) w_{t-1-i} ,  ~~ \forall t \ge 0 , \label{eq:x_M}
\end{align}
where $ w_t := 0, \forall t < 0 $ for notational simplicity.
When $ t\in \bbra{0,H+1} $, the third term is understood to be zero, and it follows from $ w_t = 0, \forall t < 0 $ that
\begin{align}
    x_t^{K} (M_*) =  \sum_{i=0}^{t-1} A_{K^*}^i w_{t-1-i} , ~~ \forall t \in \bbra{0,H+1} .\label{eq:x_M_small}
\end{align}
By \eqref{eq:x_M} and Assumption~\ref{ass:noise_4}-(i), for any $ t \ge 0 $, it holds that
\begin{align}
  \bbE\left[ \| x_t^K (M_*) \| \right] &= \bbE\left[ \left\|   \sum_{i=0}^H \tilA_{K^*}^i w_{t-1-i} + \sum_{i=H+1}^{t-1} \Psi_{t-1,i}^{K,t-1}(M_*) w_{t-1-i}       \right\|   \right] \nonumber\\
  &= \bbE\left[ \sum_{i=0}^H \|\tilA_{K^*}^i w_{t-1-i}\| + \sum_{i=H+1}^{t-1} \|\Psi_{t-1,i}^{K,t-1}(M_*) w_{t-1-i}\|    \right] \nonumber\\
  &= \bbE\left[  \sum_{i=0}^H \|\tilA_{K^*}^i \| \| w_{t-1-i}\| + \sum_{i=H+1}^{t-1} \|\Psi_{t-1,i}^{K,t-1}(M_*) \| \|  w_{t-1-i}\|    \right] \nonumber\\
  &\le   \sigma_w \left( \sum_{i=0}^H \kappa^2 (1-\gamma)^i + \sum_{i=H+1}^{t-1}  (2H+1)\kappa_B^2 \kappa^5 (1-\gamma)^{i-1}  \right)  , \label{eq:xM_bound}
\end{align}
where in the last line, we used \eqref{eq:strong_stable_inequality} and \eqref{eq:psi_inequality}.
Therefore, for $ t \ge H+2 $, we obtain
\begin{align}
    \bbE\left[ \| x_t^K (M_*) \| \right] &\le  \sigma_w \left(  \kappa^2 (1-\gamma)^H + \sum_{i=1}^{t-1}  (2H+1)\kappa_B^2 \kappa^5 (1-\gamma)^{i-1}  \right) \nonumber\\
    &\le \sigma_w \kappa^2 (1-\gamma)^H + \frac{\sigma_{w}  (2H+1)\kappa_B^2 \kappa^5}{\gamma} .\label{eq:xM2_bound}
\end{align}
Similarly, for $ t\in \bbra{0,H+1} $, \eqref{eq:x_M_small} yields that
\begin{align}
    \bbE\left[ \| x_t^K (M_*) \| \right] &\le  \sigma_w \sum_{i=0}^{t-1} \kappa^2 (1-\gamma)^{i} \le  \frac{\sigma_{w} \kappa^2}{\gamma} , \label{eq:xM2_bound_small}
\end{align}
which can be bounded from above by the right-hand side of \eqref{eq:xM2_bound}.

On the other hand, $ x_t^{K^*}$ following $ x_{t+1}^{K^*} = \tilA_{K^*} x_t^{K^*} + w_t $ \axv{for any $ t\ge 0 $} and $ x_0^{K^*} = 0 $ admits the expression:
\begin{align}
    x_{t}^{K^*} &=  \sum_{i=0}^{t-1} \tilA_{K^*}^i w_{t-1-i}  ,  ~~  \forall t \ge 1 . \label{eq:x_t^K_solution}
\end{align}
Then, similar to \eqref{eq:xM2_bound_small}, we have
\begin{align}
    \bbE\left[ \|x_t^{K^*} \|  \right] \le  \frac{\sigma_{w} \kappa^2}{\gamma}  , ~~ \forall t \ge 0 . \label{eq:x_bound}
\end{align}

By \eqref{eq:markov_1}, \eqref{eq:xM2_bound}, \eqref{eq:xM2_bound_small}, and \eqref{eq:x_bound}, for any $ \thr_1 > 0 $ and $ H, \ft \in \bbZ_{>0} $, we have
\begin{align}
    &\bbP \left( \sum_{t=0}^{\ft-1} \left( \| x_t^K(M_*) \| + \| x_t^{K^*} \| \right) \ge T \thr_1  \right) \nonumber\\
    &\le \frac{1}{T\thr_1}  \sum_{t=0}^{\ft-1} \left( \sigma_w \kappa^2 (1-\gamma)^H + \frac{\sigma_{w}  (2H+1)\kappa_B^2 \kappa^5}{\gamma}  + \frac{\sigma_{w} \kappa^2}{\gamma}  \right) \nonumber\\
    &\le \frac{1}{\thr_1} \left( \sigma_w \kappa^2 (1-\gamma)^H + \frac{2\sigma_{w}  (H+1)\kappa_B^2 \kappa^5}{\gamma} \right) . \label{eq:markov_12}
\end{align}
For $ H \ge 2\gamma^{-1} \log T $, it holds that $ (1-\gamma)^H T^2 \le 1 $. In fact, noting that $ \log (1-\gamma) \le -\gamma $, we have
    \begin{align*}
        H \ge 2\gamma^{-1} \log T &\ge - \frac{2\log T}{\log (1-\gamma)} = \log_{1-\gamma} \frac{1}{T^2} ,
    \end{align*}
which implies $ (1-\gamma)^H T^2 \le 1 $.
For any $ T \ge 3 > \Ee $ and $ \coe > 0 $, by letting $ H = \lceil 2\gamma^{-1} \log \ft  \rceil $ and $ \thr_1 = \coe \log \ft $ in \eqref{eq:markov_12}, we obtain
\begin{align}
    \bbP \left( \sum_{t=0}^{\ft-1} \left( \| x_t^K(M_*) \| + \| x_t^{K^*} \| \right) \ge \coe \ft \log \ft   \right)
    &\le \frac{1}{\coe} \left( \frac{\sigma_w \kappa^2}{\ft^2} + \frac{2\sigma_{w}  (2\gamma^{-1} + \frac{2}{\log \ft})\kappa_B^2 \kappa^5}{\gamma} \right) \nonumber\\
    &\le \frac{9\sigma_{w} \kappa_B^2 \kappa^5}{\coe\gamma^2}  . \label{eq:D1_bound1}
\end{align}

We next consider the last component in \eqref{eq:term1_bound}.
By Markov's inequality, for any $ \thr_2 > 0 $,
\begin{align}
    \bbP \left( \sum_{t=0}^{\ft-1} \| x_t^K(M_*) - x_t^{K^*} \| \ge \frac{\thr_2}{\ft}  \right) \le \frac{\ft}{\thr_2} \bbE\left[ \sum_{t=0}^{\ft-1} \| x_t^K(M_*) - x_t^{K^*} \| \right] . \label{eq:markov_2}
\end{align}
Moreover, by \eqref{eq:x_M},~\eqref{eq:x_M_small}, and \eqref{eq:x_t^K_solution}, we can rewrite $ \bbE[ \| x_{t}^K (M_*) - x_{t}^{K^*} \| ] $ as
\begin{align}
    \bbE\left[ \| x_{t}^K (M_*) - x_{t}^{K^*} \| \right] =
    \begin{cases}
     \bbE\left[ \left\|\sum_{i=H+1}^{t-1} \left(  \Psi_{t-1,i}^{K,t-1} (M_*) - \tilA_{K^*}^i \right) w_{t-1-i}  \right\|     \right], & t \ge H + 2, \\
      0, & t\in \bbra{0,H+1},
    \end{cases} \nonumber
\end{align}
which yields for any $ t \ge H+2 $,
\begin{align}
  \bbE\left[ \| x_{t}^K (M_*) - x_{t}^{K^*} \| \right] 
  &\le \bbE\left[  \sum_{i=H+1}^{t-1} \left(  \| \Psi_{t-1,i}^{K,t-1} (M_*) \| + \| \tilA_{K^*}^i \| \right) \|w_{t-1-i}\|      \right] \nonumber\\
  &\le   \sigma_w \sum_{i=H+1}^{t-1} \left((2H+1)\kappa_B^2 \kappa^5 (1-\gamma)^{i-1} + \kappa^2 (1-\gamma)^i \right)  \nonumber\\
  &\le  2\sigma_w \sum_{i=H+1}^{t-1} (H+1)\kappa_B^2 \kappa^5 (1-\gamma)^{i-1}   \nonumber\\
  &\le \frac{2\sigma_{w}(H+1)\kappa_B^2 \kappa^5 (1-\gamma)^H }{\gamma} . \label{eq:xM_x_bound}
\end{align} 
Therefore, by \eqref{eq:markov_2},
\begin{align}
    \bbP \left( \sum_{t=0}^{\ft-1} \| x_t^K(M_*) - x_t^{K^*} \| \ge \frac{\thr_2}{\ft}  \right) &\le  \frac{\ft}{\thr_2}  \sum_{t=H+2}^{\ft-1}\frac{2\sigma_{w}(H+1)\kappa_B^2 \kappa^5 (1-\gamma)^H }{\gamma} \nonumber\\
    &\le \frac{2\sigma_{w}(H+1)\kappa_B^2 \kappa^5 (1-\gamma)^H \ft^2 }{\thr_2 \gamma} . \nonumber
\end{align}
For any $ \ft \ge 3 $ and $ \coe > 0 $, by letting $ H = \lceil 2\gamma^{-1} \log \ft  \rceil $ and $ \thr_2 = \coe \log \ft $, we get
\begin{align}
    \bbP \left( \sum_{t=0}^{\ft-1} \| x_t^K(M_*) - x_t^{K^*} \| \ge \frac{\coe \log \ft}{\ft}  \right) 
    \le \frac{2\sigma_{w}(2\gamma^{-1} + \frac{2}{\log \ft})\kappa_B^2 \kappa^5 }{\coe \gamma}
    \le \frac{8\sigma_{w}\kappa_B^2 \kappa^5 }{\coe \gamma^2}. \label{eq:D1_bound2}
\end{align}
In summary, \eqref{eq:term1_bound}, \eqref{eq:D1_bound1}, and \eqref{eq:D1_bound2} yield the following bound of $ \term_1 $:
\begin{align}
    \bbP\left( \term_1 \le G_c \coe^2 (\log \ft)^2  \right) \ge 1 - \frac{17\sigma_{w} \kappa_B^2 \kappa^5}{\coe\gamma^2} , ~~ \forall \coe > 0, \ \ft \ge 3 . \label{eq:D1_bound3}
\end{align}

Similarly to $ \term_1 $, we next bound $ \term_2 $ in \eqref{eq:L1_L2} based on
\begin{align}
    \term_2 \le G_c \left( \sum_{t=0}^{\ft-1} \left( \|u_t^K(M_*) \| + \| u_t^{K^*} \| \right)  \right) \sum_{t=0}^{\ft-1} \| u_t^K(M_*) - u_t^{K^*} \| . \label{eq:D2_upper}
\end{align}
For $ t \in \bbra{0,H} $, by \eqref{eq:x_M_small},~\eqref{eq:x_t^K_solution}, and $ w_t = 0, \forall t < 0 $, the disturbance-action policy~\eqref{eq:disturbance-action} with $ M_* $ in \eqref{eq:M*} yields
\begin{align}
    u_t^K (M_*) &= - K \sum_{i=0}^{t-1} A_{K^*}^i w_{t-1-i} + (K-K^*) \sum_{i=0}^{t-1} A_{K^*}^i w_{t-1-i} \nonumber\\
    &= - K^* \sum_{i=0}^{t-1} A_{K^*}^i w_{t-1-i} \label{eq:uM_small_2}\\
    &= u_t^{K^*} . \label{eq:uM_small}
\end{align}
In view of this, we consider separately the two cases $ t \ge H+1 $ and $ t\in \bbra{0,H} $.

By \eqref{eq:x_M}, for $ t \ge H +1 $, it holds that
\begin{align}
    \bbE\left[ \|u_t^K (M_*)\| \right] &= \bbE\left[ \left\| -K x_t^K (M_*) + \sum_{i=1}^H M_*^{[i-1]} w_{t-i}  \right\| \right] \nonumber\\
    &\le \bbE\left[ \kappa \|x_t^K(M_*) \| + \sum_{i=1}^H 2\kappa_B \kappa^3 (1-\gamma)^{i-1} \|w_{t-i} \|  \right] \nonumber\\
    &\le \sigma_w \kappa^3 (1-\gamma)^H + \frac{\sigma_{w}  (2H+1)\kappa_B^2 \kappa^6}{\gamma} +  \frac{2\sigma_w \kappa_B\kappa^3}{\gamma}  \nonumber\\
    &\le \sigma_w \kappa^3 (1-\gamma)^{H} + \frac{ \sigma_{w} (2H+3)\kappa_B^2 \kappa^{6}}{\gamma} . \label{eq:uM_bound}
\end{align}

By \eqref{eq:uM_small_2}, for $ t \in \bbra{0,H} $, we have
\begin{align}
    \bbE\left[ \|u_t^K (M_*)\| \right] &\le  \bbE\left[\sum_{i=0}^{t-1} \|K^* \| \|A_{K^*}^i \| \|w_{t-1-i} \| \right] \nonumber\\
    &\le  \frac{\sigma_{w} \kappa^3 }{\gamma} \le \sigma_w \kappa^3 (1-\gamma)^{H} + \frac{ \sigma_{w} (2H+3)\kappa_B^2 \kappa^{6}}{\gamma}  . \label{eq:uM_moment_small}
\end{align}
Similarly, by \eqref{eq:x_t^K_solution},
\begin{align}
    \bbE\left[\|u_t^{K^*} \| \right] = \bbE\left[ \left\|  K^* \sum_{i=0}^{t-1} \tilA_{K^*}^i w_{t-1-i} \right\|  \right] \le \frac{\sigma_{w} \kappa^3}{\gamma}, ~~ \forall t \ge 0.  \label{eq:u_bound}
\end{align}
For the last component of \eqref{eq:D2_upper}, similar to \eqref{eq:xM_x_bound}, for $ t \ge H+1 $, we have
\begin{align}
    \bbE\left[ \| u_t^K (M_*) - u_t^{K^*} \| \right] &= \bbE\left[ \left\| K^* x_t^{K^*} - Kx_t^K (M_*) + \sum_{i=1}^H (K - K^*) \tilA_{K^*}^{i-1} w_{t-i} \right\| \right] \nonumber\\
    &= \bbE\left[ \left\|  (K^* - K)A_{K^*}^H w_{t-1-H} + \sum_{i=H+1}^{t-1} \left(K^*A_{K^*}^i - K \Psi_{t-1,i}^{K,t-1}(M_*) \right) w_{t-1-i}     \right\|  \right] \nonumber\\
    &\le  2\sigma_w \kappa^3 (1-\gamma)^{H} + \sigma_w \sum_{i=H+1}^{t-1} \left( \kappa^3 (1-\gamma)^i + (2H+1)\kappa_B^2 \kappa^6 (1-\gamma)^{i-1}  \right)        \nonumber\\
    &\le  2\sigma_w \kappa^3 (1-\gamma)^{H} + \sigma_w \frac{2(H+1)\kappa_B^2 \kappa^6 (1-\gamma)^H}{\gamma}    \nonumber\\
    &\le \frac{2\sigma_{w} (H+2) \kappa_B^2 \kappa^{6} (1-\gamma)^{H}}{\gamma} .
    \label{eq:uM_u_bound}
\end{align}
For $ t\in \bbra{0,H} $, it follows from \eqref{eq:uM_small} that
\begin{align}
    \bbE\left[ \| u_t^K (M_*) - u_t^{K^*} \| \right] &= 0 . \label{eq:u_diff_small}
\end{align}

Hence, by Markov's inequality and \eqref{eq:uM_bound}--\eqref{eq:u_diff_small}, for any $ \thr_3>0 $ and $ \thr_4 > 0 $, it holds that
\begin{align}
    \bbP \left( \sum_{t=0}^{\ft-1} \left( \| u_t^K(M_*) \| + \| u_t^{K^*} \| \right) \ge T \thr_3  \right) &\le \frac{1}{T\thr_3}  \sum_{t=0}^{\ft-1} \left( \sigma_w \kappa^3 (1-\gamma)^{H} + \frac{ \sigma_{w} (2H+3)\kappa_B^2 \kappa^{6}}{\gamma} + \frac{\sigma_{w} \kappa^3}{\gamma} \right) \nonumber\\
    &\le \frac{1}{\thr_3} \left( \sigma_w \kappa^3 (1-\gamma)^H + \frac{2\sigma_{w}  (H+2)\kappa_B^2 \kappa^6}{\gamma} \right) ,\nonumber\\
    \bbP \left( \sum_{t=0}^{\ft-1} \| u_t^K(M_*) - u_t^{K^*} \| \ge \frac{\thr_4}{\ft}  \right) &\le  \frac{\ft}{\thr_4}  \sum_{t=H+1}^{\ft-1}\frac{2\sigma_{w} (H+2) \kappa_B^2 \kappa^{6} (1-\gamma)^{H}}{\gamma} \nonumber\\
    &\le \frac{2\sigma_{w}(H+2)\kappa_B^2 \kappa^6 (1-\gamma)^H \ft^2 }{\thr_4 \gamma} . \nonumber
\end{align}
For any $ \ft \ge 3 $ and $ \coe > 0 $, by setting $ H = \lceil 2\gamma^{-1} \log \ft  \rceil $, $ \thr_3 = \thr_4 = \coe \log \ft $ in the above inequalities, we get
\begin{align}
    \bbP \left( \sum_{t=0}^{\ft-1} \left( \| u_t^K(M_*) \| + \| u_t^{K^*} \| \right) \ge \coe \ft \log \ft  \right) 
    &\le \frac{1}{\coe} \left(  \frac{\sigma_w \kappa^3}{\ft^2 \log \ft} + \frac{10\sigma_{w}  \kappa_B^2 \kappa^6}{\gamma^2} \right) \nonumber\\
    &\le \frac{11\sigma_{w}  \kappa_B^2 \kappa^6}{\coe \gamma^2},  \nonumber\\
    \bbP \left( \sum_{t=0}^{\ft-1} \| u_t^K(M_*) - u_t^{K^*} \| \ge \frac{\coe \log \ft}{\ft}  \right) 
    &\le \frac{10\sigma_{w}\kappa_B^2 \kappa^6  }{\coe  \gamma^2} . \nonumber
\end{align}
By combining the above with \eqref{eq:D2_upper}, we obtain
\begin{align}
    \bbP\left( \term_2 \le G_c \coe^2 (\log \ft)^2 \right) \ge 1 - \frac{21\sigma_{w}  \kappa_B^2 \kappa^6}{\coe \gamma^2} , ~~ \forall \coe > 0, \ \ft \ge 3 . \label{eq:D2_bound}
\end{align}
Lastly, by \eqref{eq:L1_L2},~\eqref{eq:D1_bound3}, and \eqref{eq:D2_bound}, we arrive at \eqref{eq:M*_bound}, which completes the proof.

\section{Proof of Lemma~\ref{thm:approximation}}\label{app:approximation}

As in the proof of Lemma~\ref{lem:sufficiency}, we derive \eqref{eq:surrogate_actual_bound} based on the following inequality:
\begin{align}
    &\sum_{t=0}^{\ft-1} \left| \surr_t (M_{t-1-H:t}) -c_t (x_t^K (M_{0:t-1}), u_t^K(M_{0:t}))      \right|  \nonumber\\
    &\le \underbrace{\sum_{t=0}^{\ft-1}\left| c_t (y_t^K(M_{t-1-H:t-1}), v_t^K (M_{t-1-H:t})) - c_t (x_t^K(M_{0:t-1}), v_t^K (M_{t-1-H:t})) \right|  }_{=: \term_{3}} \nonumber\\
    &\quad+ \underbrace{\sum_{t=0}^{\ft-1} \left| c_t (x_t^K(M_{0:t-1}), v_t^K (M_{t-1-H:t}))- c_t (x_t^K (M_{0:t-1}), u_t^K(M_{0:t})) \right| }_{=: \term_{4}} . \label{eq:diff_surr_bound}
\end{align}
Similar to \eqref{eq:term1_bound}, we have
\begin{align}
    \term_{3} &\le G_c \left(\sum_{t=0}^{\ft-1} \left( \|y_t^K(M_{t-1-H:t-1})\|  + \|x_t^K(M_{0:t-1})\| \right)  \right) \sum_{t=0}^{\ft-1} \|y_t^K(M_{t-1-H:t-1}) - x_t^K(M_{0:t-1}) \| , \label{eq:L3} \\
    \term_{4} &\le G_c \left(\sum_{t=0}^{\ft-1} \left( \|v_t^K(M_{t-1-H:t})\| +\|u_t^K(M_{0:t})\| \right)  \right) \sum_{t=0}^{\ft-1} \|v_t^K(M_{t-1-H:t}) - u_t^K(M_{0:t}) \| , \label{eq:L4}
\end{align}
and we will use Markov's inequality for bounding $ \term_3 $ and $ \term_4 $.

For \eqref{eq:L3}, by the expression of $ y_t^K(M_{t-1-H:t-1}) $ in \eqref{eq:surrogate_y} and the inequality~\eqref{eq:psi_inequality}, we obtain for any $ t \ge 0 $,
\begin{align}
    \bbE\left[ \|y_t^K (M_{t-1-H:t-1})\|  \right] &= \bbE\left[ \left\| \sum_{i=0}^{2H} \Psi_{t-1,i}^{K,H} (M_{t-1-H:t-1}) w_{t-1-i} \right\| \right] \nonumber\\
    &\le \sigma_w \sum_{i=0}^{2H} (2H+1) \kappa_B^2 \kappa^{5} (1-\gamma)^{i-1} \nonumber\\
    &\le \frac{ \sigma_w (2H+1) \kappa_B^2 \kappa^{5} }{\gamma(1-\gamma)} . \label{eq:y_bound}
\end{align}
By Proposition~\ref{prop:transfer}, $ x_t^K(M_{0:t-1}) $ in \eqref{eq:L3} can be written as
\begin{align}
    x_t^K(M_{0:t-1}) = \sum_{i=0}^{t-1} \Psi_{t-1,i}^{K,t-1} (M_{0:t-1}) w_{t-1-i}  , ~~ \forall t \ge 0 .\label{eq:xM_solution_general}
\end{align}
Therefore, we have for any $ t\ge 0 $,
\begin{align}
  \bbE\left[ \|x_t^K(M_{0:t-1})\|  \right] &= \bbE\left[ \left\|  \sum_{i=0}^{t-1} \Psi_{t-1,i}^{K,t-1} (M_{0:t-1}) w_{t-1-i}     \right\|  \right] \nonumber\\
  &\le   \sigma_w \sum_{i=0}^{t-1} (2H+1) \kappa_B^2 \kappa^{5} (1-\gamma)^{i-1} \nonumber\\
    &\le \frac{ \sigma_{w} (2H+1) \kappa_B^2 \kappa^{5} }{\gamma(1-\gamma)}, \label{eq:xM_bound_general}
\end{align}
which also yields for any $ t\ge 0 $,
\begin{align}
    \bbE\left[ \| y_t^K (M_{t-1-H:t-1}) - x_t^K(M_{0:t-1}) \|  \right] &= \bbE\left[ \|\tilA_K^{H+1} x_{t-1-H}^K (M_{0:t-2-H}) \|   \right] \nonumber\\
        &\le \frac{\sigma_{w} (2H+1) \kappa_B^2 \kappa^{7} (1-\gamma)^{H} }{\gamma} .\label{eq:y_x_bound}
\end{align}

By Markov's inequality, \eqref{eq:y_bound}, \eqref{eq:xM_bound_general}, and \eqref{eq:y_x_bound}, for any $ \thr_1 > 0 $ and $ \thr_2 > 0 $, it holds that
\begin{align}
    &\bbP\left(\sum_{t=0}^{\ft-1} \left( \|y_t^K(M_{t-1-H:t-1})\|  + \|x_t^K(M_{0:t-1})\| \right) \ge \ft \thr_1  \right) \nonumber\\
    &\le \frac{1}{\ft \thr_1} \bbE\left[\sum_{t=0}^{\ft-1} \left( \|y_t^K(M_{t-1-H:t-1})\|  + \|x_t^K(M_{0:t-1})\| \right) \right] \le  \frac{2 \sigma_w (2H+1) \kappa_B^2 \kappa^{5} }{\thr_1\gamma(1-\gamma)} , \nonumber\\
    &\bbP\left(\sum_{t=0}^{\ft-1} \|y_t^K(M_{t-1-H:t-1}) - x_t^K(M_{0:t-1}) \| \ge \frac{\thr_2}{\ft} \right) \nonumber\\
    &\le \frac{\ft}{\thr_2} \bbE\left[ \sum_{t=0}^{\ft-1} \| y_t^K (M_{t-1-H:t-1}) - x_t^K(M_{0:t-1}) \|  \right] \le \frac{\sigma_{w} (2H+1) \kappa_B^2 \kappa^{7} (1-\gamma)^{H}\ft^2 }{\thr_2 \gamma} . \nonumber
\end{align}
For any $ \ft \ge 3 $ and $ \coe > 0 $, by letting $ H = \lceil 2\gamma^{-1} \log \ft  \rceil $, $ \thr_1 = \thr_2 = \coe \log \ft $, we obtain
\begin{align}
    \bbP\left(\sum_{t=0}^{\ft-1} \left( \|y_t^K(M_{t-1-H:t-1})\|  + \|x_t^K(M_{0:t-1})\| \right) \ge \coe \ft \log \ft  \right) &\le  \frac{2 \sigma_w (\frac{4\log \ft}{\gamma}+3) \kappa_B^2 \kappa^{5} }{\coe \gamma(1-\gamma) \log \ft} \nonumber\\ 
    &\le  \frac{14 \sigma_w  \kappa_B^2 \kappa^{5} }{\coe \gamma^2(1-\gamma)} , \nonumber\\
    \bbP\left(\sum_{t=0}^{\ft-1} \|y_t^K(M_{t-1-H:t-1}) - x_t^K(M_{0:t-1}) \| \ge \frac{\coe \log \ft}{\ft} \right) &\le \frac{\sigma_{w} (\frac{4\log \ft}{\gamma}+3) \kappa_B^2 \kappa^{7} }{\coe \gamma \log \ft} \nonumber\\
    &\le \frac{7\sigma_{w}  \kappa_B^2 \kappa^{7}  }{\coe \gamma^2} . \nonumber
\end{align}
Therefore, it follows from \eqref{eq:L3} that
\begin{align}
    \bbP\left( \term_3 \le G_c \coe^2 (\log \ft)^2  \right) \ge 1 - \frac{21 \sigma_w  \kappa_B^2 \kappa^{7} }{\coe \gamma^2(1-\gamma)} . \label{eq:term3_bound}
\end{align}

Next, we bound $ \term_{4} $ in \eqref{eq:diff_surr_bound}. By the expression of $ v_t^K (M_{t-1-H:t}) $ in \eqref{eq:surrogate_v} and \eqref{eq:y_bound}, and by noting that $ M_t \in \calM $, we have for any $ t \ge 0 $,
\begin{align}
    \bbE\left[\|v_t^K(M_{t-H-1:t})\|  \right] &= \bbE\left[ \left\| -K y_t^K(M_{t-H-1:t-1}) + \sum_{i=1}^{H} M_t^{[i-1]} w_{t-i}    \right\|  \right] \nonumber\\
    &\le \frac{ \sigma_w (2H+1) \kappa_B^2 \kappa^{6} }{\gamma(1-\gamma)} +  2 \sigma_w \kappa_B \kappa^3 \sum_{i=1}^H (1-\gamma)^{i-1}  \nonumber\\
    &\le \frac{ \sigma_w (2H+1) \kappa_B^2 \kappa^{6} }{\gamma(1-\gamma)} +  \frac{2 \sigma_w \kappa_B \kappa^3}{\gamma} \nonumber\\
    &\le \frac{ \sigma_w (2H+3) \kappa_B^2 \kappa^{6} }{\gamma(1-\gamma)}. \label{eq:v_bound}
\end{align}
For the first moment of $ u_t^K(M_{0:t}) $, by \eqref{eq:xM_bound_general}, it holds that for any $ t\ge 0 $,
\begin{align}
  \bbE\left[ \|u_t^K(M_{0:t})\|  \right] &= \bbE\left[ \left\|  - K x_t^K(M_{0:t-1}) + \sum_{i=1}^{H} M_t^{[i-1]} w_{t-i}    \right\|  \right] \nonumber\\
  &\le \frac{ \sigma_{w} (2H+1) \kappa_B^2 \kappa^{6} }{\gamma(1-\gamma)} + \frac{2 \sigma_w \kappa_B \kappa^3}{\gamma} \nonumber\\
  &\le \frac{ \sigma_w (2H+3) \kappa_B^2 \kappa^{6} }{\gamma(1-\gamma)} .
\end{align}
For the last component of \eqref{eq:L4}, by \eqref{eq:y_x_bound}, we have for any $ t \ge 0 $,
\begin{align}
    \bbE\left[ \|v_t^K(M_{t-H-1:t}) - u_t^K(M_{0:t}) \| \right] &= \bbE\left[ \left\| -K \left(y_t^K (M_{t-H-1:t-1}) - x_t^K(M_{0:t-1})  \right)    \right\|  \right] \nonumber\\
    &\le \frac{\sigma_{w} (2H+1) \kappa_B^2 \kappa^{8} (1-\gamma)^{H} }{\gamma}  . \label{eq:v_u}
\end{align}

Thus, by Markov's inequality and \eqref{eq:v_bound}--\eqref{eq:v_u}, for any $ \thr_3 > 0 $ and $ \thr_4 > 0 $,
\begin{align}
    &\bbP\left(\sum_{t=0}^{\ft-1} \left( \|v_t^K(M_{t-1-H:t})\| +\|u_t^K(M_{0:t})\| \right) \ge \ft \thr_3  \right) \nonumber\\
    &\le \frac{1}{\ft \thr_3} \bbE\left[\sum_{t=0}^{\ft-1} \left( \|v_t^K(M_{t-1-H:t})\| +\|u_t^K(M_{0:t})\| \right) \right] \le  \frac{2\sigma_{w} (2H+3)\kappa_B^2 \kappa^{6}}{\thr_3\gamma(1-\gamma)} ,\nonumber\\
    &\bbP\left(\sum_{t=0}^{\ft-1} \|v_t^K(M_{t-1-H:t}) - u_t^K(M_{0:t}) \| \ge \frac{\thr_4}{\ft} \right) \nonumber\\
    &\le \frac{\ft}{\thr_4} \bbE\left[ \sum_{t=0}^{\ft-1} \|v_t^K(M_{t-1-H:t}) - u_t^K(M_{0:t}) \|  \right] \le \frac{\sigma_{w} (2H+1) \kappa_B^2 \kappa^{8} (1-\gamma)^{H} \ft^2}{\thr_4\gamma } .\nonumber
\end{align}
For any $ \ft \ge 3 $ and $ \coe > 0 $, by letting $ H = \lceil 2\gamma^{-1} \log \ft  \rceil $, $ \thr_3 = \thr_4 = \coe \log \ft $, we get
\begin{align}
    \bbP\left(\sum_{t=0}^{\ft-1} \left( \|v_t^K(M_{t-1-H:t})\| +\|u_t^K(M_{0:t})\| \right) \ge \coe \ft \log \ft  \right) &\le  \frac{2\sigma_{w} (\frac{4\log \ft}{\gamma}+5)\kappa_B^2 \kappa^{6}}{\gamma(1-\gamma)\coe \log \ft} \nonumber\\ 
    &\le \frac{18 \sigma_w  \kappa_B^2 \kappa^{6} }{\coe \gamma^2(1-\gamma)}   , \nonumber\\
    \bbP\left(\sum_{t=0}^{\ft-1} \|v_t^K(M_{t-1-H:t}) - u_t^K(M_{0:t}) \| \ge \frac{\coe \log \ft}{\ft} \right) &\le \frac{\sigma_{w} (\frac{4\log \ft}{\gamma}+3) \kappa_B^2 \kappa^{8} }{\gamma  \coe\log \ft} \nonumber\\
    &\le \frac{7\sigma_{w}  \kappa_B^2 \kappa^{8}  }{\coe \gamma^2} . \nonumber
\end{align}
Hence, by \eqref{eq:L4}, we obtain
\begin{align}
    \bbP\left( \term_4 \le G_c \coe^2 (\log \ft)^2 \right) \ge 1 - \frac{25 \sigma_w  \kappa_B^2 \kappa^{8} }{\coe \gamma^2(1-\gamma)} . \label{eq:term4_bound}
\end{align}
Consequently, by \eqref{eq:diff_surr_bound},~\eqref{eq:term3_bound}, and \eqref{eq:term4_bound}, we obtain the desired result \eqref{eq:surrogate_actual_bound}.

\section{Proof of Proposition~\ref{prop:OCO_regret}}\label{app:OCO}
By the standard argument for the regret analysis of OGD (e.g., \citealt{Hazan2016intro}), for any $ M\in \calM $, the sequence $ \{M_t\} $ satisfies
\begin{align}
    \sum_{t=0}^{\ft-1} f_t (M_t) - \sum_{t=0}^{\ft-1} f_t (M) \le \frac{1}{2\eta} \left(\sup_{M_1,M_2\in \calM} \| M_1 - M_2 \|_{\rm F} \right)^2 + \frac{\eta}{2} \sum_{t=0}^{\ft-1} \| \nabla_M f_t (M_t) \|_{\rm F}^2  . \nonumber
\end{align}
Note that for any $ M \in \calM $,
\begin{align}
    \|M\|_{\rm F} \le \sum_{i=0}^{H-1} \|M^{[i]} \|_{\rm F} \le \sqrt{\dime} \sum_{i=0}^{H-1} \|M^{[i]} \|  \le \sqrt{\dime} \sum_{i=0}^{H-1} 2\kappa_B \kappa^3 (1-\gamma)^i \le \frac{2\kappa_B \kappa^3 \sqrt{\dime}}{\gamma} . \nonumber
\end{align}
Thus, by the triangle inequality, we obtain
\begin{align}
    \sum_{t=0}^{\ft-1} f_t (M_t) - \sum_{t=0}^{\ft-1} f_t (M) \le \frac{1}{2\eta} D^2 + \frac{\eta}{2} \sum_{t=0}^{\ft-1} \| \nabla_M f_t (M_t) \|_{\rm F}^2  . \label{eq:standard_ogd}
\end{align}
By the coordinate-wise Lipschitzness of $ \surr_t$ and the triangle inequality,
we also have for any $ t \ge H + 1 $,
\begin{align}\label{eq:ocom2}
    \left| \surr_t(M_{t-1-H}, \dots, M_t) 
    - f_t(M_t) \right|
    &\leq 
    L_\rmc \sum_{i=1}^{H+1}
    \|
        M_t - M_{t - i}
    \|_{\rm F} 
    \leq 
    L_\rmc \sum_{i=1}^{H+1} \sum_{k=1}^i
    \|
        M_{t - k + 1} - M_{t - k}
    \|_{\rm F}
    .
\end{align}
The right-hand side of~\eqref{eq:ocom2} is bounded as
\begin{equation}\label{eq:ocom3}
    L_\rmc \sum_{i=1}^{H+1} \sum_{k=1}^i
    \|
        M_{t - k + 1} - M_{t - k}
    \|_{\rm F}
    \leq 
    L_\rmc \eta \sum_{i=1}^{H+1} \sum_{k=1}^i
    \|
        \nabla f_{t-k}(M_{t-k})
    \|_{\rm F}
    .
\end{equation}
Here, the above inequality \eqref{eq:ocom3} follows since
letting $ \tilde{M}_{t+1} := \argmin_{M' \in (\bbR^{\dime_\rmu \times \dime_\rmx })^H} \| M_t - \eta \nabla f_t (M_t) - M' \|_{\rm F} $ 
and using the OGD update rule in~\eqref{eq:ogd} that holds for $t \in \bbra{0,T-1}$,
we have 
$
\| M_{t - k + 1} - M_{t - k} \|_{\rm F} 
\leq 
\| \tilde{M}_{t - k + 1} - M_{t - k} \|_{\rm F}
\leq
\eta
\|
    \nabla f_{t-k}(M_{t-k})
\|_{\rm F}
$.

By the definition of $ \surr_t $ and $ w_s = 0 $ for any $ s \le -1 $, $ \surr_t $ does not depend on $ \{M_s\}_{s\le 0} $ when $ t < H + 1 $.
Therefore, for $ t < H + 1 $,
\begin{align}\label{eq:ocom2_2}
    &| \surr_t(M_{t-1-H}, \dots, M_{-1},M_0, \ldots, M_t) 
    - f_t(M_t) | = | \surr_t(M_t, \dots, M_t, M_0, \ldots, M_t) 
    - f_t(M_t) | \nonumber\\
    &\leq 
    L_\rmc \sum_{i=1}^{t}
    \|
        M_t - M_{t - i}
    \|_{\rm F}
    \leq 
    L_\rmc \sum_{i=1}^{t} \sum_{k=1}^i
    \|
        M_{t - k + 1} - M_{t - k}
    \|_{\rm F}
    \leq 
    L_\rmc \eta \sum_{i=1}^{t} \sum_{k=1}^i
    \|
        \nabla f_{t-k}(M_{t-k})
    \|_{\rm F}
    .
\end{align}
By \eqref{eq:ocom2}--\eqref{eq:ocom2_2}, for any $ t \in \bbra{0,\ft-1} $, it holds that
\begin{align}
    &| \surr_t(M_{t-1-H}, \dots, M_t) 
    - f_t(M_t) | 
    \leq 
    L_\rmc \eta \sum_{i=1}^{\min\{H+1,t\}} \sum_{k=1}^i
    \|
        \nabla f_{t-k}(M_{t-k})
    \|_{\rm F}
    . \nonumber
\end{align}
By combining this with \eqref{eq:standard_ogd} and noting that
\[
 \sum_{t=0}^{\ft-1} \surr_t (M_{t-1-H:t}) - \sum_{t=0}^{\ft-1} f_t(M) \le \sum_{t=0}^{\ft-1} |\surr_t (M_{t-1-H:t}) -  f_t(M_t)| + \sum_{t=0}^{\ft-1} f_t(M_t) - \sum_{t=0}^{\ft-1} f_t(M),
\]
we obtain the desired result~\eqref{eq:general_ogd_bound}.

\section{Lipschitz Continuity of Surrogate Cost}\label{app:lipshitz}

\subsection{Proof of Lemma~\ref{lem:lipshitz}}\label{app:lipschitz_coordinate}
For notational simplicity, we denote $ y_{t}^K (M_{t-1-H:t-1}) $ and $ y_t^K (M_{t-1-H:t-k-1},\wcheck{M}_{t-k},M_{t-k+1:t-1}) $ by $ y_{t}^K $ and $ \wcheck{y}_t^K $, respectively. We use a similar notation for $ v_{t}^K (M_{t-1-H:t}) $ as $ v_t^K $ and $\wcheck{v}_t^K $. Then, we have
\begin{align}
  &\left| \surr_t (M_{t-1-H:t}) -  \surr_t (M_{t-1-H},\ldots,\wcheck{M}_{t-k},\ldots,M_{t}) \right| \nonumber\\
  &\qquad \qquad \le  \left| c_t(y_t^K,v_t^K) - c_t(\wcheck{y}_t^K,v_t^K) \right| + \left|c_t(\wcheck{y}_t^K,v_t^K) - c_t(\wcheck{y}_t^K,\wcheck{v}_t^K) \right| . \label{eq:f_ftilde}
\end{align}
For the first term of the right-hand side of \eqref{eq:f_ftilde}, it holds that
\begin{align}
   \left| c_t(y_t^K,v_t^K) - c_t(\wcheck{y}_t^K,v_t^K) \right| &\le G_c\left( \|y_t^K \| +  \|\wcheck{y}_t^K \| \right) \|y_t^K - \wcheck{y}_t^K \| . \label{eq:lipschitz_1}
\end{align}
By Markov's inequality and \eqref{eq:y_bound}, for any $ \thr_1 > 0 $,
\begin{align}
    \bbP\left( \|y_t^K \| + \|\wcheck{y}_t^K \| \ge \thr_1  \right) &\le \frac{1}{\thr_1} \bbE \left[ \|y_t^K \| + \|\wcheck{y}_t^K \| \right] \nonumber\\
    &\le \frac{ 2\sigma_w (2H+1) \kappa_B^2 \kappa^{5} }{\thr_1 \gamma(1-\gamma)} . \label{eq:markov_y}
\end{align}
For $ k = 0 $, \axv{it holds that} $ y_t^K - \wcheck{y}_t^K = 0 $, and for $ k\in \bbra{1,H+1} $, using \eqref{eq:psi_def},~\eqref{eq:surrogate_y}, we get for any $ t \ge 0 $,
\begin{align}
    \bbE\left[ \| y_t^K - \wcheck{y}_t^K \|  \right] &= \bbE\left[ \left\| \tilA_K^{k-1} B \sum_{i=0}^{2H} (M_{t-k}^{[i-k]} - \wcheck{M}_{t-k}^{[i-k]}) w_{t-1-i} \bone_{i-k+1\in \bbra{1,H}}         \right\|  \right] \nonumber\\
    &\le \bbE\left[  \kappa_B \kappa^2 (1-\gamma)^{k-1}  \sum_{i=0}^{2H} \| M_{t-k}^{[i-k]} - \wcheck{M}_{t-k}^{[i-k]} \| \|w_{t-1-i}\| \bone_{i-k+1\in \bbra{1,H}}    \right] \nonumber         \\
    &\le \sigma_w \kappa_B \kappa^2 (1-\gamma)^{k-1} \sum_{i=1}^H \| M_{t-k}^{[i-1]} - \wcheck{M}_{t-k}^{[i-1]} \|  . \label{eq:y_ycheck_diff} 
\end{align}
Therefore, for any $ \thr_2 > 0 $ and $ M_{t-k}, \wcheck{M}_{t-k} \in \calM $, $ M_{t-k} \neq \wcheck{M}_{t-k}$, $ k \in \bbra{1,H+1} $,
\begin{align}
    \bbP\left( \|y_t^K - \wcheck{y}_t^K \| \ge \thr_2 \sum_{i=1}^H \| M_{t-k}^{[i-1]} - \wcheck{M}_{t-k}^{[i-1]} \| \right) &\le \frac{\bbE\left[ \|y_t^K - \wcheck{y}_t^K \|  \right] }{\thr_2 \sum_{i=1}^H \| M_{t-k}^{[i-1]} - \wcheck{M}_{t-k}^{[i-1]} \|}  \nonumber\\
    &\le \frac{\sigma_w \kappa_B \kappa^2 (1-\gamma)^{k-1}}{\thr_2} \le \frac{\sigma_w \kappa_B \kappa^2}{\thr_2} . \label{eq:cheby_y}
\end{align}
For any $ \coe > 0 $, $ \ft \ge 3 $, and $ M_{t-k}, \wcheck{M}_{t-k} \in \calM $, $ M_{t-k} \neq \wcheck{M}_{t-k}$, $ k \in \bbra{1,H+1} $, by setting $ H = \lceil 2\gamma^{-1} \log \ft  \rceil $, $ \thr_1 = \coe \log \ft $, and $ \thr_2 = \coe $ in \eqref{eq:lipschitz_1},~\eqref{eq:markov_y}, and \eqref{eq:cheby_y}, we obtain
\begin{align}
    &\bbP\left( \left| c_t(y_t^K,v_t^K) - c_t(\wcheck{y}_t^K,v_t^K) \right| \le G_c \coe^2 \log \ft \sum_{i=1}^H \| M_{t-k}^{[i-1]} - \wcheck{M}_{t-k}^{[i-1]} \|  \right) \nonumber\\
    &\ge 1 - \frac{ 2\sigma_w (2\gamma^{-1} \log \ft+3) \kappa_B^2 \kappa^{5} }{\coe \gamma(1-\gamma) \log \ft }  - \frac{\sigma_w \kappa_B \kappa^2}{\coe} \nonumber\\
    &\ge 1 - \frac{ 10\sigma_w  \kappa_B^2 \kappa^{5} }{\coe \gamma^2(1-\gamma) }  - \frac{\sigma_w \kappa_B \kappa^2}{\coe} \nonumber\\
    &\ge 1 - \frac{ 11\sigma_w  \kappa_B^2 \kappa^{5} }{\coe \gamma^2(1-\gamma) } . \label{eq:prob_y}
\end{align}

Next, for the second term of \eqref{eq:f_ftilde}, we have
\begin{align}
 \left| c_t (\wcheck{y}_t^K , v_t^K) - c_t (\wcheck{y}_t^K , \wcheck{v}_t^K) \right|  &\le G_c \left(  \|v_t^K\| + \|\wcheck{v}_t^K\| \right)  \|v_t^K - \wcheck{v}_t^K \| . \label{eq:lipschitz_2}
\end{align}
In addition, for any $ \thr_3 > 0 $, we have
\begin{align}
    \bbP\left( \|v_t^K \| + \|\wcheck{v}_t^K \| \ge \thr_3  \right) &\le \frac{1}{\thr_3} \bbE \left[ \|v_t^K \| + \|\wcheck{v}_t^K \| \right] \nonumber\\
    &\le \frac{2\sigma_w (2H+3)\kappa_B^2 \kappa^{6}}{\thr_3 \gamma (1-\gamma)} , \label{eq:markov_v_v}
\end{align}
where we used \eqref{eq:v_bound}.
By \eqref{eq:surrogate_v} and \eqref{eq:y_ycheck_diff}, the last component of \eqref{eq:lipschitz_2} satisfies that for any $ k \in \bbra{1,H+1} $ and $ t \ge 0 $,
\begin{align}
    \bbE\left[ \|v_t^K - \wcheck{v}_t^K \|  \right] &= \bbE\left[ \left\| -K(y_t^K - \wcheck{y}_t^K)  \right\| \right] \nonumber\\
    &\le  \sigma_w\kappa_B \kappa^3 (1-\gamma)^{k-1} \sum_{i=1}^H \| M_{t-k}^{[i-1]} - \wcheck{M}_{t-k}^{[i-1]} \|  .
  \end{align}
For $ k = 0 $ and for any $ t \ge 0 $,
\begin{align}
    \bbE\left[ \|v_t^K - \wcheck{v}_t^K \|  \right] &= \bbE\left[ \left\| \sum_{i=1}^H (M_t^{[i-1]} - \wcheck{M}_t^{[i-1]} ) w_{t-i} \right\| \right] \nonumber\\
    &\le \sigma_w \sum_{i=1}^H \|M_t^{[i-1]} - \wcheck{M}_t^{[i-1]} \| .
\end{align}
Hence, for any $ \thr_4 > 0 $ and $ M_{t-k}, \wcheck{M}_{t-k} \in \calM $, $ M_{t-k} \neq \wcheck{M}_{t-k}$, $ k \in \bbra{1,H+1} $,
\begin{align}
    \bbP\left( \|v_t^K - \wcheck{v}_t^K \| \ge \thr_4 \sum_{i=1}^H \| M_{t-k}^{[i-1]} - \wcheck{M}_{t-k}^{[i-1]} \| \right) &\le \frac{\bbE\left[ \|v_t^K - \wcheck{v}_t^K \|  \right]}{\thr_4\sum_{i=1}^H \| M_{t-k}^{[i-1]} - \wcheck{M}_{t-k}^{[i-1]} \|} \nonumber\\
    &\le \frac{\sigma_w\kappa_B \kappa^3}{\thr_4} . \label{eq:chebyshev_v}
\end{align}
For any $ \coe > 0 $, $ \ft \ge 3 $, and $ M_{t-k}, \wcheck{M}_{t-k} \in \calM $, $ M_{t-k} \neq \wcheck{M}_{t-k}$, $ k \in \bbra{1,H+1} $, by setting $ H = \lceil 2\gamma^{-1} \log \ft  \rceil $, $ \thr_3 = \coe \log \ft $, and $ \thr_4 = \coe $ in \eqref{eq:lipschitz_2},~\eqref{eq:markov_v_v}, and \eqref{eq:chebyshev_v}, we have
\begin{align}
    &\bbP\left( \left| c_t (\wcheck{y}_t^K , v_t^K) - c_t (\wcheck{y}_t^K , \wcheck{v}_t^K) \right| \le G_c \coe^2 \log \ft \sum_{i=1}^H \| M_{t-k}^{[i-1]} - \wcheck{M}_{t-k}^{[i-1]} \|  \right) \nonumber\\
    &\ge 1 - \frac{2\sigma_w (4\gamma^{-1}\log \ft +5)\kappa_B^2 \kappa^{6}}{\coe \gamma (1-\gamma) \log \ft} - \frac{\sigma_w\kappa_B \kappa^3}{\coe} \nonumber\\
    &\ge 1 - \frac{19\sigma_w \kappa_B^2 \kappa^{6}}{\coe \gamma^2 (1-\gamma)} . \label{eq:prob_v}
\end{align}

In summary, by \eqref{eq:f_ftilde},~\eqref{eq:prob_y}, and \eqref{eq:prob_v}, for any $ C > 0 $, $ \ft \ge 3 $, and $ M_{t-k}, \wcheck{M}_{t-k} \in \calM $, $ M_{t-k} \neq \wcheck{M}_{t-k}$, $ k \in \bbra{1,H+1} $, we obtain
\begin{align}
    &\bbP\left( \left| \surr_t (M_{t-1-H:t}) -  \surr_t (M_{t-1-H},\ldots,\wcheck{M}_{t-k},\ldots,M_{t}) \right| \le 2 G_c \coe^2 \log \ft \sum_{i=1}^H \| M_{t-k}^{[i-1]} - \wcheck{M}_{t-k}^{[i-1]} \| \right) \nonumber\\
    &\ge  1 - \frac{ 11\sigma_w  \kappa_B^2 \kappa^{5} }{\coe \gamma^2(1-\gamma) } - \frac{19\sigma_w \kappa_B^2 \kappa^{6}}{\coe \gamma^2 (1-\gamma)} \nonumber\\
    &\ge 1 - \frac{30\sigma_w \kappa_B^2 \kappa^{6}}{\coe \gamma^2 (1-\gamma)} . \nonumber
\end{align}
Lastly, we have
\begin{align}
    \sum_{i=1}^H \|M_{t-k}^{[i-1]} - \wcheck{M}_{t-k}^{[i-1]} \| &\le \sqrt{H} \left(\sum_{i=1}^H \|M_{t-k}^{[i-1]} - \wcheck{M}_{t-k}^{[i-1]} \|^2 \right)^{1/2} \nonumber\\
    &\le \sqrt{H} \left\|
  \begin{bmatrix}
    M_{t-k}^{[0]} - \wcheck{M}_{t-k}^{[0]}  \\
    \vdots \\
    M_{t-k}^{[H-1]} - \wcheck{M}_{t-k}^{[H-1]}  
  \end{bmatrix} \right\|_{\rm F} . \label{eq:1_Fnorm}
\end{align}
This completes the proof.

\subsection{Proof of Lemma~\ref{lem:lipschitz_global}}
For simplicity, we will omit the superscript of $ y_{t}^K$ and $v_t^K $ as $ y_t(M) $ and $ v_t(M) $.
By the same argument as in Appendix~\ref{app:lipschitz_coordinate}, we have for any $ t \ge 0 $, $ \thr_1 > 0 $, and $ \thr_3 > 0 $,
\begin{align}
  \left| f_t (M) -  f_t (\wcheck{M}) \right| &\le  G_c\left( \|y_t(M) \| +  \|y_t(\wcheck{M}) \| \right) \|y_t(M) - y_t(\wcheck{M}) \| \nonumber\\
  &\quad + G_c \left(  \|v_t(M)\| + \|v_t(\wcheck{M})\| \right)  \|v_t(M) - v_t(\wcheck{M}) \| , \label{eq:f_ftilde_global} \\
  \bbP\left( \|y_t(M) \| + \|y_t(\wcheck{M}) \| \ge \thr_1  \right) &\le \frac{ 2\sigma_w (2H+1) \kappa_B^2 \kappa^{5} }{\thr_1 \gamma(1-\gamma)} , \label{eq:markov_y_global} \\
  \bbP\left( \|v_t(M) \| + \|v_t(\wcheck{M}) \| \ge \thr_3  \right) 
    &\le \frac{2\sigma_w (2H+3)\kappa_B^2 \kappa^{6}}{\thr_3 \gamma (1-\gamma)}  . \label{eq:markov_v}
\end{align}
For the first term of \eqref{eq:f_ftilde_global}, we have for any $ t \ge 0 $,
\begin{align}
    \bbE\left[ \|y_t(M) - y_t(\wcheck{M}) \| \right] &= \bbE\left[ \left\|    \sum_{i=0}^{2H} \sum_{j=0}^H A_K^j B(M^{[i-j-1]} - \wcheck{M}^{[i-j-1]})\bone_{i-j\in \bbra{1,H}}   w_{t-1-i}  \right\|  \right] \nonumber\\
    &\le \sigma_w \kappa_B \kappa^2 \sum_{i=0}^{2H} \sum_{j=0}^{H} \|M^{[i-j-1]} - \wcheck{M}^{[i-j-1]}\|\bone_{i-j\in \bbra{1,H}} \nonumber\\
    &\le \sigma_w \kappa_B \kappa^2 \sum_{j=0}^{H} \sum_{i=1}^{H} \|M^{[i-1]} - \wcheck{M}^{[i-1]}\| \nonumber\\
    &= \sigma_w \kappa_B \kappa^2 (H+1) \sum_{i=1}^{H} \|M^{[i-1]} - \wcheck{M}^{[i-1]}\|  .\nonumber
\end{align}
For the second term of \eqref{eq:f_ftilde_global}, by \eqref{eq:surrogate_v}, it holds that for any $ t \ge 0 $,
\begin{align}
    \bbE\left[ \|v_t(M) - v_t(\wcheck{M}) \|  \right] &= \bbE\left[ \left\| -K(y_t(M) - y_t(\wcheck{M})) + \sum_{i=1}^H (M^{[i-1]} - \wcheck{M}^{[i-1]}) w_{t-i}  \right\| \right] \nonumber\\
    &\le \sigma_w \kappa_B \kappa^3 (H+1) \sum_{i=1}^{H} \|M^{[i-1]} - \wcheck{M}^{[i-1]}\| + \sigma_w \sum_{i=1}^H \|M^{[i-1]} - \wcheck{M}^{[i-1]}\|    \nonumber\\
    &\le \sigma_w \kappa_B \kappa^3 (H+2) \sum_{i=1}^{H} \|M^{[i-1]} - \wcheck{M}^{[i-1]}\|   .\nonumber
  \end{align}
Hence, for any $ \thr_2 > 0 $, $\thr_4 > 0 $, and $ M, \wcheck{M} \in \calM $, $ M \neq \wcheck{M}$,
\begin{align}
    \bbP\left( \|y_t(M) - y_t(\wcheck{M}) \| \ge \thr_2 \sum_{i=1}^H \| M^{[i-1]} - \wcheck{M}^{[i-1]} \| \right) &\le \frac{\bbE\left[ \|y_t(M) - y_t(\wcheck{M}) \|  \right] }{\thr_2 \sum_{i=1}^H \| M^{[i-1]} - \wcheck{M}^{[i-1]} \|}  \nonumber\\
    &\le \frac{\sigma_w \kappa_B \kappa^2(H+1)}{\thr_2} , \label{eq:cheby_y_global} \\
    \bbP\left( \|v_t(M) - v_t(\wcheck{M}) \| \ge \thr_4 \sum_{i=1}^H \| M^{[i-1]} - \wcheck{M}^{[i-1]} \| \right) &\le \frac{\bbE\left[ \|v_t(M) - \wcheck{v}_t(\wcheck{M}) \|  \right]}{\thr_4\sum_{i=1}^H \| M^{[i-1]} - \wcheck{M}^{[i-1]} \|} \nonumber\\
    &\le \frac{\sigma_w\kappa_B \kappa^3(H+2)}{\thr_4} . \label{eq:chebyshev_v_global}
\end{align}
For any $ \ft \ge 3 $, by setting $ H = \lceil 2\gamma^{-1} \log \ft  \rceil $, $ \thr_1 = \thr_2 = \thr_3 = \thr_4 = \coe \log \ft $ and by \eqref{eq:f_ftilde_global}--\eqref{eq:chebyshev_v_global}, we obtain
\begin{align}
    &\bbP\left( \left| f_t (M) -  f_t (\wcheck{M}) \right| \le 2 G_c \coe^2 (\log \ft)^2 \sum_{i=1}^H \| M^{[i-1]} - \wcheck{M}^{[i-1]} \| \right) \nonumber\\
    &\ge 1 -\frac{ 14\sigma_w  \kappa_B^2 \kappa^{5} }{\coe\gamma^2(1-\gamma)} - \frac{18\sigma_w \kappa_B^2 \kappa^{6}}{\coe \gamma^2 (1-\gamma)} - \frac{4\sigma_w \kappa_B \kappa^2}{\coe \gamma} - \frac{5\sigma_w\kappa_B \kappa^3}{\coe \gamma} \nonumber\\
    &\ge 1 - \frac{41\sigma_w \kappa_B^2 \kappa^{6}}{\coe \gamma^2 (1-\gamma)} .
\end{align}
By combining this with \eqref{eq:1_Fnorm}, we obtain the desired result.

\section{Proof of Lemma~\ref{lem:gradient}}\label{app:gradient}

For $ r\in \bbra{0,H-1} $, $p\in \bbra{1,n_\rmu}$, and $q\in \bbra{1,n_\rmx} $, let us bound $ \nabla_{M_{p,q}^{[r]}} f_t (M) $, where $ M_{p,q}^{[r]} $ denotes the $ (p,q) $ entry of $ M^{[r]} $.
In this section, for notational simplicity, we drop the superscript of $ y_t^K $ and $v_t^K $.
By using the chain rule, we obtain
\begin{align}
  \left|\nabla_{M_{p,q}^{[r]}} f_t (M) \right| &= \left| \nabla_x c_t (y_t(M),v_t(M))^\top \frac{\partial y_t(M)}{\partial M_{p,q}^{[r]}} + \nabla_u c_t (y_t (M), v_t(M))^\top \frac{\partial v_t (M)}{\partial M_{p,q}^{[r]}} \right| \nonumber\\
  &\le \|\nabla_x c_t (y_t(M),v_t(M))\| \left\|\frac{\partial y_t(M)}{\partial M_{p,q}^{[r]}}\right\| + \|\nabla_u c_t (y_t (M), v_t(M))\| \left\| \frac{\partial v_t (M)}{\partial M_{p,q}^{[r]}} \right\| \nonumber\\
  &\le G_c  \|y_t (M)\|   \left\|\frac{\partial y_t(M)}{\partial M_{p,q}^{[r]}}\right\|  + G_c  \| v_t(M) \|  \left\| \frac{\partial v_t (M)}{\partial M_{p,q}^{[r]}} \right\|  . \label{eq:grad_pq_bound}
\end{align}
By the expression \eqref{eq:surrogate_y} of $ y_t(M) $ and the bound \eqref{eq:psi_inequality} of $ \Psi $, for any $ M \in \calM $, we have
\begin{align}
    \|y_t(M) \| &=  \left\| \sum_{i=0}^{2H} \Psi_{t-1,i}^{K,H} (M) w_{t-1-i} \right\| \le \sum_{i=0}^{2H} (2H+1) \kappa_B^2 \kappa^5 (1-\gamma)^{i-1} \|w_{t-1-i}\| , \\
    \left\| \frac{\partial y_t(M)}{\partial M_{p,q}^{[r]}} \right\| &= \left\| \sum_{i=0}^{2H} \sum_{j=0}^H \frac{\partial \tilA_K^j B M^{[i-j-1]}}{\partial M_{p,q}^{[r]}} w_{t-1-i} \bone_{i-j\in \bbra{1,H}}          \right\| \nonumber\\
    &= \left\| \sum_{i=r+1}^{r+H+1} \frac{\partial \tilA_K^{i-r-1} BM^{[r]}}{\partial M_{p,q}^{[r]}} w_{t-1-i}     \right\| \nonumber\\
    &\le \sum_{i=r+1}^{r+H+1} \| \tilA_{K}^{i-r-1} B E^{p,q}     \| \|w_{t-1-i} \| \nonumber\\
    &\le \sum_{i=r+1}^{r+H+1} \kappa_B \kappa^2 (1-\gamma)^{i-r-1} \|w_{t-1-i} \|, \label{eq:dydM}
\end{align}
where the square matrix $ E^{p,q} $ is defined by $ E^{p,q}_{i,j} = 1 $ if $ (i,j) = (p,q) $, and $ 0 $, otherwise, and satisfies $ \|E^{p,q} \| = 1 $.
Similarly, we have for any $ M\in \calM $,
\begin{align}
    \|v_t(M) \| &= \left\| -K y_t(M) + \sum_{i=0}^{H-1} M^{[i]} w_{t-1-i}    \right\|  \nonumber\\
    &\le \sum_{i=0}^{2H} (2H+1) \kappa_B^2 \kappa^6 (1-\gamma)^{i-1} \|w_{t-1-i}\| + \sum_{i=0}^{H-1} 2\kappa_B \kappa^3 (1-\gamma)^i \|w_{t-1-i}\| \nonumber\\
    &\le \sum_{i=0}^{2H}  (2H+3)\kappa_B^2 \kappa^{6} (1-\gamma)^{i-1} \|w_{t-1-i}\|    ,    \\ 
    \left\| \frac{\partial v_t(M)}{\partial M_{p,q}^{[r]}}   \right\| &= \left\| -\frac{\partial K y_t (M)}{\partial M_{p,q}^{[r]}} + \sum_{i=0}^{H-1} \frac{\partial M^{[i]}}{\partial M_{p,q}^{[r]}} w_{t-1-i}  \right\| \nonumber\\
    &\le \left(\sum_{i=r+1}^{r+H+1} \kappa_B \kappa^3 (1-\gamma)^{i-r-1} \|w_{t-1-i} \|\right) + \|w_{t-1-r} \| \nonumber\\
    &\le \sum_{i=r}^{r+H+1} \kappa_B \kappa^3 (1-\gamma)^{i-r-1} \|w_{t-1-i} \| . \label{eq:dvdM}
\end{align}
First, we bound $ \sum_{t=0}^{\ft-1} \| \nabla_M f_t (M_t) \|_{\rm F}^2 $. By \eqref{eq:grad_pq_bound}--\eqref{eq:dvdM},
\begin{align}
   &\bbE\left[\left|\nabla_{M_{p,q}^{[r]}} f_t (M) \right|^2\right] \le G_c^2 \bbE\left[  \left( \|y_t (M)\|   \left\|\frac{\partial y_t(M)}{\partial M_{p,q}^{[r]}}\right\|  + \| v_t(M) \|  \left\| \frac{\partial v_t (M)}{\partial M_{p,q}^{[r]}} \right\| \right)^2 \right] \nonumber\\
    &\le G_c^2 \bbE\Biggl[  \Biggl\{ \left( \sum_{i=0}^{2H} (2H+1) \kappa_B^2 \kappa^5 (1-\gamma)^{i-1} \|w_{t-1-i}\| \right) \left(\sum_{i=r+1}^{r+H+1} \kappa_B \kappa^2 (1-\gamma)^{i-r-1} \|w_{t-1-i} \|\right) \nonumber\\
    &\quad + \left(\sum_{i=0}^{2H}  (2H+3)\kappa_B^2 \kappa^{6} (1-\gamma)^{i-1} \|w_{t-1-i}\| \right) \left( \sum_{i=r}^{r+H+1} \kappa_B \kappa^3 (1-\gamma)^{i-r-1} \|w_{t-1-i} \| \right) \Biggr\}^2  \Biggr] \nonumber\\
    &\le G_c^2 \bbE\Biggl[ \Biggl\{ 2\left(\sum_{i=0}^{2H}  (2H+3)\kappa_B^2 \kappa^{6} (1-\gamma)^{i-1} \|w_{t-1-i}\| \right) \left( \sum_{i=r}^{r+H+1} \kappa_B\kappa^3 (1-\gamma)^{i-r-1} \|w_{t-1-i}\| \right) \Biggr\}^2  \Biggr] \nonumber\\
    &\le 4 G_c^2 (2H+3)^2 \kappa_B^6 \kappa^{18} \nonumber\\
    &\quad \times \bbE\left[ \left(  \sum_{i=0}^{2H} (1-\gamma)^{i-1} \|w_{t-1-i}\| \right)^4   \right]^{1/2} \bbE\left[ \left( \sum_{i=r}^{r+H+1}  (1-\gamma)^{i-r-1} \|w_{t-1-i}\| \right)^4  \right]^{1/2} , \label{eq:nabla_square_bound}
\end{align}
where in the last line, we used H\"{o}lder's inequality.
For \eqref{eq:nabla_square_bound}, we have
\begin{align}
    \bbE\left[ \left(  \sum_{i=0}^{2H} (1-\gamma)^{i-1} \|w_{t-1-i}\| \right)^4   \right] &\le \bbE\left[ \left(  \max_{i\in \bbra{0,2H}} \|w_{t-1-i}\|\sum_{i=0}^{2H} (1-\gamma)^{i-1} \right)^4   \right] \nonumber\\
    &\le \frac{1}{\gamma^4 (1-\gamma)^4} \bbE\left[ \max_{i\in \bbra{0,2H}} \|w_{t-1-i}\|^4 \right] \nonumber\\
    &\le \frac{2H+1}{\gamma^4 (1-\gamma)^4} \max_{i\in \bbra{0,2H}} \bbE\left[ \|w_{t-1-i}\|^4\right] \nonumber\\
    &\le \frac{(2H+1)\sigma_w^4}{\gamma^4 (1-\gamma)^4}, \nonumber\\
    \bbE\left[ \left( \sum_{i=r}^{r+H+1}  (1-\gamma)^{i-r-1} \|w_{t-1-i}\| \right)^4  \right]&\le \frac{(H+2)\sigma_w^4}{\gamma^4 (1-\gamma)^4} . \nonumber
\end{align}
Therefore, we obtain
\begin{align}
    \bbE\left[\left|\nabla_{M_{p,q}^{[r]}} f_t (M) \right|^2\right] \le \frac{4 G_c^2 (2H+3)^3 \kappa_B^6 \kappa^{18}\sigma_w^4}{\gamma^4 (1-\gamma)^4} , \label{eq:nabla_square_2}
 \end{align}
which yields for any $ \thr_1 > 0 $,
\begin{align}
    \bbP\left( \sum_{t=0}^{\ft-1} \|\nabla_M f_t (M) \|_{\rm F}^2 \ge \thr_1  \right) &\le \frac{1}{\thr_1} \bbE\left[ \sum_{t=0}^{\ft-1} \|\nabla_M f_t (M) \|_{\rm F}^2 \right] \nonumber\\
    &\le \frac{1}{\thr_1} \bbE\left[ \sum_{t=0}^{\ft-1} \sum_{p=1}^{n_\rmu} \sum_{q=1}^{n_\rmx} \sum_{r=0}^{H-1} |\nabla_{M_{p,q}^{[r]}} f_t (M)|^2  \right] \nonumber\\
    &\le \frac{4\dime^2 G_c^2 (2H+3)^4 \kappa_B^6 \kappa^{18}\sigma_w^4 \ft}{\thr_1 \gamma^4 (1-\gamma)^4} . \nonumber
\end{align}
For any $ \ft \ge 3 $, $ \coe > 0 $, and $ M \in \calM $, by letting $ H = \lceil 2\gamma^{-1} \log \ft  \rceil $, $ \thr_1 = \coe \ft (\log \ft)^4 $ in the above inequality, we get
\begin{align}
    \bbP\left( \sum_{t=0}^{\ft-1} \|\nabla_M f_t (M) \|_{\rm F}^2 \ge \coe  \ft (\log \ft)^4  \right)
    &\le \frac{4\dime^2 G_c^2 (\frac{4}{\gamma}+\frac{5}{\log \ft})^4 \kappa_B^6 \kappa^{18}\sigma_w^4 }{\coe \gamma^4 (1-\gamma)^4} \nonumber\\
    &\le \frac{4 \cdot 9^4 \dime^2 G_c^2  \kappa_B^6 \kappa^{18}\sigma_w^4 }{\coe \gamma^8 (1-\gamma)^4}  .
\end{align}

Next, we bound $ \sum_{t=0}^{\ft-1} \sum_{i=1}^{\min\{H+1,t\}} \sum_{k=1}^i
\|
    \nabla f_{t-k}(M_{t-k})
\|_{\rm F} $.
By H\"{o}lder's inequality, \eqref{eq:nabla_square_bound}, and \eqref{eq:nabla_square_2}, we have
\begin{align}
    &\bbE\left[ \|y_{t-k} (M)\|  \left\|\frac{\partial y_{t-k}(M)}{\partial M_{p,q}^{[r]}}\right\|  + \| v_{t-k}(M) \|  \left\| \frac{\partial v_{t-k} (M)}{\partial M_{p,q}^{[r]}} \right\|\right] \nonumber\\
    &\le \bbE\left[ \left( \|y_{t-k} (M)\|  \left\|\frac{\partial y_{t-k}(M)}{\partial M_{p,q}^{[r]}}\right\|  + \| v_{t-k}(M) \|  \left\| \frac{\partial v_{t-k} (M)}{\partial M_{p,q}^{[r]}} \right\| \right)^2 \right]^{1/2} \nonumber\\
    &\le \frac{2 (2H+3)^{3/2} \kappa_B^3 \kappa^{9}\sigma_w^2 }{\gamma^2 (1-\gamma)^2} .
\end{align}
Hence, it follows from \eqref{eq:grad_pq_bound} that
\begin{align}
    &\bbE\left[\sum_{t=0}^{\ft-1} \sum_{i=1}^{\min\{H+1,t\}} \sum_{k=1}^i
\|
    \nabla f_{t-k}(M)
\|_{\rm F} \right] \nonumber\\
&\le\sum_{t=0}^{\ft-1} \sum_{i=1}^{\min\{H+1,t\}} \sum_{k=1}^i
\bbE\left[\left(\sum_{p=1}^{n_\rmu} \sum_{q=1}^{n_\rmx} \sum_{r=0}^{H-1} \left|\nabla_{M_{p,q}^{[r]}} f_t (M) \right|^2 \right)^{1/2} \right]\nonumber\\
&\le\sum_{t=0}^{\ft-1} \sum_{i=1}^{\min\{H+1,t\}} \sum_{k=1}^i
\sum_{p=1}^{n_\rmu} \sum_{q=1}^{n_\rmx} \sum_{r=0}^{H-1} \bbE\left[\left|\nabla_{M_{p,q}^{[r]}} f_t (M) \right| \right]\nonumber\\
&\le G_c\sum_{t=0}^{\ft-1} \sum_{i=1}^{\min\{H+1,t\}} \sum_{k=1}^i
\sum_{p=1}^{n_\rmu} \sum_{q=1}^{n_\rmx} \sum_{r=0}^{H-1} \bbE\left[ \|y_t (M)\|   \left\|\frac{\partial y_t(M)}{\partial M_{p,q}^{[r]}}\right\|  +  \| v_t(M) \|  \left\| \frac{\partial v_t (M)}{\partial M_{p,q}^{[r]}} \right\| \right]\nonumber\\
&\le G_c\sum_{t=0}^{\ft-1} \sum_{i=1}^{\min\{H+1,t\}} \sum_{k=1}^i
\sum_{p=1}^{n_\rmu} \sum_{q=1}^{n_\rmx} \sum_{r=0}^{H-1} \frac{2 (2H+3)^{3/2} \kappa_B^3 \kappa^{9}\sigma_w^2 }{\gamma^2 (1-\gamma)^2}  \nonumber\\
&\le \frac{2\dime^2 G_c (2H+3)^{9/2} \kappa_B^3 \kappa^{9}\sigma_w^2 T }{\gamma^2 (1-\gamma)^2} . \nonumber
\end{align}
Therefore, for any $ \thr_2 > 0 $, it holds that
\begin{align}
    \bbP\left( \sum_{t=0}^{\ft-1} \sum_{i=1}^{\min\{H+1,t\}} \sum_{k=1}^i
    \|
        \nabla f_{t-k}(M)
    \|_{\rm F} \ge \thr_2 \right) \le \frac{2 \dime^2 G_c (2H+3)^{9/2} \kappa_B^3 \kappa^{9}\sigma_w^2 T }{\thr_2 \gamma^2 (1-\gamma)^2} . \nonumber
\end{align}
For any $ \ft \ge 3 $ and $ \coe > 0 $, by setting $ H = \lceil 2\gamma^{-1} \log \ft  \rceil $, $ \thr_2 = \coe \ft (\log \ft)^{9/2} $, we obtain
\begin{align}
    \bbP\left( \sum_{t=0}^{\ft-1} \sum_{i=1}^{\min\{H+1,t\}} \sum_{k=1}^i
    \|
        \nabla f_{t-k}(M)
    \|_{\rm F} \ge \coe \ft (\log \ft)^{9/2} \right) &\le \frac{2\dime^2G_c (\frac{4}{\gamma}+\frac{5}{\log \ft})^{9/2} \kappa_B^3 \kappa^{9}\sigma_w^2 }{\coe  \gamma^2 (1-\gamma)^2} \nonumber\\
    &\le \frac{2\cdot 3^{9} \dime^2 G_c \kappa_B^3 \kappa^{9}\sigma_w^2 }{\coe  \gamma^{13/2} (1-\gamma)^2} ,
\end{align}
which completes the proof.

\section{Proof of Proposition~\ref{prop:hessian}}\label{app:smoothness}
For notational simplicity, we drop the superscript $ K $ of $ y_t^{K} $ and $ v_t^{K} $. First, we have
    \begin{align}
        \nabla_{M}^2 f_t (M) &=  \nabla_{M}^2 c_t \left(  y_t (M), v_t (M) \right) \nonumber\\
        &=  \begin{bmatrix}
            \calJ_{y_t} \\ \calJ_{v_t}
        \end{bmatrix}^\top
         \nabla_{x,u}^2 c_t (y_t (M), v_t (M)) 
        \begin{bmatrix}
            \calJ_{y_t} \\ \calJ_{v_t}
        \end{bmatrix} , \label{eq:hessian_f_c}
    \end{align}
    where $ \calJ_{y_t} $ and $ \calJ_{v_t} $ are the Jacobian matrices of $ y_t $ and $ v_t $ with respect to $ M $, respectively.
Hence, by Assumption~\ref{ass:strong_convex_cost},
\begin{align}
    \left\| \nabla_{M}^2 f_t (M) \right\|_{\rm F}
    &\le  \left\| \begin{bmatrix}
        \calJ_{y_t} \\ \calJ_{v_t}
    \end{bmatrix}\right\|_{\rm F}^2
     \|\nabla_{x,u}^2 c_t (y_t (M), v_t (M)) \|_{\rm F}
      \le \beta \left\| \begin{bmatrix}
        \calJ_{y_t} \\ \calJ_{v_t}
    \end{bmatrix}\right\|_{\rm F}^2 . \label{eq:hessian}
\end{align}
Here, we have
\begin{align}
    \left\|\begin{bmatrix}
        \calJ_{y_t} \\ \calJ_{v_t}
    \end{bmatrix}\right\|_{\rm F}^2  = \sum_{p=1}^{\dime_\rmu} \sum_{q=1}^{\dime_\rmx} \sum_{r=0}^{H-1} \left( \left\| \frac{\partial y_t(M)}{\partial M_{p,q}^{[r]}}  \right\|^2 + \left\| \frac{\partial v_t(M)}{\partial M_{p,q}^{[r]}}  \right\|^2 \right) . \label{eq:jacobi_bound}
\end{align}

We will bound the right-hand side of \eqref{eq:jacobi_bound} based on \eqref{eq:dydM} and \eqref{eq:dvdM}.
For any $ \thr_1 > 0 $, it holds that
\begin{align}
    &\bbP\left(\max_{t\in \bbra{0,\ft-1}} \|w_t\| \ge \thr_1 \right) \nonumber\\
    &\le \bbP\left( \max_{t\in \bbra{0,\ft-1}, i \in \bbra{1,\dime_\rmx}} \dime_\rmx |w_{t,i}| \ge \thr_1 \right) \nonumber\\
    &\le \bbP\left( \max_{t\in \bbra{0,\ft-1}, i \in \bbra{1,\dime_\rmx}} w_{t,i} \ge \frac{\thr_1}{\dime_\rmx}  \right) + \bbP\left( \max_{t\in \bbra{0,\ft-1}, i \in \bbra{1,\dime_\rmx}} (-w_{t,i}) \ge \frac{\thr_1}{\dime_\rmx} \right) \nonumber\\
    &\le \frac{\dime_\rmx}{\thr_1} \bbE\left[ \max_{t\in \bbra{0,\ft-1}, i \in \bbra{1,\dime_\rmx}} w_{t,i}    \right] + \frac{\dime_\rmx}{\thr_1} \bbE\left[ \max_{t\in \bbra{0,\ft-1}, i \in \bbra{1,\dime_\rmx}} (-w_{t,i})    \right] . \label{eq:wnorm_bound}
\end{align}
Moreover, by Jensen's inequality, for any $ \lambda > 0 $,
\begin{align}
    \exp\left( \lambda \bbE\left[\max_{t\in\bbra{0,\ft-1},i\in \bbra{1,\dime_\rmx}} w_{t,i} \right]  \right) &\le \bbE\left[ \exp\left( \lambda \max_{t,i} w_{t,i} \right)  \right] \nonumber\\
    &\le \bbE\left[ \sum_{t=0}^{\ft-1} \sum_{i=1}^{\dime_\rmx} \exp\left( \lambda w_{t,i}  \right) \right] \nonumber\\
    &\le \ft \dime_\rmx \exp\left( \frac{\lambda^2 \sigma_w^2}{2\dime_\rmx} \right), \nonumber
\end{align}
where we used the sub-Gaussianity~\eqref{eq:subgauss_def}.
The above inequality yields that for any $ \ft > \dime_\rmx^{-1} $,
\begin{align}
    \bbE\left[\max_{t\in\bbra{0,\ft-1},i\in \bbra{1,\dime_\rmx}} w_{t,i} \right] &\le \frac{\log (\ft \dime_\rmx)}{\lambda}  + \frac{\lambda \sigma_w^2}{2\dime_\rmx} \le \sigma_w\sqrt{\frac{2 \log (\ft \dime_\rmx)}{\dime_\rmx}} . \label{eq:wmax1}
\end{align}
Similarly, for any $ \ft > \dime_\rmx^{-1} $, we have
\begin{align}
    \bbE\left[\max_{t\in\bbra{0,\ft-1},i\in \bbra{1,\dime_\rmx}} (-w_{t,i}) \right] \le\sigma_w\sqrt{\frac{2 \log (\ft \dime_\rmx)}{\dime_\rmx}} . \label{eq:wmax2}
\end{align}
By \eqref{eq:wnorm_bound}--\eqref{eq:wmax2}, we obtain
\begin{align}
    \bbP\left(\max_{t\in \bbra{0,\ft-1}} \|w_t\| \ge d_1 \right) 
    \le \frac{2  \sigma_w\sqrt{2 \dime_\rmx \log (\ft \dime_\rmx)} }{\thr_1} , ~~ \forall \thr_1 > 0, \ \ft > \dime_\rmx^{-1} .
\end{align}

Hence, it follows from \eqref{eq:dydM} and \eqref{eq:dvdM} that for any $ \thr_1 > 0 $, $ \ft > \dime_\rmx^{-1} $, and $  M \in \calM $,
\begin{align}
    &\bbP\Biggl( \max_{t\in \bbra{0,\ft-1}, p\in \bbra{1,\dime_u}, q \in \bbra{1,\dime_\rmx}, r\in \bbra{0,H-1}} \left\| \frac{\partial y_t(M)}{\partial M_{p,q}^{[r]}}  \right\| \le \frac{\kappa_B \kappa^2 \thr_1}{\gamma}, ~~ \max_{t,p,q,r} \left\| \frac{\partial v_t(M)}{\partial M_{p,q}^{[r]}}  \right\| \le \frac{\kappa_B \kappa^3 \thr_1}{\gamma(1-\gamma)} \Biggr) \nonumber\\
    &\ge  1 - \frac{2\sigma_w\sqrt{2 \dime_\rmx\log (\ft \dime_\rmx)} }{\thr_1} . \nonumber
\end{align}
By combining this with \eqref{eq:hessian} and \eqref{eq:jacobi_bound}, we get for any $ \thr_1 > 0 $, $ \ft > \dime_\rmx^{-1} $, and $ M\in \calM $,
\begin{align}
    \bbP\left(\max_{t\in \bbra{0,\ft-1}}\left\| \nabla_{{M}}^2 f_t (M) \right\|_{\rm F} \le \beta \dime^2 H \left(\frac{\kappa_B \kappa^2 \thr_1}{\gamma} + \frac{\kappa_B \kappa^3 \thr_1}{\gamma(1-\gamma)} \right) \right)
    \ge 1 - \frac{2\sigma_w\sqrt{2 \dime_\rmx\log (\ft \dime_\rmx)} }{\thr_1} .
\end{align}
For any $ \ft \ge 3 $ and $ \coe > 0 $, by letting $ H = \lceil 2\gamma^{-1} \log \ft  \rceil $, $ \thr_1 = \coe \sqrt{\log \ft} $ in the above, we obtain
\begin{align}
    &\bbP\left(\max_{t\in \bbra{0,\ft-1}}\left\| \nabla_{M}^2 f_t (M) \right\|_{\rm F} \le  \frac{2\kappa_B \kappa^3 \beta \dime^2 (2\gamma^{-1} \log \ft + 1) \coe \sqrt{\log \ft}}{\gamma(1-\gamma)} \right) \nonumber\\
    &\ge  1 -\frac{2\sigma_w\sqrt{2\dime_\rmx (1+ \log \dime_\rmx)  }}{\coe} . \nonumber
\end{align}
The desired result is a straightforward consequence of the above inequality.

\section{Proof of Theorem~\ref{thm:log_regret}}\label{app:log_regret}
Since the bound \eqref{eq:regret_oco_log} does not cover the time interval $ \bbra{0,H+2} $, we first provide a bound for $ \bbra{0,H+2} $.
For any $ M \in \calM $, we have
\begin{align}
    | \surr_t (M_{t-1-H:t}) - f_t(M) | \le | \surr_t (M_{t-1-H:t}) - f_t(M_t) | + | f_t (M_t) - f_t(M) | . \nonumber
\end{align}
Assume that $ L_\rmc $ and $ L_\rmf $ satisfy \eqref{eq:lipschitz} and \eqref{eq:lipschitz2}.
Then, for any $ M_t, M\in \calM $, it holds that
\begin{align}
    | \surr_t (M_{t-1-H:t}) - f_t(M_t) | &\le L_\rmc \sum_{i=1}^{H+1} \|M_t - M_{t-i} \|_{\rm F} \nonumber\\
    &\le L_\rmc \sum_{i=1}^{H+1} \sum_{k=1}^i \|M_{t-k+1} - M_{t-k} \|_{\rm F}  \nonumber\\
    &\le L_\rmc D (H+1)^2, \nonumber\\
    |f_t (M_t) - f_t(M)| &\le L_{\rm f} D ,   \nonumber
\end{align}
where we used $ \sup_{M,M'\in \calM} \|M - M' \|_{\rm F} \le D $.
Hence, we obtain
\begin{align}
    \sum_{t=0}^{H+2} | \surr_t (M_{t-1-H:t}) - f_t(M) | \le L_\rmc (H+3)^3 D + L_\rmf (H+3)^2 D . \nonumber
\end{align}

Under the assumptions of Proposition~\ref{prop:log_regret}, for any $ \delta' \in (0,1] $, it holds that
\begin{align}
    &\bbP\Biggl(\sum_{t = 0}^{\ft-1} \surr_t (M_{t-1-H:t}) -  \sum_{t = 0}^{\ft-1} f_t (M) \nonumber\\
    &\quad \lesssim \alpha' D^2 H_+ + \frac{L_{\rm f}^2 H_+}{\alpha'} \log \left( \frac{1 + \log (\Ee + \alpha' D^2)}{\delta'} \right) + \frac{ (2L_{\rmc} + \dime^2 L_\rmf) H_+ + \beta'}{\alpha'} L_\rmf H_+ \log \ft \nonumber\\
    &\quad  + L_\rmc D(H_+ +1)^3 + L_\rmf D(H_+ +1)^2 \Biggr) \ge 1 - \delta', \nonumber
\end{align}
where $ H_+ := H + 2 $.
Therefore, by Lemmas~\ref{lem:lipshitz}, \ref{lem:strong_convex}, and \ref{lem:lipschitz_global} and Proposition~\ref{prop:hessian}, for $ \bar{L}_\coe := \frac{4G_c \coe^2 (\log \ft)^{5/2}}{\sqrt{\gamma}} $ and $ \bar{\beta}_\coe := \frac{6\kappa_B \kappa^3 \beta \dime^2 \coe (\log \ft)^{3/2}}{\gamma^2 (1-\gamma)} $, it holds that for any $ \coe > 0 $ and $ \ft \ge 3 $,
\begin{align}
    &\bbP\Biggl(\sum_{t = 0}^{\ft-1} \surr_t (M_{t-1-H:t}) -  \sum_{t = 0}^{\ft-1} f_t (M)  \lesssim \wtilde{\alpha}  D^2 H_+ + \frac{ \bar{L}_{C}^2 H_+}{\wtilde{\alpha}} \log \left( \frac{1 + \log (\Ee + \wtilde{\alpha} D^2)}{\delta'} \right) \nonumber\\
    &\qquad + \frac{ (2+\dime^2)  \bar{L}_{C}H_+ +  \bar{\beta}_\coe }{\wtilde{\alpha}} \bar{L}_{C} H_+ \log \ft  + 2  D \bar{L}_{C} (H_+ +1)^3 \Biggr) \nonumber\\
    &\ge 1 - \delta' - \frac{71\sigma_w \kappa_B^2 \kappa^6}{C\gamma^2 (1-\gamma)} - \frac{2\sigma_w \sqrt{2\dime_\rmx (1+\log \dime_\rmx)}}{\coe} . \label{eq:third_log}
\end{align}

By the proof of Theorem~\ref{thm:regret}, the first and second lines \eqref{eq:regret_decomposition1} and \eqref{eq:regret_decomposition2} of the decomposition of the regret with $ H = \lceil 2\gamma^{-1} \log \ft  \rceil $ can be bounded as follows:
\begin{align}
    &\bbP\Biggl( \min_{M\in \calM} \sum_{t=0}^{T-1} f_t (M) -  \sum_{t=0}^{T-1} c_t (x_t^{K^*},u_t^{K^*}) \nonumber\\
    &\qquad + \sum_{t=0}^{T-1} c_t (x_t^K (M_{0:t-1}), u_t^K(M_{0:t}))  - \sum_{t=0}^{T-1} \surr_t (M_{t-1-H:t}) \le 6 G_c C^2 (\log \ft)^{2} \Biggr) \nonumber\\
    &\ge 1 - \frac{130 \sigma_w \kappa_B^2 \kappa^8}{C \gamma^2 (1-\gamma)} , ~~ \forall \coe > 0 . \label{eq:first_second}
\end{align}
In summary, by \eqref{eq:regret_decomposition} with \eqref{eq:third_log} and \eqref{eq:first_second}, we arrive at
\begin{align}
    &\bbP\Biggl( \regret_\ft^{\rm diag} \lesssim \wtilde{\alpha} D^2 H_\ft + \frac{\bar{L}_{\coe}^2 H_\ft}{\wtilde{\alpha}} \log \left( \frac{1 + \log (\Ee + \wtilde{\alpha} D^2)}{\delta'} \right) \nonumber\\
    &\qquad + \frac{ (2+\dime^2)  \bar{L}_{\coe} H_\ft   + \bar{\beta}_\coe }{\wtilde{\alpha}} \bar{L}_\coe H_\ft \log \ft   + 2  D \bar{L}_\coe (H_\ft+1)^3 + 6 G_c C^2 (\log \ft)^{2}  \Biggr) \nonumber\\
    &\ge 1 - \delta' - \frac{201 \sigma_w \kappa_B^2 \kappa^8}{C \gamma^2 (1-\gamma)}  - \frac{2\sigma_w\sqrt{2\dime_\rmx (1+ \log \dime_\rmx)  }}{\coe},  ~~ \forall \coe  > 0, \ \ft \ge 3 . \nonumber
\end{align}
Lastly, by letting $ \delta' = \delta / 2 $, $ \frac{201 \sigma_w \kappa_B^2 \kappa^8}{C \gamma^2 (1-\gamma)}  + \frac{2\sigma_w\sqrt{2\dime_\rmx (1+ \log \dime_\rmx)  }}{\coe} = \delta/2 $, we obtain the desired result.







\end{appendices}

\bibliography{acml2025_online_control}



\end{document}